\DeclareFontFamily{U}{mathx}{\hyphenchar\font45}
\DeclareFontShape{U}{mathx}{m}{n}{
	<5> <6> <7> <8> <9> <10>
	<10.95> <12> <14.4> <17.28> <20.74> <24.88>
	mathx10
}{}
\DeclareSymbolFont{mathx}{U}{mathx}{m}{n}
\DeclareMathAccent{\widecheck}{0}{mathx}{"71}
\DeclareMathOperator*{\E}{\mathbb{E}}
\DeclareMathOperator*{\Var}{\mathrm{Var}}
\newcommand*{\norm}[1]{\left\|#1\right\|}
\newcommand*{\supp}{\mbox{supp}}
\newcommand*{\dist}{\mbox{dist}}
\newcommand*{\down}{\!\downarrow\!}
\newcommand{\R}{\mathbb{R}}
\newcommand{\F}{\mathbb{F}}
\newcommand*\cupdot{\mathbin{\mathaccent\cdot\cup}}
\numberwithin{equation}{section}
\newtheorem{theorem}{Theorem}[section]
\newtheorem*{theorem*}{Theorem}
\newtheorem{lemma}[theorem]{Lemma}
\newtheorem{claim}[theorem]{Claim}
\newtheorem{proposition}[theorem]{Proposition}
\newtheorem{corollary}[theorem]{Corollary}
\theoremstyle{definition}
\newtheorem{definition}[theorem]{Definition}
\newtheorem*{remark*}{Remark}
\begin{document}

\title{Unique-Neighbor-Like Expansion and\\Group-Independent Cosystolic Expansion\footnote{Originally appeared in ISAAC 2021~\cite{KM21}.}}
\author{
	Tali Kaufman\thanks{Bar-Ilan University, Israel. Email: \texttt{kaufmant@mit.edu}. Supported by ERC and BSF.} \and
	David Mass\thanks{Bar-Ilan University, Israel. Email: \texttt{dudimass@gmail.com}. Supported by the Adams Fellowship Program of the Israel Academy of Sciences and Humanities.}}
\maketitle

\begin{abstract}
	In recent years, high dimensional expanders have been found to have a variety of applications in theoretical computer science, such as efficient CSPs approximations, improved sampling and list-decoding algorithms, and more. Within that, an important high dimensional expansion notion is \emph{cosystolic expansion}, which has found applications in the construction of efficiently decodable quantum codes and in proving lower bounds for CSPs.
	
	Cosystolic expansion is considered with systems of equations over a group where the variables and equations correspond to faces of the complex. Previous works that studied cosystolic expansion were tailored to the specific group $\F_2$. In particular, Kaufman, Kazhdan and Lubotzky (FOCS 2014), and Evra and Kaufman (STOC 2016) in their breakthrough works, who solved a famous open question of Gromov, have studied a notion which we term ``parity'' expansion for small sets. They showed that small sets of $k$-faces have proportionally many $(k+1)$-faces that contain \emph{an odd number} of $k$-faces from the set. Parity expansion for small sets could be used to imply cosystolic expansion only over $\F_2$.
	
	In this work we introduce a stronger \emph{unique-neighbor-like} expansion for small sets. We show that small sets of $k$-faces have proportionally many $(k+1)$-faces that contain \emph{exactly one} $k$-face from the set. This notion is fundamentally stronger than parity expansion and cannot be implied by previous works.
	
	We then show, utilizing the new unique-neighbor-like expansion notion introduced in this work, that cosystolic expansion can be made \emph{group-independent}, i.e., unique-neighbor-like expansion for small sets implies cosystolic expansion \emph{over any group}.
\end{abstract}

\section{Introduction}
\paragraph*{High dimensional expanders.}
High dimensional expanders are a high dimensional analog of expander graphs. A $d$-dimensional simplicial complex is a hypergraph with hyperedges of size at most $d+1$ which is downwards closed, i.e., if $\sigma$ is a hyperedge and $\tau \subset \sigma$ then $\tau$ is also a hyperedge. A hyperedge of size $k+1$ is called a $k$-face of the complex.

In recent years, high dimensional expanders have found a variety of applications in theoretical computer science, such as efficient CSPs approximations~\cite{AJT19}, improved sampling algorithms~\cite{ALGV19, ALG20, AL20, CLV20, CLV20.1, CGSV21, FGYZ21}, improved list-decoding algorithms~\cite{DHKNT19,AJQST20}, sparse agreement tests~\cite{DK17, DD19, KM20} and more.

%\paragraph*{Cosystolic expansion.}
An especially important high dimensional expansion notion is \emph{cosystolic expansion}. It has been shown to be a key ingredient in the construction of efficiently decodable quantum LDPC codes with a large distance~\cite{EKZ20}, and recently it has been used in the construction of explicit 3XOR instances that are hard for the Sum-of-Squares hierarchy~\cite{DFHT20}.

\paragraph*{Cosystolic expansion as an expanding system of equations.}
A simplicial complex can be viewed as forming \emph{a system of equations} over some group $G$. Consider a $d$-dimensional simplicial complex and some dimension $k < d$. The variables of the system are the $k$-faces of the complex, and the equations are defined by the $(k+1)$-faces; each $(k+1)$-face $\sigma$ corresponds to the equation $\sum_{i=0}^{k+1}\tau_i = 0$, where $\tau_i$ are the $k$-faces contained in $\sigma$ and the sum is performed over the group (e.g., addition modulo $2$ when the group is $\F_2$).

For any assignment of values to the variables which does not satisfy all the equations, there are two measures of interest. One measure is the fraction of unsatisfied equations (out of all the equations), and the second measure is the fraction of variables (out of all the variables) that their value needs to be changed in order to satisfy all the equations. The second measure is also called the \emph{distance} of the assignment from a satisfying assignment.

A system of equations is said to be \emph{expanding} if for any assignment of values to the variables it holds that either all the equations are satisfied or the fraction of unsatisfied equations is proportional to the distance of the assignment from a satisfying assignment. A $d$-dimensional simplicial complex is said to be a \emph{cosystolic expander} over a group $G$ if for all $k < d$, the system of equations formed by its $k$-faces is expanding.

As a simple example, consider a $1$-dimensional simplicial complex (i.e., a graph) and the field $\F_2$. The variables of the system are the vertices of the graph, and the equations are $v_i + v_j = 0 \; (\mbox{mod }2)$ for each edge $\{v_i,v_j\}$. In this case, if the given graph is an expander graph (i.e., each subset of vertices has proportionally many outgoing edges), then the system of equations is expanding. This is true since each assignment of values over $\F_2$ to the vertices can be identified with a subset of vertices, and the unsatisfied equations are exactly the outgoing edges of this set.

\paragraph*{Parity expansion for small sets.}
Kaufman, Kazhdan and Lubotzky~\cite{KKL14}, and Evra and Kaufman~\cite{EK16} in their breakthrough works proved the existence of cosystolic expanders of every dimension, solving a famous open question of Gromov~\cite{Gro10}. In their works they have studied a notion which we term ``parity'' expansion for small sets: They have shown that certain high dimensional expansion properties imply that small sets of $k$-faces have proportionally many $(k+1)$-faces that contain \emph{an odd number} of $k$-faces from the given set. Then they utilized this property in order to imply cosystolic expansion over the group $\F_2$.

\paragraph{$\delta_1$-expansion for small sets.}
In this work we study a fundamentally stronger ``unique-neighbor-like'' expansion in simplicial complexes, which we call \emph{$\delta_1$-expansion}. Let $X$ be a $d$-dimensional simplicial complex and $A$ a set of $k$-faces in $X$. We define $\delta_1(A)$ to be the set of $(k+1)$-faces which contain \emph{exactly one} $k$-face from $A$. We say that $A$ is \emph{$\delta_1$-expanding} if the fraction of $(k+1)$-faces in $\delta_1(A)$ (out of all the $(k+1)$-faces) is proportional to the fraction of $k$-faces in $A$ (out of all the $k$-faces). We show that certain high dimensional expansion properties imply that small sets are $\delta_1$-expanding.

\paragraph{$\delta_1$-expansion and group-independent cosystolic expansion.}
The strength of our $\delta_1$-expansion can be demonstrated by its relation to cosystolic expansion. As explained above, cosystolic expansion is considered with a system of equations over a group. Hence, when proving cosystolic expansion, one has to take the group into account. For instance, previous works could obtain cosystolic expansion only over $\F_2$, because only over $\F_2$ there is an equivalence between an unsatisfied equation and an equation that contains an odd number of non-zero variables.

The $\delta_1$-expansion property that we study in this work has the interesting property that it can make cosystolic expansion to be \emph{group-independent}, i.e., it implies cosystolic expansion over any group. The key point is that an equation with exactly one non-zero variable must be unsatisfied \emph{regardless of the group}. Thus, even though cosystolic expansion is defined over a group, $\delta_1$-expansion implies it over any group.

We further expect that this stronger expansion notion may have implications to quantum codes and CSPs lower bounds.

\paragraph{On the novelty of our work.}
We would like to provide a general outline of the differences between our work and previous works~\cite{KKL14,EK16}.%\footnote{Informal note: We are aware that the following explanation might seem not entirely clear at this stage of the paper for an unfamiliar reader. Nevertheless, since we have been repeatedly asked for the differences between our work and previous works, it is important for us to point that out as early as possible. We hope that the main ideas are still clear, even if not all the details are.}

One fundamental difference is the object we analyze. The major part of previous works is dedicated to the analysis of the expansion of \emph{arbitrary small sets}. In our work, the main analysis is focused on the expansion of \emph{``structured'' small sets} (given by the coboundary of a small set). We use a similar machinery as in previous works, but we leverage the extra structure of the small sets in order to obtain the stronger $\delta_1$-expansion.

We note that it is not trivial how to utilize this extra structure of small sets in order to obtain $\delta_1$-expansion. One cannot just plug it in the proof of previous works and obtain $\delta_1$-expansion. It requires a completely different proof strategy, which we describe next.

Briefly, the proof strategy of~\cite{KKL14} and~\cite{EK16} is as follows. Given a small set $A$ of $k$-faces, they define a notion of ``fat'' faces, where an $\ell$-face, $\ell \le k$, is considered fat if a large fraction of the $k$-faces that contain it belongs to $A$. It is trivial that in dimension $\ell = k$, the set $A$ sits only on fat $k$-faces, since every $k$-face that belongs to $A$ is fat (because the only $k$-face that contains a $k$-face is itself). It is also trivial that in dimension $\ell = -1$, $A$ sits only on thin $(-1)$-faces, since the only $(-1)$-face is the empty set which is contained in all of the $k$-faces of the complex, and $A$ is a small set of $k$-faces. Therefore, there must exist a dimension $\ell \le k$ for which a transition from mostly fat faces to mostly thin faces occurs, i.e., in dimension $\ell$, $A$ sits mostly on fat $\ell$-faces, and in dimension $\ell-1$, $A$ sits mostly on thin $(\ell-1)$-faces. Their argument is then that the fat $\ell$-faces contribute to the parity expansion of $A$, whereas the thin $(\ell-1)$-faces account for a negligible error term.

The proof strategy in our work is essentially the opposite. A fat face, which contributes to the parity expansion in the works of~\cite{KKL14} and~\cite{EK16}, does not have a large $\delta_1$, and hence it is impossible to obtain $\delta_1$-expansion from the fat faces. Our main idea is to gain the $\delta_1$-expansion \emph{from the thin faces}. We observe that if a set sits mostly on thin faces of \emph{one dimension below} then it has a large $\delta_1$. Thus, it is crucial for us to show that a small set $A$ of $k$-faces actually sits mostly on thin faces of one dimension below. Using the terminology of previous paragraph, we have to show that the transition from mostly fat faces to mostly thin faces happens in dimension $k$ itself.

This is where the ``structure'' comes into play. By considering small sets that are obtained as a coboundary of another set, we know that their own coboundary is $0$. Without getting too much into the details, it allows us to bound the fraction of fat faces of dimension $\ell$ by the fraction of fat faces of dimension $\ell-1$, for every $0 \le \ell \le k-1$. Thus, since there are no fat faces in dimension $-1$ (because $A$ is small), we conclude that there are no fat faces at any dimension! Therefore, $A$ sits mostly on thin $(k-1)$-faces and hence has a large $\delta_1$.

\subsection{Some basic definitions}
\paragraph*{Coboundary and cosystolic expansion.}
%As mentioned, expansion of small sets has been used by~\cite{KKL14} and~\cite{EK16} in order to obtain cosystolic expansion. This notion of cosystolic expansion has been developed from a similar but stronger notion which is called coboundary expansion. Both coboundary and cosystolic expansions are a generalization of the Cheeger constant of graphs to higher dimensions. Unlike in graphs, where spectral expansion and the Cheeger constant are equivalent (by the Cheeger inequality), in high dimensional complexes this equivalence does no longer hold~\cite{GW12,SKM14}.

For the sake of introduction we formally define coboundary and cosystolic expansions only over the field $\F_2$. The general definitions will be given in section 2.

Recall that a $d$-dimensional simplicial complex $X$ is a downwards closed $(d+1)$-hypergraph. A $k$-face of $X$ is a hyperedge of size $k+1$, and the set of $k$-faces of $X$ is denoted by $X(k)$. An assignment of values from $\F_2$ to the $k$-faces, $k\le d$, is called a $k$-cochain, and the space of all $k$-cochains over $\F_2$ is denoted by $C^k(X;\F_2)$.

Any assignment to the $k$-faces $f \in C^k(X;\F_2)$ induces an assignment to the $(k+1)$-faces by the coboundary operator $\delta$. For any $(k+1)$-face $\sigma = \{v_0,\dotsc,v_{k+1}\}$, $\delta(f)(\sigma)$ is defined by
$$\delta(f)(\sigma) = \sum_{i=0}^{k+1}f(\sigma \setminus \{v_i\})\quad (\mbox{mod } 2).$$

We can view the complex as inducing a system of equations, where the equations are determined by the coboundary operator; i.e., each $(k+1)$-face $\sigma \in X(k+1)$ defines the equation $\delta(f)(\sigma) = 0$. The assignments that satisfy all the equations are called the \emph{$k$-cocycles} and denoted by
$$Z^k(X;\F_2) = \{f \in C^k(X;\F_2) \;|\; \delta(f) = \mathbf{0} \}.$$

One can check that $\delta(\delta(f)) = \mathbf{0}$ always holds; i.e., every assignment that is obtained as a coboundary of one dimension below satisfies all the equations. These assignments, that are the coboundary of an assignment of one dimension below, are called the \emph{$k$-coboundaries} and denoted by
$$B^k(X;\F_2) = \{\delta(f) \;|\; f \in C^{k-1}(X;\F_2) \}.$$
Note that $B^k(X;\F_2) \subseteq Z^k(X;\F_2) \subseteq C^k(X;\F_2)$.

For a $d$-dimensional simplicial complex $X$, let $P_d:X(d) \to \R_{\ge 0}$ be a probability distribution over the $d$-faces of the complex. For simplicity, we will assume in this work that $P_d$ is the uniform distribution. This probability distribution over the $d$-faces induces a probability distribution $P_k$ for every dimension $k < d$ by selecting a $d$-face $\sigma_d$ according to $P_d$ and then selecting a $k$-face $\sigma_k \subset \sigma_d$ uniformly at random.

The weight of any $k$-cochain $f \in C^k(X;\F_2)$ is defined by
$$\norm{f} = \Pr_{\sigma \sim P_k}[f(\sigma) \ne 0],$$
i.e., the (weighted) fraction of non-zero elements in $f$. The distance between two $k$-cochains $f,g \in C^k(X;\F_2)$ is defined as $\dist(f,g) = \norm{f - g}$.

We can now introduce the notions of coboundary and cosystolic expansion. As mentioned, a complex is said to be a cosystolic expander if for any assignment that does not satisfy all the equations it holds that the fraction of unsatisfied equations is proportional to the distance of the assignment from a satisfying assignment. Formally:

\begin{definition}[Cosystolic expansion]
	A $d$-dimensional simplicial complex $X$ is said to be an $(\varepsilon,\mu)$-cosystolic expander over $\F_2$, if for every $k < d$:
	\begin{enumerate}
		\item For any $f \in C^k(X;\F_2)\setminus Z^k(X;\F_2)$ it holds that
		$$\frac{\norm{\delta(f)}}{\dist(f,Z^k(X;\F_2))} \ge \varepsilon,$$
		where $\dist(f,Z^k(X;\F_2)) = \min\{\dist(f,g) \;|\; g \in  Z^k(X;\F_2) \}$.
		\item For any $f \in Z^k(X;\F_2)\setminus B^k(X;\F_2)$ it holds that $\norm{f} \ge \mu$.
	\end{enumerate}
\end{definition}

The second condition in the definition ensures that the complex cannot be split into many small pieces, i.e., any satisfying assignment that is not obtained as a coboundary must be large.

Coboundary expansion has been introduced by Linial and Meshulam~\cite{LM06} and independently by Gromov~\cite{Gro10}. It is a similar but stronger notion than cosystolic expansion. The main difference is that the only satisfying assignments in a coboundary expander are coboundaries (unlike cosystolic expansion, where there could be satisfying assignments which are not coboundaries as long as they are large). Formally:

\begin{definition}[Coboundary expansion]
	A $d$-dimensional simplicial complex $X$ is said to be an $\varepsilon$-coboundary expander over $\F_2$ if for every $k < d$ and $f \in C^k(X;\F_2)\setminus B^k(X;\F_2)$ it holds that
	$$\frac{\norm{\delta(f)}}{\dist(f,B^k(X;\F_2))} \ge \varepsilon,$$
	where $\dist(f,B^k(X;\F_2)) = \min\{\dist(f,g) \;|\; g \in  B^k(X;\F_2) \}$.
\end{definition}

\paragraph*{Local spectral expansion.}
Another notion of high dimensional expansion, called \emph{local spectral expansion} is concerned with the spectral properties of local parts of the complex.

For every face $\sigma \in X$, its local view, also called its \emph{link}, is a $(d-|\sigma|-1)$-dimensional simplicial complex defined by $X_\sigma = \{\tau \setminus \sigma \;|\; \sigma \subseteq \tau \in X\}$. The probability distribution over the top faces of $X_\sigma$ is induced from the probability distribution of $X$, where for any top face $\tau \in X_\sigma(d-|\sigma|-1)$, its probability is the probability to choose $\sigma \cup \tau$ in $X$ conditioned on choosing $\sigma$. Since we assume in this work that the probability distribution over the top faces of $X$ is the uniform distribution, it follows that the probability distribution over the top faces of $X_\sigma$ is the uniform distribution.

We can now introduce the notion of a local spectral expander.
\begin{definition}[Local spectral expansion]
	A $d$-dimensional simplicial complex $X$ is called a \emph{$\lambda$-local spectral expander} if for every $k \le d-2$ and $\sigma \in X(k)$, the underlying graph\footnote{The graph whose vertices are $X_\sigma(0)$ and its edges are $X_\sigma(1)$.} of $X_\sigma$ is a $\lambda$-spectral expander.
\end{definition}

\subsection{Summary of main results}
Our main result is a ``unique-neighbor-like'' expansion for non-local small sets, which we call $\delta_1$-expansion. We start with the definition of the $\delta_1$ of a set.

\begin{definition}[$\delta_1$]
	Let $X$ be a $d$-dimensional simplicial complex. For any set of $k$-faces $A \subseteq X(k)$, $0 \le k \le d-1$, we define $\delta_1(A) \subseteq X(k+1)$ to be the set of $(k+1)$-faces that contain exactly one $k$-face from $A$.% The $\delta_1$-expansion of $A$ is defined as $\norm{\delta_1(A)}$.
\end{definition}

Towards proving that small sets have a large $\delta_1$ we introduce an intermediate notion of \emph{non-local sets}. Roughly speaking, we say that a set of $k$-faces is non-local if its ``local view'' in almost all of the $(k-1)$-faces resemble the global picture.

In order to define this notion of non-local sets, we first define the \emph{localization} of a set to a link of a face. For any set $A\subseteq X(k)$ and an $\ell$-face $\sigma \in X(\ell)$, $\ell < k$, the localization of $A$ to the link of $\sigma$ is a set of $(k-\ell-1)$-faces in the link of $\sigma$ defined by $$A_\sigma = \{\tau \in X_\sigma(k-\ell-1) \;|\; \sigma \cup \tau \in A \}.$$

We also add a useful definition of a mutual weight of two sets. For $\ell < k$ and two sets $A \subseteq X(k), B \subseteq X(\ell)$ we define their mutual weight by
$$\norm{(A,B)} = \Pr_{\sigma_k \sim P_k, \sigma_\ell \subset \sigma_k}[\sigma_k \in A \wedge \sigma_\ell \in B],$$
where $\sigma_k$ is chosen according to the distribution $P_k$ and $\sigma_\ell$ is an $\ell$-face chosen uniformly from $\sigma_k$ (i.e., $\sigma_\ell$ is chosen according to $P_\ell$ conditioned on $\sigma_k$ being chosen). This notion captures how much the sets are related. For instance, if $\norm{(A,B)} \approx \norm{A}$, it means that $A$ contains mostly faces from $B$.

We can now define non-local sets.

\begin{definition}[Non-local sets]
	Let $X$ be a $d$-dimensional simplicial complex and $0 < \eta,\varepsilon<1$. For any set of $k$-faces $A \subseteq X(k)$, $0 \le k \le d-1$, we define the following set of $(k-1)$-faces:
	$$S_{k-1} = \{\sigma \in X(k-1) \;|\; \norm{A_\sigma} \le \eta \}.$$
	We say that $A$ is $(\eta,\varepsilon)$-non-local if $\norm{(A,S_{k-1})} \ge (1-\varepsilon)\norm{A}$.
\end{definition}

As a simple example of a ``local'' set, consider a set of edges $A$ composed of all the edges touching a single vertex. In this case, $\norm{(A, S_0)} = (1/2)\norm{A}$. It can be easily checked that in this example, all triangles contain either $0$ or $2$ edges. As can be seen from this example, local sets are not necessarily $\delta_1$-expanding.

The first theorem we show is that non-local sets are $\delta_1$-expanding.

%\begin{theorem}[Non-local sets are $\delta_1$-expanding]\label{thm-1-formal}
%	Let $X$ be a $d$-dimensional $\lambda$-local spectral expander and $0<\eta,\varepsilon<1$. For any $A \subseteq X(k)$, $1 \le k \le d-1$, if $A$ is $(\eta,\varepsilon)$-non-local then $\norm{\delta_1(A)} \ge \Omega(\norm{A})$, where the constant depends only on $\lambda,\eta$ and $\varepsilon$.
%\end{theorem}
\begin{theorem}[Non-local sets are $\delta_1$-expanding - informal]\label{thm-1-formal}
	Let $X$ be a $d$-dimensional local spectral expander. For any $A \subseteq X(k)$, $1 \le k \le d-1$, if $A$ is non-local then $\norm{\delta_1(A)} \ge \Omega(\norm{A})$.
\end{theorem}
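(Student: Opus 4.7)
The plan is to prove $\|\delta_1(A)\| \ge \Omega(\|A\|)$ by combining a first-moment identity with a second-moment bound: the expected number of $k$-subfaces of a random $(k+1)$-face that lie in $A$ is exactly $(k+2)\|A\|$, and I aim to show that only an $O(\|A\|)$ contribution comes from $(k+1)$-faces containing two or more $A$-subfaces.

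First I would record $\E_{\rho \sim P_{k+1}}[|\{\tau \subset \rho : \tau \in A\}|] = (k+2)\|A\|$, which follows from the fact that sampling a uniform $k$-subface of a $P_{k+1}$-random $(k+1)$-face yields the distribution $P_k$ (and so $\tau \in A$ with probability $\|A\|$). Using the inequality $j\cdot\mathbf{1}_{j \ge 2} \le 2\binom{j}{2}$, this yields
\begin{equation*}
\|\delta_1(A)\| \;\ge\; (k+2)\|A\| \;-\; 2\,\E_{\rho}\mleft[\binom{|\rho \cap A|}{2}\mright].
\end{equation*}
The key observation for the second term is that any two distinct $k$-subfaces of a $(k+1)$-face $\rho$ must share a common $(k-1)$-face $\sigma$, and hence correspond to an edge $\{u,v\} \in X_\sigma(1)$ with $u,v \in A_\sigma$. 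Equivalently, sampling $\rho \sim P_{k+1}$ together with a uniform $(k-1)$-subface $\sigma \subset \rho$ is the same as sampling $\sigma \sim P_{k-1}$ and then a uniform edge in the link $X_\sigma$, so
\begin{equation*}
\E_{\rho}\mleft[\binom{|\rho \cap A|}{2}\mright] \;=\; \binom{k+2}{2}\,\E_{\sigma \sim P_{k-1}}\mleft[\Pr_{\{u,v\}\sim P_1^{X_\sigma}}[u,v \in A_\sigma]\mright].
\end{equation*}

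Next I would estimate the inner probability using local spectral expansion. Since $X$ is a $\lambda$-local spectral expander and $k \le d-1$, the underlying graph of each link $X_\sigma$ (for $\sigma \in X(k-1)$) is a $\lambda$-spectral expander, so the expander mixing lemma yields $\Pr_{\{u,v\}\sim P_1^{X_\sigma}}[u,v \in A_\sigma] \le \|A_\sigma\|^2 + \lambda\|A_\sigma\|$. I would then split the expectation over $\sigma$ according to whether $\sigma \in S_{k-1}$: on $S_{k-1}$ one has $\|A_\sigma\| \le \eta$, so $\|A_\sigma\|^2 \le \eta\|A_\sigma\|$ and the contribution is at most $\eta\,\|(A,S_{k-1})\| \le \eta\|A\|$; off $S_{k-1}$ one uses the trivial bound $\|A_\sigma\|^2 \le \|A_\sigma\|$ together with non-locality $\|(A,S_{k-1}^{\,c})\| \le \varepsilon\|A\|$, contributing at most $\varepsilon\|A\|$; and the $\lambda$-term contributes $\lambda\,\E_\sigma\|A_\sigma\| = \lambda\|A\|$ overall, since the identity $\E_{\sigma \sim P_{k-1}}\|A_\sigma\| = \|A\|$ is just the same double-counting used above.

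Assembling, $\E_{\rho}[\binom{|\rho \cap A|}{2}] \le \binom{k+2}{2}(\eta + \varepsilon + \lambda)\|A\|$, which plugs into the first display to give
\begin{equation*}
\|\delta_1(A)\| \;\ge\; (k+2)\|A\|\bigl[\,1 - (k+1)(\eta + \varepsilon + \lambda)\,\bigr],
\end{equation*}
which is $\Omega(\|A\|)$ once $\eta,\varepsilon,\lambda$ are sufficiently small (an implicit hypothesis of the ``informal'' statement). The main delicate point — and the one where non-locality is truly used — is the splitting step: without the control $\|(A,S_{k-1}^{\,c})\| \le \varepsilon\|A\|$, the link $X_\sigma$ could be ``fat'' ($\|A_\sigma\|$ close to $1$), in which case neither the EML nor the quadratic savings survive; non-locality guarantees that the mass of $A$ sitting over such fat links is a small fraction of $\|A\|$, so only the thin-link regime — where edges inside $A_\sigma$ are genuinely rare — contributes to the main estimate.
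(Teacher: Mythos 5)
Your proof is correct, and it takes a genuinely different route from the paper's. The paper lower-bounds $\norm{\delta_1(A)}$ via a set-inclusion decomposition: it defines $\Gamma(A)$ (the $(k+1)$-faces containing some $A$-face), $\Gamma(A,\overline{S_{k-1}})$ (those containing an $A$-face with a ``fat'' $(k-1)$-subface), and $\Upsilon$ (those containing two $A$-faces meeting in an $S_{k-1}$-face), shows $\delta_1(A) \supseteq \Gamma(A)\setminus(\Gamma(A,\overline{S_{k-1}})\cup\Upsilon)$, and bounds each piece separately, the main term being the coarse lower bound $\norm{\Gamma(A)}\ge\norm{A}$. You instead exploit the exact first-moment identity $\E_\rho\bigl[\lvert\rho\cap A\rvert\bigr]=(k+2)\norm{A}$ and subtract a second-moment term bounded by counting intersecting pairs over their $(k-1)$-dimensional meets. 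The two arguments share the core ingredients — the mixing lemma applied in links of $(k-1)$-faces and the non-locality assumption to kill the mass sitting above fat $(k-1)$-faces — but your packaging is cleaner and also \emph{quantitatively stronger}: since you keep the exact multiplicity $(k+2)$ rather than the one-sided inclusion $\norm{\Gamma(A)}\ge\norm{A}$, you end up with $\norm{\delta_1(A)}\ge(k+2)\bigl(1-(k+1)(\eta+\varepsilon+\lambda)\bigr)\norm{A}$, which improves the paper's $\bigl(1-\binom{k+2}{2}(2\varepsilon+\eta+\lambda)\bigr)\norm{A}$ by a factor of about $k+2$. Two small points worth flagging: the inequality $j\cdot\mathbf{1}_{j\ge 2}\le 2\binom{j}{2}$ is valid (it reduces to $1\le j-1$ for $j\ge 2$), so the first step is sound; and you do need $k\ge 1$ so that the $(k-1)$-face links exist and the local spectral expansion hypothesis (which requires $k-1\le d-2$) applies, which matches the stated range $1\le k\le d-1$.
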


We consider now a bounded degree local spectral expander whose links are coboundary expanders, where a complex is said to be $q$-bounded degree if every vertex is contained in at most $q$ top faces. We show that every set of unsatisfied equations can be treated as if it is non-local. Specifically, we consider sets of the form $\supp(\delta(f))$ for a $k$-cochain $f \in C^k(X;G)$, over some group $G$. We show a procedure that is given a $k$-cochain $f$ such that $\norm{\delta(f)}$ is small, and returns a $k$-cochain $f'$ which is close to $f$ such that $\delta(f')$ is non-local.

%\begin{theorem}[Correction procedure]
%	For any $d,q \in \mathbb{N}$, an abelian group $G$ and $0<\beta,\varepsilon<1$ there exist constants $0 < \lambda,\eta \le \varepsilon$ such that the following holds: Let $X$ be a $d$-dimensional $q$-bounded degree $\lambda$-local spectral expander with $\beta$-coboundary expanding links over $G$. For any $f \in C^k(X;G)$, $1 \le k \le d-2$, if $\norm{\delta(f)} \le \eta^{2^{k+2}-1}$ then there exists $f' \in C^k(X;G)$ such that $\dist(f, f') \le q\binom{d}{k+1}\norm{\delta(f)}$, $\norm{\delta(f')} \le \norm{\delta(f)}$, and $\delta(f')$ is $(\eta,\varepsilon)$-non-local.
%\end{theorem}
\begin{theorem}[Correction algorithm - informal]\label{thm:correction-algorithm}
	Let $X$ be a $d$-dimensional bounded degree local spectral expander with coboundary expanding links over a group $G$. For any $f \in C^k(X;G)$, $1 \le k \le d-2$, if $\norm{\delta(f)}$ is sufficiently small, then $f$ is close to a $k$-cochain $f' \in C^k(X;G)$ such that $\delta(f')$ is small and non-local. Furthermore, there is an efficient algorithm that is given $f$ and finds $f'$.
\end{theorem}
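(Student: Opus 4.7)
My plan is to construct $f'$ via a local correction procedure driven by the coboundary expansion of the links. For each $(k-1)$-face $\sigma$, consider the localization $f_\sigma \in C^{0}(X_\sigma; G)$. A straightforward double-counting gives
$\sum_{\sigma \in X(k-1)} \norm{\delta(f_\sigma)}_{X_\sigma} = O(\norm{\delta(f)})$, so for any threshold $\eta'$, all but an $O(\norm{\delta(f)}/\eta')$ fraction of $(k-1)$-faces $\sigma$ satisfy $\norm{\delta(f_\sigma)}_{X_\sigma} \le \eta'$. Call such $\sigma$ \emph{correctable}. Since $X_\sigma$ is a coboundary expander over $G$, each correctable $f_\sigma$ is within distance $O(\eta')$ of a $0$-coboundary in $X_\sigma$, which is simply a constant function $a_\sigma \in G$; this element $a_\sigma$ can be found efficiently by majority voting in the link.

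Next, I assemble these constants into a $(k-1)$-cochain $h \in C^{k-1}(X;G)$ by setting $h(\sigma) = a_\sigma$ on correctable $\sigma$ and $0$ elsewhere, and define $f' = f - \delta(h)$. Then $\delta(f') = \delta(f)$, so $\norm{\delta(f')}$ remains small. The distance $\norm{f - f'} = \norm{\delta(h)}$ can be bounded by $O(q \cdot \norm{h})$ using the $q$-bounded-degree assumption, and in turn $\norm{h}$ is controlled by the rare ``bad'' links plus the correctable $\sigma$ where $a_\sigma \ne 0$, each of which contributes a matching mass to $\norm{f}$ localized at $\sigma$; this chain of inequalities yields $f'$ close to $f$ with closeness tunable by $\eta'$.

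For the non-locality of $\delta(f')$, I argue that most correctable $\sigma$ have $\norm{f'_\sigma}_{X_\sigma}$ small. Indeed, by the coboundary formula, $(f - \delta h)_\sigma(v) = f_\sigma(v) - a_\sigma \pm (\text{cross terms})$, where the cross terms involve $h$ evaluated on the $(k-1)$-subfaces of $\sigma \cup \{v\}$ distinct from $\sigma$; by bounded degree and the sparsity of $\supp(h)$, these cross terms perturb only a vanishing fraction of vertices per link, so $\norm{f'_\sigma}_{X_\sigma} = O(\eta')$ for most $\sigma$. Translating back: viewing $A = \supp(\delta f') \subseteq X(k+1)$, for a random bad $(k+1)$-face $\pi$ and random $k$-subface $\rho \subset \pi$, the localization $(\delta f')_\rho \in C^0(X_\rho; G)$ is expressible via the coboundary formula through $f'$ on $k$-faces containing the $(k-1)$-subfaces of $\rho$; since typically these localizations are small, $\norm{A_\rho}_{X_\rho}$ is small, i.e., $\rho \in S_k$. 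Hence $\norm{(A, S_k)} \ge (1-\varepsilon)\norm{A}$, the $(\eta,\varepsilon)$-non-locality required.

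The main obstacle is controlling the cross-term interference introduced by subtracting $\delta(h)$: a correction intended for one link can ``leak'' into neighboring links through $h$'s values on shared $(k-1)$-faces, and we need to guarantee that the residual $f'_\sigma$ is small despite this leakage. The bounded-degree hypothesis caps the number of $k$-faces through any given $(k-1)$-face, which prevents these leakages from aggregating beyond the correctable threshold. The delicate part is the parameter balancing between $\eta'$ (for correctability), $q$ (bounded degree), the coboundary expansion constant of the links, and the final non-locality parameters $\eta,\varepsilon$; a clean execution may require iterating the correction step until the process stabilizes, where at the fixed point almost no link can be further corrected and the desired non-locality of $\delta(f')$ is automatic.
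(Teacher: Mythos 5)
The central step of your proposal is fatally flawed: you set $f' = f - \delta(h)$ for a $(k-1)$-cochain $h$, and then try to argue that $\delta(f')$ is non-local. But $\delta(f') = \delta(f) - \delta(\delta(h)) = \delta(f)$ identically, because $\delta\circ\delta = 0$. Subtracting a coboundary from $f$ can never change $\delta(f)$, so the set $A = \supp(\delta(f'))$ is the very same set as $\supp(\delta(f))$, with the very same localizations $A_\rho$ and the very same non-locality status. No choice of $h$, no bound on cross terms, and no number of iterations of this kind of step can help. Your third paragraph computes $\norm{f'_\sigma}$ — a statement about $f'$ itself, not about $\delta(f')$ — and the ``translating back'' step at the end is exactly where the argument would have to produce information about $\delta(f')$ that simply is not there, since that object never moved.

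What the paper's correction (Lemma~\ref{one-step-of-correction}, applied iteratively in Theorem~\ref{balancing-algorithm}) does is subtract from $f$ a $k$-cochain $g$ that is \emph{not} a coboundary, so that $\delta(f-g)=\delta(f)-\delta(g)$ genuinely changes. Concretely: if $\delta(f)$ is not locally minimal, there is a vertex $v$ where $(\delta f)_v$ is not minimal in the link $X_v$; coboundary expansion of $X_v$ furnishes a $(k-1)$-cochain $h \in C^{k-1}(X_v;G)$ with $\norm{(\delta f)_v - \delta(h)} < \norm{(\delta f)_v}$, and this $h$ is lifted to a $k$-cochain $g\in C^k(X;G)$ supported on $k$-faces through $v$, with $\norm{g} \le (k+1)\norm{v}$ and $\norm{\delta(f)-\delta(g)} < \norm{\delta(f)}$. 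One then replaces $f$ by $f-g$ and repeats. The process terminates because $\norm{\delta(f)}$ strictly decreases in a finite set of values; at the fixed point $\delta(f')$ is locally minimal, and Proposition~\ref{small-locally-minimal-cocycles-are-balanced} then gives non-locality. Bounded degree controls $\dist(f,f')$, since each step touches only one vertex and the number of steps is at most proportional to $\norm{\delta(f)}$ times the number of top faces. Your intuitions about double-counting over links, majority voting, and iterating to a fixed point are reasonable, but they must be aimed at reducing $\norm{\delta(f)}$ via a genuine $k$-cochain $g$ — not a coboundary $\delta(h)$, which moves $f$ within a fixed fiber of $\delta$ and accomplishes nothing toward the stated goal.
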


We conclude by a similar reduction as in~\cite{KKL14} in order to obtain cosystolic expansion over any group. \cite{KKL14} and~\cite{EK16} could obtain cosystolic expansion only over $\F_2$ because their expansion for small sets only guaranteed that they touch many faces of one dimension above an odd number of times. Since we show here $\delta_1$-expansion of such sets, we obtain cosystolic expansion which does not depend on the group.

%\begin{theorem}[Cosystolic expansion over any group]\label{thm:cosystolic-expansion}
%	For any $d,q \in \mathbb{N}$, an abelian group $G$, and $0 < \beta < 1$ there exist $0 < \lambda,\eta,\varepsilon,\mu <1$ such that the following holds: Let $X$ be a $d$-dimensional $q$-bounded degree $\lambda$-local spectral expander with $\beta$-coboundary expanding links over $G$. Then the $(d-1)$-skeleton of $X$ is an $(\varepsilon, \mu)$-cosystolic expander over $G$.
%\end{theorem}
\begin{theorem}[Cosystolic expansion over any group - informal]\label{thm:cosystolic-expansion}
	Let $X$ be a $d$-dimensional bounded degree local spectral expander with coboundary expanding links over a group $G$. Then the $(d-1)$-skeleton\footnote{The complex which contains the faces of $X$ up to dimension $d-1$.} of $X$ is a cosystolic expander over $G$.
\end{theorem}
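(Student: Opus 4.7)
The plan is to chain the previous two theorems to verify both conditions of cosystolic expansion for the $(d-1)$-skeleton. Fix a dimension $1 \le k \le d-2$ (the base case $k=0$ reduces to spectral expansion of the underlying graph). This range is precisely what lets us apply Theorem~\ref{thm:correction-algorithm} to $k$-cochains and then Theorem~\ref{thm-1-formal} to the coboundary, whose support lives in $X(k+1)$ while its own coboundary lives in $X(k+2)$, still inside $X$. The ``one extra dimension of room'' above the skeleton is essential.

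For condition 1, let $f \in C^k(X;G)\setminus Z^k(X;G)$ and assume $\norm{\delta(f)}$ is small enough to trigger Theorem~\ref{thm:correction-algorithm} (otherwise the required inequality is trivial). The algorithm produces $f' \in C^k(X;G)$ close to $f$, with $\dist(f,f')$ controlled by $\norm{\delta(f)}$, such that $\delta(f')$ is small and non-local. Set $A := \supp(\delta(f')) \subseteq X(k+1)$. Theorem~\ref{thm-1-formal} gives $\norm{\delta_1(A)} \ge \Omega(\norm{A})$. Here is the group-independent punchline: for any $\sigma \in \delta_1(A)$, exactly one $(k+1)$-face $\tau \subset \sigma$ satisfies $\delta(f')(\tau) \neq 0_G$ while the remaining $(k+1)$-faces of $\sigma$ carry the value $0_G$, and the tautology $\delta(\delta(f'))(\sigma) = 0_G$ (valid over any group) collapses to $\pm\,\delta(f')(\tau) = 0_G$, contradicting $\tau \in A$. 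Hence $\delta_1(A) = \varnothing$, and the $\delta_1$-expansion inequality forces $\norm{A} = 0$, i.e., $f' \in Z^k(X;G)$. Therefore $\dist(f, Z^k(X;G)) \le \dist(f, f') = O(\norm{\delta(f)})$, which rearranges into condition 1 with an explicit $\varepsilon$.

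For condition 2, I would follow the cocycle-minimization template of~\cite{KKL14}, generalized from $\F_2$ to arbitrary $G$. Suppose $f \in Z^k(X;G)\setminus B^k(X;G)$ has $\norm{f} \le \mu$ for a constant $\mu$ to be fixed. Pass to a weight-minimal representative $\tilde f$ in its cohomology class $f + B^k(X;G)$. Using bounded degree and averaging, small global weight of $\tilde f$ forces the localization $\tilde f_\sigma$ on almost every link $X_\sigma$, $\sigma \in X(k-1)$, to be a cocycle there of weight below the coboundary expansion threshold of that link, hence a coboundary in the link. One then glues the local primitives into a global $g \in C^{k-1}(X;G)$ with $\delta(g) = \tilde f$, contradicting $\tilde f \notin B^k(X;G)$, and so $\mu$ can be chosen as an explicit function of the bounded-degree constant, the local spectral expansion parameter, and the link coboundary expansion constant.

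The main obstacle is precisely this second condition, where the gluing step of~\cite{KKL14,EK16} was originally parity-based and therefore specific to $\F_2$. Our $\delta_1$-expansion resolves this cleanly: any step in the gluing that previously identified an unsatisfied equation via ``odd number of nonzero neighbors'' can be replaced by one that identifies it via ``exactly one nonzero neighbor'' using Theorem~\ref{thm-1-formal}, and the latter criterion is valid over any group. The remaining delicate point is verifying that signs and orientations in the coboundary operator behave correctly for $G$, but this bookkeeping is orthogonal to the use of $\delta_1$-expansion and goes through by the same cochain-complex identity $\delta \circ \delta = 0_G$ that already powered condition 1.
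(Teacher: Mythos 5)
Your treatment of condition~1 is correct and follows the paper's proof: run the correction algorithm (Theorem~\ref{thm:correction-algorithm}) to produce a nearby $f'$ with $\delta(f')$ small and non-local, then use $\delta_1$-expansion and $\delta\circ\delta = 0$ to conclude $\delta(f')=\mathbf{0}$, hence $f' \in Z^k(X;G)$; this is exactly the content of Corollary~\ref{balanced-cocycles-vanish} applied to $\delta(f')$, and your inline re-derivation of it is fine.

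Your treatment of condition~2, however, contains a genuine gap and also misses the point of the paper's construction. You propose to pass to a weight-minimal cocycle $\tilde f$, show that $\tilde f_\sigma$ is a small cocycle (hence a coboundary) in each link, and then ``glue the local primitives into a global $g$ with $\delta(g)=\tilde f$.'' That gluing step is precisely the cohomological obstruction that you cannot wave away: the local primitives $g_\sigma \in C^{k-2}(X_\sigma;G)$ have no reason to agree on overlaps, and ``locally exact implies globally exact'' is false in general -- indeed the whole point of the second condition in cosystolic expansion is that such a global $g$ may fail to exist, in which case the cocycle must be \emph{large}, not exact. Your sketch is closer to the~\cite{KKL14}/\cite{EK16} ``fat-face transition'' template that the paper explicitly says it departs from, and your proposed fix (``replace odd-parity with exactly-one'') is asserted, not argued. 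The paper's actual argument for condition~2 is shorter and avoids gluing entirely: a weight-minimal representative $f'$ of $f+B^k(X;G)$ is in particular locally minimal, Proposition~\ref{small-locally-minimal-cocycles-are-balanced} then says that a small, locally minimal \emph{cocycle} is $(\eta,\varepsilon)$-non-local, and Corollary~\ref{balanced-cocycles-vanish} (the same $\delta_1$ vanishing argument you used in condition~1) forces $f'=\mathbf{0}$, contradicting $f'\notin B^k(X;G)$. In other words, the same $\delta_1$-expansion machinery that powers condition~1 via $\delta(f')$ also powers condition~2 via $f'$ itself; no independent gluing construction is needed. You should replace your condition~2 sketch with this direct application of Proposition~\ref{small-locally-minimal-cocycles-are-balanced} and Corollary~\ref{balanced-cocycles-vanish}.
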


A concrete example of simplicial complexes for which our theorems apply to are the famous Ramanujan complexes~\cite{LSV05.1, LSV05.2}, which are the high dimensional analog of the celebrated LPS Ramanujan graphs~\cite{LPS88}. These complexes are local spectral expanders~\cite{EK16} and their links, called spherical buildings, are coboundary expanders~\cite{LMM16}. We note that~\cite{LMM16} proved that spherical buildings are coboundary expanders only over $\F_2$, but their proof can be easily generalized to any abelian group by considering localizations with orientations of $k$-cochains. As for non-abelian groups, \cite{DM19} proved that spherical buildings are coboundary expanders over non-abelian groups as well. For more on Ramanujan complexes, we refer the reader to~\cite{Lub14}.

\begin{corollary}[Ramanujan complexes are cosystolic expanders over any group]
	Let $X$ be a $d$-dimensional Ramanujan complex. If $X$ is sufficiently thick\footnote{The explanation of the ``thickness'' of a Ramanujan complex is out of scope of this paper. It is only important for us that a Ramanujan complex can be made arbitrarily thick in order to satisfy the required criteria.}, then the $(d-1)$-skeleton of $X$ is a cosystolic expander over any group $G$.
\end{corollary}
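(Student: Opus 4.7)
The plan is simply to verify that a sufficiently thick Ramanujan complex meets every hypothesis of Theorem~\ref{thm:cosystolic-expansion} and then invoke that theorem. Concretely, for a given group $G$ I need to exhibit parameters $q, \lambda, \varepsilon$ such that $X$ is a $q$-bounded degree $\lambda$-local spectral expander whose links are $\varepsilon$-coboundary expanders over $G$, with $\lambda$ and $\varepsilon$ in the regime required by Theorem~\ref{thm:cosystolic-expansion}.

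For bounded degree, I would use the fact that a $d$-dimensional Ramanujan complex is a finite quotient of the Bruhat--Tits building of $\mathrm{PGL}_{d+1}$ over a non-archimedean local field, and is locally isomorphic to this building. Hence the degree of every vertex depends only on $d$ and on the size of the residue field, giving a uniform bound $q$. For local spectral expansion, I would cite the result of Evra and Kaufman~\cite{EK16}, which shows that the one-skeleton of every link of a Ramanujan complex is a $\lambda$-spectral expander with $\lambda$ that can be made arbitrarily small by taking the thickness of the complex large enough. This supplies the second hypothesis for any desired $\lambda$.

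The substantive point is the coboundary expansion of links over an arbitrary group $G$. The links of a $d$-dimensional Ramanujan complex are spherical buildings of type $A$. Lubotzky, Meshulam and Mozes~\cite{LMM16} proved that such spherical buildings are $\varepsilon$-coboundary expanders over $\F_2$ with $\varepsilon$ depending only on $d$. As noted in the paper, their proof extends to arbitrary abelian groups by carrying orientations through the localization argument, and the non-abelian case is handled by De Chiffre and Mozes~\cite{DM19}. Crucially, the coboundary expansion constant $\varepsilon$ obtained in these works is independent of the group $G$, which is exactly what allows group-independent cosystolic expansion to propagate.

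Combining these ingredients: given a target group $G$, fix $\varepsilon$ as the (group-independent) coboundary expansion constant of the spherical building links of dimension up to $d-2$, then choose the thickness of $X$ large enough that the resulting $\lambda$ satisfies the quantitative threshold demanded by Theorem~\ref{thm:cosystolic-expansion} in terms of $\varepsilon$ and $q$. Applying Theorem~\ref{thm:cosystolic-expansion} to the resulting complex then yields that the $(d-1)$-skeleton is a cosystolic expander over $G$. The main subtlety I anticipate is bookkeeping the order of quantifiers in the parameter dependencies: $q$ is fixed by $d$ and the residue field, $\varepsilon$ is fixed by $d$, and only $\lambda$ is tunable; one must check that the thickness can indeed be pushed high enough to drive $\lambda$ below the threshold required for these fixed $\varepsilon$ and $q$, which is exactly the content of the ``sufficiently thick'' qualifier in the statement.
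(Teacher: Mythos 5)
Your proposal matches the paper's approach exactly: the corollary is deduced by verifying the hypotheses of Theorem~\ref{thm:cosystolic-expansion} via the citations to Evra--Kaufman for local spectral expansion and to Lubotzky--Meshulam--Mozes (with its orientation-based extension to abelian groups) and De Chiffre--Mozes for coboundary expansion of the spherical-building links. On the quantifier subtlety you flag: it is less delicate than you fear, because if you trace the constants through Proposition~\ref{small-locally-minimal-cocycles-are-balanced} and Theorem~\ref{balancing-algorithm}, the threshold values of $\lambda$ and $\eta$ depend only on $d$, $\beta$, and the target $\varepsilon$, not on the degree bound $q$; the degree bound enters only the final cosystolic expansion constants $\varepsilon$ and $\mu$, so increasing the thickness (which shrinks $\lambda$ while growing $q$) is not a moving-target problem.
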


%Until now we assumed that we are working with abelian groups. Cochains over non-abelian groups are also defined, but only for dimension $\le2$. We provide the definitions for non-abelian groups in section 4, and we show that for dimension $2$, a similar strategy can obtain $\delta_1$-expansion and hence cosystolic expansion over non-abelian groups as well.

\subsection{Organization}
In section 2 we provide some required preliminaries. In section 3 we prove the $\delta_1$-expansion and cosystolic expansion results over abelian groups. In section 4 we provide the definitions for cochains over non-abelian groups and we repeat the same process as in section 3, but this time for non-abelian groups. The general strategy is the same for abelian and non-abelian groups, but the details are different, hence we split them into different sections.

\section{Preliminaries}\label{sec:preliminaries}
\paragraph*{Coboundary and cosystolic expansion over abelian groups.}
Let $X$ be a $d$-dimensional simplicial complex and $G$ an abelian group\footnote{For simplicity we deal here only with abelian groups. We discuss non-abelian groups in section 4.}. We first consider an ordered version of the complex and denote it by $\vec{X}$, where
$$\vec{X} = \{(v_0,\dotsc, v_k) \;|\; k \le d,\; \{v_0,\dotsc,v_k\} \in X \},$$
i.e., $\vec{X}$ contains all possible orderings of every face in $X$.

A $k$-cochain over $G$, $k\le d$, is an antisymmetric function $f:\vec{X}(k) \to G$, where $f$ is said to be antisymmetric if for any permutation $\pi \in Sym(k+1)$, $$f((v_{\pi(0)},v_{\pi(1)},\dotsc,v_{\pi(k)})) = sgn(\pi)f((v_0,v_1,\dotsc,v_k)).$$
The space of all $k$-cochains over $G$ is denoted by $C^k(X;G)$.

Any $k$-cochain is an assignment to the $k$-faces and it induces a $(k+1)$-cochain, i.e., an assignment to the $(k+1)$-faces, by the coboundary operator $\delta$. For any ordered $(k+1)$-face $\sigma = (v_0,\dotsc,v_{k+1})$, $\delta(f)(\sigma)$ is defined by
$$\delta(f)(\sigma) = \sum_{i=0}^{k+1}(-1)^if(v_0,\dotsc,v_{i-1},v_{i+1},\dotsc,v_{k+1}),$$
where the sum is performed over the group. It is not hard to check that for every $k$ and $f \in C^k(X;G)$, $\delta(f)$ is antisymmetric, i.e., a $(k+1)$-cochain.

We can view the complex as inducing a system of equations, where the equations are determined by the coboundary operator; i.e., each $(k+1)$-face $\sigma \in X(k+1)$ defines the equation $\delta(f)(\sigma) = 0$ (note that the ordering of the face does not matter for the satisfaction of the equation). The assignments that satisfy all the equations are called the \emph{$k$-cocycles} and denoted by
$$Z^k(X;G) = \{f \in C^k(X;G) \;|\; \delta(f) = \mathbf{0} \}.$$

One can check that $\delta(\delta(f)) = \mathbf{0}$ always holds; i.e., every assignment that is obtained as a coboundary of one dimension below satisfies all the equations. These assignments, that are the coboundary of an assignment of one dimension below, are called the \emph{$k$-coboundaries} and denoted by
$$B^k(X;G) = \{\delta(f) \;|\; f \in C^{k-1}(X;G) \}.$$
Note that $B^k(X;G) \subseteq Z^k(X;G) \subseteq C^k(X;G)$.

Recall that the weight of a $k$-cochain $f \in C^k(X;G)$ is defined by
$$\norm{f} = \Pr_{\sigma \sim P_k}[f(\sigma) \ne 0],$$
i.e., the (weighted) fraction of non-zero elements in $f$. Since the weight of a cochain is dependent only on its non-zero elements, it is often convenient to consider the set $\supp(f)$, i.e., the set of non-zero elements in $f$, and define equivalently
$$\norm{f} = \norm{\supp(f)} = \Pr_{\sigma \sim P_k}[\sigma \in \supp(f)].$$
For simplicity, we might abuse the notation and write $\sigma \in f$ where we mean that $\sigma \in \supp(f)$.

We repeat the definitions of cosystolic and coboundary expansions, but this time for abelian groups.
\begin{definition}[Cosystolic expansion]
	Let $X$ be a $d$-dimensional simplicial complex and $G$ an abelian group. For positive constants $\varepsilon,\mu > 0$, $X$ is called an $(\varepsilon,\mu)$-cosystolic expander over $G$ if for every $k < d$:
	\begin{enumerate}
		\item For any $f \in C^k(X;G)\setminus Z^k(X;G)$ it holds that
		$$\frac{\norm{\delta(f)}}{\dist(f,Z^k(X;G))} \ge \varepsilon,$$
		where $\dist(f,Z^k(X;G)) = \min\{\dist(f,g) \;|\; g \in  Z^k(X;G) \}$.
		\item For any $f \in Z^k(X;G)\setminus B^k(X;G)$ it holds that $\norm{f} \ge \mu$.
	\end{enumerate}
\end{definition}

Coboundary expansion is a similar but stronger notion than cosystolic expansion. The main difference is that the only satisfying assignments in a coboundary expander are coboundaries (unlike cosystolic expansion, where there could be satisfying assignments which are not coboundaries as long as they are large). Formally:

\begin{definition}[Coboundary expansion]
	Let $X$ be a $d$-dimensional simplicial complex and $G$ an abelian group. For a positive constant $\varepsilon > 0$, $X$ is called an $\varepsilon$-coboundary expander over $G$ if for every $k < d$ and $f \in C^k(X;G)\setminus B^k(X;G)$ it holds that
	$$\frac{\norm{\delta(f)}}{\dist(f,B^k(X;G))} \ge \varepsilon,$$
	where $\dist(f,B^k(X;G)) = \min\{\dist(f,g) \;|\; g \in  B^k(X;G) \}$.
\end{definition}

\paragraph*{Links and localization.}
Recall that the link of a $k$-face $\sigma \in X(k)$ is a $(d-k-1)$-dimensional complex defined by $X_\sigma = \{\tau \setminus \sigma \;|\; \sigma \subseteq \tau \in X\}$, where the probability distribution over faces of $X_\sigma$ is induced from the probability distribution over faces of $X$. Since we assume in this work that $P_d$ is the uniform distribution over the $d$-faces of $X$, it follows that the probability distribution over the top faces of $X_\sigma$ is the uniform distribution. In the rest of the paper, we will omit the explicit probability distribution when it is clear from the context.

Recall also that cochains over abelian groups are defined on ordered faces of the complex. For convenience sake, we fix an arbitrary ordering of the faces so that for any face $\sigma \in X$ there is a unique corresponding ordered face $\vec{\sigma} \in \vec{X}$.

For two disjoint ordered faces $\vec{\sigma} = (v_0, \dotsc, v_k)$ and $\vec{\tau} = (u_0, \dotsc, u_\ell)$ we denote their concatenation by $\vec{\sigma\tau} = (v_0,\dotsc,v_k,u_0,\dotsc,u_\ell)$.  For any $k$-face $\sigma \in X(k)$ and a $(k+\ell+1)$-cochain $f \in C^{k+\ell+1}(X;G)$, the \emph{localization} of $f$ to the link of $\sigma$ is an $\ell$-cochain in the link of $\sigma$, $f_\sigma \in C^\ell(X_\sigma;G)$ defined as follows. For any ordered $\ell$-face $\vec{\tau} \in \vec{X_\sigma}(\ell)$: $f_{\sigma}(\vec{\tau}) = f(\vec{\sigma\tau})$, where $\vec{\sigma\tau}$ is the concatenation of $\vec{\sigma}$ (i.e., the unique corresponding ordered face of $\sigma$) and $\vec{\tau}$.

By the law of total probability, the weight of any $k$-cochain $f \in C^k(X;G)$ can be decomposed as a sum of its weight in the links of the $\ell$-faces of the complex:
\begin{lemma}
	Let $X$ be a $d$-dimensional simplicial complex and $G$ an abelian group. For every $f \in C^k(X;G)$, $k \le d$ and $\ell < k$,
	$$\norm{f} = \sum_{\tau \in X(\ell)}\norm{(f,\tau)},$$
	where $\norm{(f,\tau)}$ is the mutual weight of $\supp(f) \subseteq X(k)$ and $\{\tau\} \subseteq X(\ell)$.
	% = \Pr_{\vec{\sigma}}[f(\sigma_\ell) \ne 0 \wedge \sigma_k = \sigma]$.
\end{lemma}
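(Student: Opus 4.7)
The plan is simply to unpack both sides using the probabilistic definitions from the preliminaries and apply the law of total probability. By definition, $\norm{f} = \Pr_{\sigma_k \sim P_k}[\sigma_k \in \supp(f)]$, and $\norm{(f,\tau)}$ (viewing $f$ as its support) equals $\Pr_{\sigma_k \sim P_k,\, \sigma_\ell \subset \sigma_k}[\sigma_k \in \supp(f) \wedge \sigma_\ell = \tau]$, where $\sigma_\ell$ is drawn uniformly from the $\ell$-subfaces of $\sigma_k$. The key observation is that the events $\{\sigma_\ell = \tau\}$ indexed by $\tau \in X(\ell)$ partition the probability space: they are pairwise disjoint, and their union is the sure event that $\sigma_\ell \in X(\ell)$, which holds with probability one because $X$ is downwards closed, so every $\ell$-subface of $\sigma_k \in X(k)$ automatically belongs to $X(\ell)$.

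With this partition in hand, summing the mutual weights collapses the $\tau$-sum inside the probability:
\[
\sum_{\tau \in X(\ell)} \norm{(f,\tau)} \;=\; \Pr_{\sigma_k \sim P_k,\, \sigma_\ell \subset \sigma_k}\!\Bigl[\sigma_k \in \supp(f)\Bigr] \;=\; \Pr_{\sigma_k \sim P_k}[\sigma_k \in \supp(f)] \;=\; \norm{f},
\]
where the second equality holds because the event $\{\sigma_k \in \supp(f)\}$ does not depend on the auxiliary sample $\sigma_\ell$, so marginalizing over $\sigma_\ell$ has no effect. There is no real obstacle here; the only thing to be careful about is matching the two-stage sampling scheme $\sigma_k \sim P_k$, then $\sigma_\ell \subset \sigma_k$ uniform, to the definition of $P_\ell$ given in the preliminaries, so that the marginal over $\sigma_\ell$ is indeed $P_\ell$ and the joint distribution is consistent with the stated definition of mutual weight.
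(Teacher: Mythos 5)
Your proof is correct and takes essentially the same approach as the paper's: both unpack $\norm{f}$ and $\norm{(f,\tau)}$ via the two-stage sampling and invoke the law of total probability over the partition $\{\sigma_\ell = \tau\}_{\tau \in X(\ell)}$. Your version just spells out the partition argument a bit more explicitly.
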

\begin{proof}
	It follows immediately from the definitions:
	$$\norm{f} = \Pr_{\sigma\sim P_k}[\sigma \in \supp(f)] =
	\sum_{\tau \in X(\ell)}\Pr_{\sigma\sim P_k, \tau' \subset \sigma}[\sigma \in \supp(f) \wedge \tau' = \tau] =
	\sum_{\tau \in X(\ell)}\norm{(f,\tau)},	$$
	where the second equality follows from the law of total probability.
\end{proof}

\paragraph*{Minimal and locally minimal cochains.}
One of the technical notions we use in this work is the notion of a minimal cochain. We say that a $k$-cochain $f \in C^k(X;G)$ is \emph{minimal} if its weight cannot be reduced by adding a coboundary to it, i.e., for every $g \in B^k(X;G)$ it holds that $\norm{f} \le \norm{f-g}$. Recall that the distance of $f$ from the coboundaries is defined by
$\dist(f, B^k(X;G)) = \min \{\norm{f-g} \;|\; g \in B^k(X;G) \}$. Since $\mathbf{0} \in B^k(X;G)$, it follows that for every $f \in C^k(X;G)$, $\norm{f} \ge \dist(f,B^k(X;G))$. Hence, $f$ is said to be \emph{minimal} if and only if $\norm{f} = \dist(f, B^k(X;G))$.

We also define the notion of a locally minimal cochain, where we say that $f \in C^k(X;G)$ is \emph{locally minimal} if for every vertex $v$, the localization of $f$ to the link of $v$ is minimal in the link, i.e., $f_v$ is minimal in $X_v$ for every $v \in X(0)$.

\paragraph*{Cheeger inequality for graphs.}
A $1$-dimensional simplicial complex $X$ is just a graph. In this case the known Cheeger inequality gives the following (see e.g.~\cite{KKL14} for a proof):
\begin{lemma}\label{many-outgoing-edges}
	Let $X$ be a $1$-dimensional simplicial complex which is a $\lambda$-spectral expander graph. For any set of vertices $A \subseteq X(0)$ it holds that
	\begin{enumerate}
		\item $\displaystyle\norm{E(A,\overline{A})} \ge 2(1-\lambda)\norm{A}\|\overline{A}\|$,
		\item $\displaystyle\norm{E(A)} \le (\norm{A} + \lambda)\norm{A}$,
	\end{enumerate}
	where $E(A,\overline{A})$ is the set of edges with one endpoint in $A$ and one endpoint in $\overline{A}$, and $E(A)$ is the set of edges with both endpoints in $A$.
\end{lemma}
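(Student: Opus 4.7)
The plan is to derive both inequalities from a single application of spectral analysis to the random-walk operator on $X$. I set up the weighted inner product $\langle f, g\rangle_\pi = \sum_v \pi_v f(v)g(v)$ on vertex functions, where the weight $\pi_v = \deg(v)/(2|X(1)|)$ coincides with the induced distribution $P_0(v)$. The random walk operator $M$, defined by $M_{u,v} = \mathbf{1}[uv\in X(1)]/\deg(u)$, is self-adjoint with respect to this inner product (by reversibility), has top eigenvalue $1$ with all-ones eigenvector $\mathbf{1}$, and by the $\lambda$-spectral expander hypothesis has operator norm at most $\lambda$ on the orthogonal complement of $\mathbf{1}$.

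The key step is to rewrite the two edge counts as quadratic forms of $M$. Unfolding the definitions directly gives $\langle \mathbf{1}_A, M\mathbf{1}_A\rangle_\pi = \norm{E(A)}$ and $\langle \mathbf{1}_A, M\mathbf{1}_{\overline{A}}\rangle_\pi = \tfrac{1}{2}\norm{E(A,\overline{A})}$; the asymmetric factor of $\tfrac{1}{2}$ arises because an edge inside $A$ contributes to two ordered pairs in $A\times A$, while a crossing edge contributes to only one ordered pair in $A\times \overline{A}$.

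Then I decompose $\mathbf{1}_A = \norm{A}\mathbf{1} + h$ with $h \perp \mathbf{1}$, compute $\norm{h}_\pi^2 = \norm{\mathbf{1}_A}_\pi^2 - \norm{A}^2 = \norm{A}\norm{\overline{A}}$, and observe that correspondingly $\mathbf{1}_{\overline{A}} = \norm{\overline{A}}\mathbf{1} - h$. Combining self-adjointness of $M$ with the spectral bound $|\langle h, Mh\rangle_\pi|\le\lambda\norm{h}_\pi^2 = \lambda\norm{A}\norm{\overline{A}}$, the two identities above expand into
$$\norm{E(A)} = \norm{A}^2 + \langle h, Mh\rangle_\pi \le \norm{A}^2 + \lambda\norm{A}\norm{\overline{A}} \le \norm{A}(\norm{A}+\lambda)$$
and
$$\tfrac{1}{2}\norm{E(A,\overline{A})} = \norm{A}\norm{\overline{A}} - \langle h, Mh\rangle_\pi \ge (1-\lambda)\norm{A}\norm{\overline{A}},$$
yielding both claims simultaneously.

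The main thing to track carefully is the factor-of-two bookkeeping between ordered-pair sums arising in the inner products and the unordered edge fractions $\norm{E(A)}$ and $\norm{E(A,\overline{A})}$; beyond this, the argument is exactly the standard spectral-decomposition proof of the expander mixing lemma, so no serious obstacle is expected.
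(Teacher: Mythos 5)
Your argument is correct and is precisely the standard spectral-decomposition (expander mixing lemma) proof that the paper defers to by citing~\cite{KKL14}: decompose $\mathbf{1}_A$ into its component along $\mathbf{1}$ plus an orthogonal part $h$, express the edge fractions as quadratic forms of the random-walk operator, and bound $\langle h, Mh\rangle_\pi$ by $\lambda\norm{h}_\pi^2$. One small note: you invoke the two-sided bound $\lvert\langle h, Mh\rangle_\pi\rvert\le\lambda\norm{h}_\pi^2$, but both inequalities in the lemma only require the one-sided estimate $\langle h, Mh\rangle_\pi\le\lambda\norm{h}_\pi^2$ (a bound on the second-largest eigenvalue, not its absolute value), so the argument would go through under the weaker one-sided hypothesis as well; otherwise everything, including the factor-of-two bookkeeping between ordered pairs and unordered edges, checks out.
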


\section{Result for abelian groups}
\subsection{Non-local sets are $\delta_1$-expanding}
In this section we show our results for abelian groups.

Our first theorem is that non-local sets in a local spectral expander have $\delta_1$ that is proportional to their size. We prove theorem~\ref{thm-1-formal} which we restate here in a formal way.
\begin{theorem}\label{balanced-implies-large-delta1-formal}[Non-local sets are $\delta_1$-expanding]
	Let $X$ be a $d$-dimensional $\lambda$-local spectral expander and $0<\eta, \varepsilon<1$. For any $A \subseteq X(k)$, $0 \le k \le d-1$, such that $A$ is $(\eta,\varepsilon)$-non-local it holds that
	$$\norm{\delta_1(A)} \ge \left(1 - \binom{k+2}{k}\Big(\lambda+\eta+2\varepsilon\Big)\right)\norm{A}.$$
\end{theorem}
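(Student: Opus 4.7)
For each $\sigma'\in X(k+1)$ let $t(\sigma')$ denote the number of $k$-subfaces of $\sigma'$ that lie in $A$, so that $\norm{\delta_1(A)}=\Pr_{\sigma'\sim P_{k+1}}[t(\sigma')=1]$. The identity I plan to exploit throughout is
\[
\norm{\delta_1(A)}\;=\;\Pr[t\ge 1]\;-\;\Pr[t\ge 2].
\]
The strategy is to lower-bound $\Pr[t\ge 1]$ by $\norm{A}$ via a trivial degree estimate, and to upper-bound $\Pr[t\ge 2]$ by $\binom{k+2}{k}(\lambda+\eta+2\varepsilon)\norm{A}$ via a union bound over $(k-1)$-faces combined with Cheeger inside each link.

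\textbf{Lower bound.} Since sampling $\sigma'\sim P_{k+1}$ and then a uniform $k$-subface recovers $P_k$, double counting gives $\mathbb{E}_{\sigma'}[t(\sigma')]=(k+2)\norm{A}$. Because $t(\sigma')\le k+2$ always, this yields $\Pr[t\ge 1]\ge\norm{A}$ immediately.

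\textbf{Upper bound.} The key combinatorial observation is that any two distinct $k$-subfaces of a $(k+1)$-face share a unique $(k-1)$-subface. Hence $\{t\ge 2\}\subseteq\bigcup_{\tau\in X(k-1)}B_\tau$, where $B_\tau$ is the set of $\sigma'\supset\tau$ whose two $k$-subfaces containing $\tau$ both lie in $A$. Under the link correspondence $\sigma'\supset\tau\leftrightarrow$ edge in $X_\tau$, the set $B_\tau$ is in bijection with the edges of $X_\tau(1)$ whose two endpoints both lie in $A_\tau\subseteq X_\tau(0)$. Using the identity $\Pr_{\sigma'\sim P_{k+1}}[\sigma'\supset\tau]=\binom{k+2}{k}P_{k-1}(\tau)$ (a $(k+1)$-face contains $\binom{k+2}{k}$ $(k-1)$-subfaces), the union bound then gives
\[
\Pr[t\ge 2]\;\le\;\binom{k+2}{k}\sum_{\tau\in X(k-1)}P_{k-1}(\tau)\,\norm{E(A_\tau)}_{X_\tau}.
\]
Part~2 of Lemma~\ref{many-outgoing-edges} applied in each link yields $\norm{E(A_\tau)}_{X_\tau}\le(\norm{A_\tau}+\lambda)\norm{A_\tau}$. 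Splitting the sum by whether $\tau\in S_{k-1}$, using $\norm{A_\tau}\le\eta$ on $S_{k-1}$ and $1+\lambda\le 2$ off $S_{k-1}$, together with the identity $\sum_\tau P_{k-1}(\tau)\norm{A_\tau}=\norm{A}$ and the non-locality hypothesis $\norm{A}-\norm{(A,S_{k-1})}\le\varepsilon\norm{A}$, I will obtain
\[
\sum_\tau P_{k-1}(\tau)\norm{E(A_\tau)}\;\le\;(\eta+\lambda)\norm{A}+(1+\lambda)\varepsilon\norm{A}\;\le\;(\lambda+\eta+2\varepsilon)\norm{A}.
\]
Combining the two bounds then gives exactly the stated inequality.

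\textbf{Anticipated difficulty.} The one place requiring care is the bookkeeping of the combinatorial factor $\binom{k+2}{k}$, which appears when converting a sum over ``$\tau\subset\sigma'$ with $\sigma'\sim P_{k+1}$'' into a sum over $\tau\sim P_{k-1}$ weighted by edge probabilities inside the link $X_\tau$; this factor is precisely the number of $(k-1)$-subfaces of a $(k+1)$-face and it is what produces the coefficient appearing in the theorem. Once that normalization is tracked correctly, the rest is a standard Cheeger-in-the-link calculation combined with the law of total probability.
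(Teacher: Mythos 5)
Your proof is correct and follows essentially the same approach as the paper: the lower bound $\Pr[t\ge 1]\ge\norm{A}$ matches the paper's $\norm{\Gamma(A)}\ge\norm{A}$, and your union bound over $(k-1)$-faces split by $S_{k-1}$ versus $\overline{S_{k-1}}$ is a clean reformulation of the paper's decomposition into $\Upsilon$ (handled via Cheeger in the links) and $\Gamma(A,\overline{S_{k-1}})$ (handled via non-locality). The only cosmetic difference is that you phrase the error term as $\Pr[t\ge 2]$ and apply the Cheeger estimate uniformly to all $(k-1)$-faces (using the trivial bound $\norm{A_\tau}+\lambda\le 2$ on $\overline{S_{k-1}}$), whereas the paper treats the two error contributions as separate sets; the arithmetic and resulting constants are identical.
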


\begin{proof}
	Recall that we denote by $S_{k-1}$ the set of $(k-1)$-faces $\sigma \in X(k-1)$ satisfying $\norm{A_\sigma} \le \eta$. Let us define the following sets of $(k+1)$-faces:
	\begin{itemize}
		\item $\Gamma(A) = \{\tau \in X(k+1) \;|\; \exists \sigma \in A \mbox{ s.t. } \sigma \subset \tau\}$.
		\item
		$\Gamma(A,\overline{S_{k-1}}) = \left\{\tau \in X(k+1) \;|\;
		\exists \sigma \in A,\;\sigma' \in \overline{{S_{k-1}}} \mbox{ s.t. } \sigma' \subset \sigma \subset \tau\right\}.$
		\item $\Upsilon = \left\{\tau \in X(k+1) \;|\;
		\exists \sigma,\sigma'\in A \mbox{ s.t. }\sigma,\sigma' \subset \tau,\; \sigma \cap \sigma' \in S_{k-1}\right\}.$
	\end{itemize}
	
	In words: $\Gamma(A)$ is the set of all $(k+1)$-faces that contain a $k$-face from $A$, $\Gamma(A,\overline{S_{k-1}})$ is the set of all $(k+1)$-faces that contain a $k$-face from $A$ which contains a $(k-1)$-face from $\overline{S_{k-1}} = X(k-1) \setminus S_{k-1}$, and $\Upsilon$ is the set of all $(k+1)$-faces that contain two $k$-faces from $A$ such that their intersection is a $(k-1)$-face from $S_{k-1}$.
	
	Note that for every $\tau \in \Gamma(A)\setminus\Gamma(A,\overline{S_{k-1}})$ one of the following cases must hold: Either $\tau$ contains exactly one $k$-face from $A$, i.e., $\tau \in \delta_1(A)$, or $\tau$ contains at least two $k$-faces from $A$ such that their intersection belongs to $S_{k-1}$, i.e., $\tau \in \Upsilon$. It follows that
	\begin{equation}\label{balanced-implies-large-delta1-eq1}
	\norm{\delta_1(A)} \ge
	\norm{\Gamma(A)\setminus(\Gamma(A,\overline{S_{k-1}})\cup\Upsilon)} \ge
	\norm{\Gamma(A)} - \norm{\Gamma(A,\overline{S_{k-1}})} - \norm{\Upsilon}.
	\end{equation}
	
	Let us bound each of the above terms separately. First,
	\begin{equation}\label{balanced-implies-large-delta1-eq2}
	\begin{aligned}
	\norm{\Gamma(A)} &=
	\Pr[\sigma_{k+1} \in \Gamma(A)] \\[5pt]&\ge
	\Pr[\sigma_{k+1} \in \Gamma(A) \wedge \sigma_k \in A] \\[5pt]&=
	\Pr[\sigma_k \in A]\cdot\Pr[\sigma_{k+1} \in \Gamma(A) \;|\; \sigma_k \in A] \\[5pt]&=
	\Pr[\sigma_k \in A] = \norm{A}.
	\end{aligned}
	\end{equation}
	
	Second,
	\begin{equation}\label{balanced-implies-large-delta1-eq3}
	\begin{aligned}
	\norm{\Gamma(A,\overline{S_{k-1}})} &=
	\Pr[\sigma_{k+1} \in \Gamma(A,\overline{S_{k-1}})] \\[6pt]&=
	\frac{\Pr[\sigma_{k+1} \in \Gamma(A,\overline{S_{k-1}}) \wedge \sigma_k \in A \wedge \sigma_{k-1} \notin S_{k-1}]}{\Pr[\sigma_k \in A \wedge \sigma_{k-1} \notin S_{k-1} \;|\; \sigma_{k+1} \in \Gamma(f,\overline{S_{k-1}})]} \\[6pt]& \le
	(k+2)(k+1)\Pr[\sigma_k \in A \wedge \sigma_{k-1} \notin S_{k-1}] \\[6pt]&=
	(k+2)(k+1)\norm{(A,\overline{S_{k-1}})} \\[6pt]&\le
	(k+2)(k+1)\varepsilon\norm{A},
	\end{aligned}
	\end{equation}
	where the first inequality holds since the probability that $\sigma_k \in A$ and $\sigma_{k-1} \notin S_{k-1}$ given that $\sigma_{k+1} \in \Gamma(A,\overline{S_{k-1}})$ is at least $1/\big((k+2)(k+1)\big)$, and the second inequality follows since $A$ is an $(\eta,\varepsilon)$-non-local set.
	
	Lastly, consider a $(k+1)$-face $\tau \in \Upsilon$. By definition, $\tau$ contains two $k$-faces $\sigma,\sigma' \in A$ such that $\sigma \cap \sigma' \in S_{k-1}$. Let us denote $\widecheck{\tau} = \sigma \cap \sigma'$. Note that $\tau$ is seen in the link of $\widecheck{\tau}$ as an edge between two vertices in $A_{\widecheck{\tau}}$, i.e., $\tau \setminus \widecheck{\tau} \in E(A_{\widecheck{\tau}})$. Thus,
	\begin{equation}\label{balanced-implies-large-delta1-eq4}
	\begin{aligned}
	\norm{\Upsilon} &=
	\sum_{\tau \in \Upsilon}\Pr[\sigma_{k+1} = \tau] \\[5pt]&=
	\sum_{\tau \in \Upsilon}\frac{\Pr[\sigma_{k+1} = \tau \wedge \sigma_{k-1} = \widecheck{\tau}]}{\Pr[\sigma_{k-1} = \widecheck{\tau} \;|\; \sigma_{k+1} = \tau]} \\[5pt]&\le
	\sum_{\tau \in \Upsilon}\binom{k+2}{k}\Pr[\sigma_{k+1} = \tau \;|\; \sigma_{k-1} = \widecheck{\tau}]\cdot\Pr[\sigma_{k-1} = \widecheck{\tau}] \\[5pt]&\le
	\binom{k+2}{k}\sum_{\sigma \in S_{k-1}}\norm{E(A_\sigma)}\cdot\Pr[\sigma_{k-1} = \sigma] \\[5pt]&\le
	\binom{k+2}{k}\sum_{\sigma \in S_{k-1}}(\norm{A_\sigma} + \lambda)\norm{A_\sigma}\cdot\Pr[\sigma_{k-1} = \sigma] \\[5pt]&\le
	\binom{k+2}{k}(\eta + \lambda)\sum_{\sigma \in S_{k-1}}\norm{(A_\sigma,\sigma)} \le
	\binom{k+2}{k}(\eta + \lambda)\norm{A},
	\end{aligned}
	\end{equation}
	where the third inequality follows since $X$ is a $\lambda$-local spectral expander, and the last inequality follows since $\sigma \in S_{k-1}$.
	
	Substituting~\eqref{balanced-implies-large-delta1-eq2},~\eqref{balanced-implies-large-delta1-eq3} and~\eqref{balanced-implies-large-delta1-eq4} in~\eqref{balanced-implies-large-delta1-eq1} finishes the proof.
\end{proof}

An immediate corollary of theorem~\ref{balanced-implies-large-delta1-formal} is that any non-local cocycle must be zero.
\begin{corollary}\label{balanced-cocycles-vanish}[Non-local cocycles vanish]
	For any $d \in \mathbb{N}$, an abelian group $G$, and $0 < \lambda, \eta, \varepsilon < 1$ such that $\lambda + \eta + 2\varepsilon \le 2/(d+1)^2$ the following holds: Let $X$ be a $d$-dimensional $\lambda$-local spectral expander. For any $f \in Z^k(X;G)$, $0 \le k \le d-1$, if $f$ is $(\eta,\varepsilon)$-non-local then $f = \mathbf{0}$.
\end{corollary}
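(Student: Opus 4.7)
The plan is to apply Theorem~\ref{balanced-implies-large-delta1-formal} to the set $A = \supp(f)$ and deduce that $A$ must be empty. First I would observe the key structural fact that links cocycles to $\delta_1$: if $\tau \in X(k+1)$ contains exactly one $k$-face $\sigma$ from $\supp(f)$, then in the coboundary sum $\delta(f)(\vec{\tau}) = \sum_{i=0}^{k+1}(-1)^i f(\vec{\tau}\setminus v_i)$ all but one term vanish, so $\delta(f)(\vec\tau) = \pm f(\vec\sigma) \ne 0$ in $G$. This would contradict $f \in Z^k(X;G)$, so necessarily
\[
\delta_1(\supp(f)) = \emptyset, \qquad \text{hence} \qquad \norm{\delta_1(A)} = 0.
\]

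Next, since $f$ is assumed to be $(\eta,\varepsilon)$-non-local, the set $A = \supp(f)$ is by definition $(\eta,\varepsilon)$-non-local as a set of $k$-faces, so Theorem~\ref{balanced-implies-large-delta1-formal} applies and gives
\[
0 = \norm{\delta_1(A)} \ge \left(1 - \binom{k+2}{k}(\lambda + \eta + 2\varepsilon)\right)\norm{A}.
\]
The rest is a short arithmetic check: using $\binom{k+2}{k} = (k+2)(k+1)/2$ together with $k \le d-1$ gives $\binom{k+2}{k} \le (d+1)d/2$, and combined with the hypothesis $\lambda + \eta + 2\varepsilon \le 2/(d+1)^2$ we get
\[
\binom{k+2}{k}(\lambda+\eta+2\varepsilon) \;\le\; \frac{(d+1)d}{2}\cdot\frac{2}{(d+1)^2} \;=\; \frac{d}{d+1} \;<\; 1,
\]
so the coefficient in front of $\norm{A}$ is strictly positive. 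This forces $\norm{A} = 0$, i.e.\ $\supp(f) = \emptyset$, and therefore $f = \mathbf{0}$.

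There is no real obstacle here; the statement is essentially the contrapositive of the main theorem once one notices the simple but crucial fact that for a cocycle, a unique-neighbor witness is automatically a witness of nonzero coboundary, regardless of the group $G$. The only thing to be careful about is making sure the constant $2/(d+1)^2$ in the hypothesis is indeed enough to beat $\binom{k+2}{k}$ for every $k$ in the allowed range $0 \le k \le d-1$, which the calculation above confirms.
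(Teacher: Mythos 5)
Your proof is correct and follows essentially the same route as the paper: apply Theorem~\ref{balanced-implies-large-delta1-formal} to $\supp(f)$, observe that the cocycle condition forces $\norm{\delta_1(\supp(f))}=0$ (the paper phrases this as $\norm{\delta_1(f)}\le\norm{\delta(f)}=0$, you spell out the sign-cancellation argument behind it), and conclude by the arithmetic bound $\binom{k+2}{k}(\lambda+\eta+2\varepsilon)\le d/(d+1)<1$. The only cosmetic difference is that the paper records the intermediate quantitative bound $\norm{\delta_1(f)}\ge\norm{f}/(d+1)$ rather than stopping at strict positivity.
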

\begin{proof}
	Since $f$ is $(\eta,\varepsilon)$-non-local, then by theorem~\ref{balanced-implies-large-delta1-formal} it holds that
	\begin{equation*}
	\norm{\delta_1(f)} \ge \frac{\norm{f}}{d+1}.
	\end{equation*}
	
	On the other hand, $f \in Z^k(X;G)$ and hence $\norm{\delta_1(f)} \le \norm{\delta(f)} = 0$. It follows that $f = \mathbf{0}$ as required.
\end{proof}

\subsection{The correction procedure}
Our aim now is to show a correction procedure for small coboundaries. We show an algorithm that gets a cochain $f$ such that $\norm{\delta(f)}$ is small and returns a cochain $f'$ by making a few changes to $f$ such that $\delta(f')$ is non-local.

We start by showing that any small and locally minimal cocycle is non-local.

\begin{proposition}\label{small-locally-minimal-cocycles-are-balanced}[Small and locally minimal cocycles are non-local]
	For any $d \in \mathbb{N}$, an abelian group $G$, and $0<\beta,\varepsilon<1$ there exist $0 < \lambda,\eta \le \varepsilon$ such that the following holds: Let $X$ be a $d$-dimensional $\lambda$-local spectral expander with $\beta$-coboundary expanding links. For any $f \in Z^k(X;G)$, $1 \le k \le d-1$, if $\norm{f} \le \eta^{2^{k+1}-1}$ and locally minimal then $f$ is $(\eta,\varepsilon)$-non-local.
\end{proposition}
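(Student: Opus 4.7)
The goal is to show $\norm{(f, \overline{S_{k-1}})} \le \varepsilon\norm{f}$, where $\overline{S_{k-1}} = \{\sigma \in X(k-1) : \norm{f_\sigma} > \eta\}$ is the set of ``fat'' $(k-1)$-faces. Following the high-level strategy described in the introduction, I will show that a small locally minimal cocycle cannot sit on fat faces at \emph{any} intermediate dimension. Concretely, I will argue by induction on $\ell$, for $-1 \le \ell \le k-1$, that the mutual weight $\norm{(f, F_\ell)}$ of $f$ with the fat $\ell$-faces $F_\ell = \{\sigma \in X(\ell) : \norm{f_\sigma} > \eta\}$ is negligible. The base case $\ell = -1$ is immediate: since $\norm{f} \le \eta^{2^{k+1}-1} \le \eta$, the empty face $\emptyset$ is thin, so $F_{-1} = \emptyset$.

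For the inductive step from $\ell-1$ to $\ell$, consider $\sigma \in F_\ell$. If $\sigma$ contains a fat $(\ell-1)$-sub-face, its weight is charged to $F_{\ell-1}$ (paying at most an $(\ell+1)$ factor for the sub-face selection). Otherwise every $(\ell-1)$-sub-face of $\sigma$ is thin, and I exploit the cocycle identity: a direct sign computation using $\delta f = 0$ on $(k+1)$-faces of $X$ yields, for every $\tau \in X_\sigma(k-\ell)$,
\[
\delta f_\sigma(\tau) \;=\; (-1)^{\ell}\sum_{i=0}^{\ell}(-1)^{i}\,f_{\sigma\setminus v_i}(\tau).
\]
Hence $\supp(\delta f_\sigma) \subseteq \bigcup_i \supp(f_{\sigma\setminus v_i})$, and thinness of every $\sigma \setminus v_i$ forces $\norm{\delta f_\sigma}_{X_\sigma}$ to be small. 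By $\beta$-coboundary expansion of the link $X_\sigma$ (whose dimension $d-\ell-1$ still exceeds $k-\ell-1$ since $k \le d-1$), the cochain $f_\sigma \in C^{k-\ell-1}(X_\sigma;G)$ lies within distance $O(\eta/\beta)$ of some coboundary in $B^{k-\ell-1}(X_\sigma;G)$.

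The main obstacle is converting ``$f_\sigma$ is close to a coboundary'' into ``$f_\sigma$ has small weight'', since the hypothesis only supplies vertex-level minimality of $f$, while $\sigma$ may have positive dimension. My plan is to propagate minimality inwards through a chain $\{v_0\} \subset \{v_0,v_1\} \subset \cdots \subset \sigma$: at each step, the minimality of the already-localized cochain in the next vertex link (obtained by applying local minimality of $f$ at $v_0 \in X(0)$, and then an analogous coboundary-expansion/minimality argument inside $X_{v_0}$ to $f_{v_0}$, and so on recursively) transfers a weakened form of minimality down to $f_\sigma$. Each such propagation step loses a multiplicative factor of order $\eta/\beta$, and compounding these losses through the $\ell+1$ levels of descent is what produces the doubly-exponential hypothesis $\norm{f} \le \eta^{2^{k+1}-1}$. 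Equipped with this propagated minimality of $f_\sigma$, being $O(\eta/\beta)$-close to a coboundary forces $\norm{f_\sigma}$ to be much smaller than $\eta$, contradicting $\sigma \in F_\ell$; thus every fat $\ell$-face must contain a fat $(\ell-1)$-sub-face, closing the induction. Choosing $\lambda$ and $\eta \le \varepsilon$ sufficiently small relative to $\beta$, $k$, and $d$ keeps the telescoped error below $\varepsilon\norm{f}$ at level $\ell = k-1$, establishing the claimed $(\eta,\varepsilon)$-non-locality.
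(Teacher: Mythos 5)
Your proposal reverse-engineers the high-level sketch from the introduction (descend from dimension $k-1$ down to $-1$, showing fat faces vanish), but the actual mechanism diverges from the paper's proof and has several genuine gaps. First, the localization identity you write down, $\delta f_\sigma(\tau)=\pm\sum_i(-1)^i f_{\sigma\setminus v_i}(\tau)$, does relate $\delta f_\sigma$ to the $f_{\sigma\setminus v_i}$, but "thinness of every $\sigma\setminus v_i$" means $\norm{f_{\sigma\setminus v_i}}$ is small \emph{as measured in} $X_{\sigma\setminus v_i}$, whereas you need the weight of $\supp(f_{\sigma\setminus v_i})$ restricted to the further-conditioned link $X_\sigma$ to be small. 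These are different probabilities (a concrete failure already appears for $k=1,\ell=0$: the support of $f_\emptyset=f$ can be arbitrarily concentrated near the vertex $v$ even though $\norm{f}$ is tiny), and without invoking $\lambda$-local spectral expansion — which you do not use at this step, even though it appears in the hypotheses — there is no reason for the inference to hold. Second, your own "propagated minimality" paragraph is explicitly unfinished: the paper's hypothesis only gives minimality of $f_v$ in vertex links, and transferring any usable form of minimality inward along a chain of vertices, with controlled $\eta/\beta$ losses, is precisely the hard technical content you would have to supply; it is asserted, not proved, and it is not at all clear the compounding of losses you describe actually matches the $\eta^{2^{k+1}-1}$ threshold rather than being merely a plausible guess.

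For contrast, the paper avoids both issues by working with a different set system and a different use of the cocycle condition. It defines the fat sets $S_i$ for $i<k-1$ recursively via the localization of the \emph{set} $\overline{S_{i+1}}$ (not of $f$), which is what makes the chain bound $\norm{\overline{S_{-1}}}<\eta^{1-2^{k+1}}\norm{f}$ (Claim~\ref{bound-non-balanced-faces}) go through and justifies the doubly-exponential hypothesis. The cocycle condition is then applied \emph{globally} on $(k+1)$-faces of $X$ (in Lemma~\ref{small-contribution-of-fat-faces}): if $\tau\supset\tau_k\in f$ and $\tau\notin\delta(f)$, there is another $k$-face $\tau\setminus v_\ell\in f$, and either the whole chain of sub-faces is fat (charging to $\overline{S_{i-1}}$) or $\tau$ lands in a "degenerate" set $\Upsilon$. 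That error set $\Upsilon$, bounded in Lemma~\ref{negligible-degenerate-faces} using local spectral expansion, is entirely absent from your argument, yet it is where the quantitative slack lives — your induction tries to prove the sharper qualitative statement "every fat $\ell$-face has a fat $(\ell-1)$-subface" and hence $F_\ell=\emptyset$, which is stronger than what is true and stronger than what the paper proves. In short: the strategy is inspired by the right picture, but the cocycle/localization step is not valid as written, the minimality-propagation step is a placeholder, and the degenerate-face bookkeeping is missing.
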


In order to prove proposition~\ref{small-locally-minimal-cocycles-are-balanced} we need a few more definitions and lemmas. For the sake of better readability, we just present the required lemmas and postpone their proofs to the end of this subsection. Let $f \in C^k(X;G)$, $0 \le k \le d-1$. Recall that $S_{k-1}$ is the set of $(k-1)$-faces $\sigma$ satisfying $\norm{f_\sigma} \le \eta$. For any $-1 \le i \le k-2$, we define the following set of $i$-faces:
$$S_i = \{\sigma \in X(i) \;|\; (\overline{S_{i+1}})_\sigma \le \eta^{2^{k-i-1}} \}.$$

We will show first that if $\norm{f}$ is sufficiently small then $\norm{S_{-1}} = 1$, i.e., the empty set belongs to $S_{-1}$.

\begin{lemma}\label{empty-set-is-balanced}
	Let $X$ be a $d$-dimensional simplicial complex, $G$ an abelian group and $0 < \eta < 1$. For any $f \in C^k(X;G)$, $0 \le k \le d$, if $\norm{f} \le \eta^{2^{k+1}-1}$ then $\norm{S_{-1}} = 1$.
\end{lemma}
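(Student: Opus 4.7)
The plan is to iterate a one-line Markov-type inequality, decreasing in dimension from $k-1$ down to $-1$, and track how the exponents of $\eta$ accumulate.

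First I would establish the base step at dimension $k-1$. By the ``law of total probability'' lemma (recorded just above in the preliminaries), applied with $\ell = k-1$,
\[
\norm{f} \;=\; \sum_{\sigma \in X(k-1)}\norm{(f,\sigma)} \;=\; \sum_{\sigma \in X(k-1)} \Pr[\sigma_{k-1}=\sigma]\cdot\norm{f_\sigma}.
\]
Every $\sigma \in \overline{S_{k-1}}$ has $\norm{f_\sigma} > \eta$ by definition of $S_{k-1}$, so Markov's inequality gives $\norm{\overline{S_{k-1}}} \le \norm{f}/\eta$.

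Next I would propagate this bound inductively: assuming by induction on $i$ (descending from $k-1$ to $0$) that $\norm{\overline{S_{i+1}}}$ is already bounded, apply the same decomposition with the set $\overline{S_{i+1}}$ in place of $\supp(f)$ and with $\ell = i$:
\[
\norm{\overline{S_{i+1}}} \;=\; \sum_{\sigma \in X(i)}\Pr[\sigma_i=\sigma]\cdot\norm{(\overline{S_{i+1}})_\sigma}.
\]
For $\sigma \in \overline{S_i}$ we have $\norm{(\overline{S_{i+1}})_\sigma} > \eta^{2^{k-i-1}}$, so Markov yields $\norm{\overline{S_i}} \le \norm{\overline{S_{i+1}}}/\eta^{2^{k-i-1}}$. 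Chaining these bounds from $i=k-1$ down to $i=0$ divides $\norm{f}$ by a total factor of $\eta^{1+2+4+\cdots+2^{k-1}} = \eta^{2^{k}-1}$, so
\[
\norm{\overline{S_0}} \;\le\; \frac{\norm{f}}{\eta^{2^k-1}} \;\le\; \frac{\eta^{2^{k+1}-1}}{\eta^{2^k-1}} \;=\; \eta^{2^k}.
\]

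Finally I would cash this in at dimension $-1$. The only $(-1)$-face is the empty set, its link is all of $X$, and so $(\overline{S_0})_\emptyset = \overline{S_0}$ viewed in $X$ itself; thus $\norm{(\overline{S_0})_\emptyset} = \norm{\overline{S_0}} \le \eta^{2^k} = \eta^{2^{k-(-1)-1}}$, which is exactly the threshold placing $\emptyset$ into $S_{-1}$. Since $\Pr[\sigma_{-1} = \emptyset] = 1$, this yields $\norm{S_{-1}} = 1$. There is no real obstacle here—the only thing to be careful about is the exponent bookkeeping $1+2+\cdots+2^{k-1} = 2^k-1$, which is precisely engineered so that the hypothesis $\norm{f} \le \eta^{2^{k+1}-1}$ gives the required slack $\eta^{2^k}$ at the bottom.
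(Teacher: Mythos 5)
Your proof is correct and takes essentially the same route as the paper: the paper packages the identical Markov-type iteration on $\norm{\overline{S_i}}$ as a standalone claim, runs it one extra step to $i=-1$, and concludes from $\norm{\overline{S_{-1}}} < 1$ plus $\norm{S_{-1}}\in\{0,1\}$, whereas you stop at $i=0$ and read the threshold $\norm{(\overline{S_0})_\emptyset}\le \eta^{2^k}$ directly off the definition of $S_{-1}$ — an equivalent bookkeeping. A small bonus of your phrasing is that the Markov form avoids the paper's division by $\Pr[\sigma_{i+1}\in\overline{S_{i+1}}\mid\sigma_i\in\overline{S_i}]$, which would need a side remark when $\overline{S_i}$ is empty.
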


Next, let $\Upsilon \subseteq X(k+1)$ be the set of $(k+1)$-faces which contain two $i$-faces $\sigma, \sigma' \in \overline{S_i}$ such that $\sigma \cap \sigma' \in S_{i-1}$. We then show that $\norm{\Upsilon}$ is a negligible fraction of $\norm{f}$.

\begin{lemma}\label{negligible-degenerate-faces}
	Let $X$ be a $d$-dimensional $\lambda$-local spectral expander, $G$ an abelian group and $0 < \eta < 1$ such that $\lambda \le \eta^{2^{d-1}}$. For any $f \in C^k(X;G)$, $0 \le k \le d-1$, it holds that
	$$\norm{\Upsilon} \le \eta\binom{k+2}{2}2^{k+2}\norm{f}.$$
\end{lemma}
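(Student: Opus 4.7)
The plan is to bound $\norm{\Upsilon}$ by associating each $(k+1)$-face $\tau \in \Upsilon$ with an edge inside the localization $(\overline{S_i})_{\widecheck\tau}$ in the link of some $(i-1)$-face $\widecheck\tau \in S_{i-1}$, and then applying the Cheeger-type Lemma~\ref{many-outgoing-edges} inside each such link where the $\lambda$-local spectral expansion of $X$ gives strong spectral control. The overall shape of the argument will mirror the treatment of $\Upsilon$ in inequality~\eqref{balanced-implies-large-delta1-eq4} in the proof of Theorem~\ref{balanced-implies-large-delta1-formal}, but now the small set $(\overline{S_i})_{\widecheck\tau}$ has the much stronger doubly-exponential weight bound $\eta^{2^{k-i}}$ instead of just $\eta$.

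The statement is implicitly a union over the level $i$; I would first fix a level. For each $\tau \in \Upsilon$ at level $i$ there exist $\sigma, \sigma' \in \overline{S_i}$ contained in $\tau$ with $\widecheck\tau := \sigma \cap \sigma' \in S_{i-1}$, and then the two vertices $\sigma \setminus \widecheck\tau$ and $\sigma' \setminus \widecheck\tau$ form an edge in $E\bigl((\overline{S_i})_{\widecheck\tau}\bigr)$ inside $X_{\widecheck\tau}$. The key combinatorial observation is that inside any fixed $(k+1)$-face, the pair $\{\sigma\setminus\widecheck\tau,\sigma'\setminus\widecheck\tau\}$ is one of at most $\binom{k+2}{2}$ unordered vertex pairs, which is the source of the $\binom{k+2}{2}$ factor in the final bound.

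Applying the standard conditional-probability decomposition of the form
\begin{equation*}
\norm{\Upsilon} \le \binom{k+2}{2}\sum_{\widecheck\tau \in S_{i-1}} \norm{E\bigl((\overline{S_i})_{\widecheck\tau}\bigr)}\cdot \Pr[\sigma_{i-1}=\widecheck\tau],
\end{equation*}
I would then use Lemma~\ref{many-outgoing-edges} inside each link to get $\norm{E((\overline{S_i})_{\widecheck\tau})} \le \bigl(\norm{(\overline{S_i})_{\widecheck\tau}} + \lambda\bigr)\norm{(\overline{S_i})_{\widecheck\tau}}$. The membership $\widecheck\tau \in S_{i-1}$ bounds $\norm{(\overline{S_i})_{\widecheck\tau}} \le \eta^{2^{k-i}}$, and the hypothesis $\lambda \le \eta^{2^{d-1}} \le \eta^{2^{k-i}}$ absorbs the spectral term, leaving $\norm{E((\overline{S_i})_{\widecheck\tau})} \le 2\eta \cdot \norm{(\overline{S_i})_{\widecheck\tau}}$ (using $\eta^{2^{k-i}} \le \eta$). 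Summing over $\widecheck\tau \in S_{i-1}$ collapses the link weights via $\sum_{\widecheck\tau \in S_{i-1}} \norm{(\overline{S_i})_{\widecheck\tau}}\Pr[\sigma_{i-1}=\widecheck\tau] = \norm{(\overline{S_i},S_{i-1})} \le \norm{\overline{S_i}}$.

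The last step is to relate $\norm{\overline{S_i}}$ back to $\norm{f}$, and this is where I expect the main obstacle and the factor $2^{k+2}$ to materialize. By the definition $S_j = \{\sigma : \norm{(\overline{S_{j+1}})_\sigma}\le\eta^{2^{k-j-1}}\}$, Markov's inequality gives an upper bound on $\norm{\overline{S_j}}$ in terms of $\norm{\overline{S_{j+1}}}$ and the corresponding threshold; iterating this chain from level $i$ up to level $k-1$ (where $\overline{S_{k-1}}$ controls $\norm{f}$) and then summing the per-level contributions over $i$ produces a geometric series whose sum yields the $2^{k+2}$ constant. The delicate part is verifying that the growing thresholds $\eta^{2^{k-j-1}}$ in the Markov steps exactly cancel with the shrinking Cheeger contributions so that each level contributes an $O(\eta)\norm{f}$ term, and that the hypothesis $\lambda \le \eta^{2^{d-1}}$ is strong enough to control every one of these levels simultaneously.
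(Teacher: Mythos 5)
Your outline matches the paper's proof in all essential respects: you decompose $\Upsilon$ by the dimension $i$ of the degenerate pair, reinterpret the union $\widehat\tau = \sigma\cup\sigma'$ as an edge in $E\bigl((\overline{S_i})_{\widecheck\tau}\bigr)$ inside the link of $\widecheck\tau \in S_{i-1}$, apply the Cheeger bound (Lemma~\ref{many-outgoing-edges}) there, absorb $\lambda$ using the doubly-exponential threshold, and close by iterating the Markov-style bound (this is exactly Claim~\ref{bound-non-balanced-faces}) to get $\norm{\overline{S_i}} < \eta^{1-2^{k-i}}\norm{f}$, so that the two exponents cancel to leave $\Theta(\eta)\norm{f}$ per level.

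The one concrete error is your per-level combinatorial factor. You claim it is a flat $\binom{k+2}{2}$, justified by counting vertex pairs, but the conversion of $\Pr[\sigma_{k+1}=\tau]$ to the mutual weight $\norm{(\widehat\tau,\widecheck\tau)}$ is a conditional-probability expansion whose reciprocal is level-dependent: in your indexing (pair consists of $i$-faces, $\widehat\tau$ an $(i+1)$-face, $\widecheck\tau$ an $(i-1)$-face) the factor is
\begin{equation*}
\binom{k+2}{i+2}\binom{i+2}{i} \;=\; \binom{k+2}{2}\binom{k}{i},
\end{equation*}
which equals $\binom{k+2}{2}$ only at the endpoints $i\in\{0,k\}$ and is strictly larger in between (for $k=3$, $i=1$ it is $30$, not $10$). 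As a result, the $2^{k+2}$ is not the sum of a geometric series: it arises from $\sum_i\binom{k}{i}=2^k$, multiplied by $\binom{k+2}{2}$, plus a constant for absorbing the $\lambda$ term. If you simply plug in the flat $\binom{k+2}{2}$, the per-level inequality fails. Everything else in your plan, including the identification $\sum_{\widecheck\tau\in S_{i-1}}\norm{(\overline{S_i})_{\widecheck\tau}}\Pr[\sigma_{i-1}=\widecheck\tau]\le\norm{\overline{S_i}}$ and the exponent cancellation, is correct.
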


Lastly, for any $\sigma \in X(i)$, we denote by $f\down\sigma$ the set of $k$-faces $\tau \in f$ which have a sequence of containments of faces from $\overline{S_j}$, $i < j < k$, down to $\sigma$, formally,
$$f\down \sigma = \{\tau \in f \;|\; \exists \tau_{k-1}\in \overline{S_{k-1}},\dotsc,\tau_{i+1}\in \overline{S_{i+1}} \mbox{ s.t. }\tau \supset \tau_{k-1} \supset \dotsb \supset \tau_{i+1} \supset \sigma \}.$$

We will show that for any cocycle $f \in Z^k(X;G)$ and $0 \le i \le k-1$, the fraction of $f$ that sits on $i$-faces from $\overline{S_i}$ is approximately the fraction of $f$ that sits on $(i-1)$-faces from $\overline{S_{i-1}}$.

\begin{lemma}\label{small-contribution-of-fat-faces}
	Let $X$ be a $d$-dimensional simplicial complex such that its links are $\beta$-coboundary expanders over an abelian group $G$. For any locally minimal $f \in Z^k(X;G)$, $1 \le k \le d-1$, and $0 \le i \le k-1$ it holds that
	$$\sum_{\sigma \in \overline{S_i}}\norm{(f\!\downarrow\!\sigma, \sigma)} \le
	\frac{1}{\beta}\left((k+1-i)(i+1)\sum_{\sigma' \in \overline{S_{i-1}}}\norm{(f\!\downarrow\!\sigma', \sigma')} + \norm{\Upsilon}\right).$$
\end{lemma}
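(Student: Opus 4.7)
The plan is to apply the $\beta$-coboundary expansion of the links, combined with the cocycle property $\delta f = 0$, in order to propagate the mass of $f$ from fat $i$-faces down to fat $(i-1)$-faces; the $\norm{\Upsilon}$ term will capture the degenerate cases where the descent fails. I would set up the sampling $\sigma_k \sim P_k$, then uniform $\sigma_i \subset \sigma_k$, then uniform $\sigma_{i-1} \subset \sigma_i$, and decompose
$$\sum_{\sigma \in \overline{S_i}} \norm{(f\!\downarrow\!\sigma, \sigma)} \;=\; \Pr[\sigma_i \in \overline{S_i},\ \sigma_k \in f\!\downarrow\!\sigma_i] \;=\; A + B,$$
where $A$ further conditions on $\sigma_{i-1} \in \overline{S_{i-1}}$ and $B$ on $\sigma_{i-1} \in S_{i-1}$.

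For $A$ the argument is purely combinatorial. If a chain $\sigma_k \supset \tau_{k-1} \supset \dotsb \supset \tau_{i+1} \supset \sigma_i$ witnesses $\sigma_k \in f\!\downarrow\!\sigma_i$, then appending $\sigma_i \in \overline{S_i}$ as the next level gives a chain witnessing $\sigma_k \in f\!\downarrow\!\sigma_{i-1}$. Because the joint marginal of $(\sigma_k, \sigma_{i-1})$ under the chosen sampling agrees with direct uniform $(i-1)$-subface sampling, this yields $A \le \sum_{\sigma' \in \overline{S_{i-1}}} \norm{(f\!\downarrow\!\sigma', \sigma')}$, which is absorbed into the $(k+1-i)(i+1)/\beta$ prefactor on the right-hand side.

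The harder term is $B$, where $\sigma_{i-1}$ is thin while $\sigma_i$ above it is fat. I would pass to the link $X_{\sigma_{i-1}}$ and work with the localization $f_{\sigma_{i-1}} \in C^{k-i}(X_{\sigma_{i-1}};G)$. Using the cocycle identity for $f$ in $X$, the coboundary $\delta(f_{\sigma_{i-1}})$ equals a signed combination of further localizations $f_{\sigma_{i-1} \setminus \{v\}}$ for $v \in \sigma_{i-1}$. Local minimality of $f$ at some vertex $v \in \sigma_{i-1}$ descends to yield minimality of $f_{\sigma_{i-1}} = (f_v)_{\sigma_{i-1} \setminus \{v\}}$ in the sublink, and hence the coboundary expansion of $X_{\sigma_{i-1}}$ gives $\norm{f_{\sigma_{i-1}}} \le \norm{\delta(f_{\sigma_{i-1}})}/\beta$. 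Restricting to the fat-vertex portion $V_{\sigma_{i-1}} = \{v \in X_{\sigma_{i-1}}(0) : \sigma_{i-1} \cup \{v\} \in \overline{S_i}\}$ and summing over $\sigma_{i-1} \in S_{i-1}$, the resulting coboundary mass splits into two pieces: contributions coming from chains extendable through $\overline{S_{i-2}}$, which feed into the $\sum_{\sigma' \in \overline{S_{i-1}}}$ term after accounting for the nested-triple counting factor $(k+1-i)(i+1)$ (one factor of $i+1$ for the number of $(i-1)$-subfaces of an $i$-face, and one factor of $k+1-i$ for the number of $i$-superfaces of an $(i-1)$-face inside a $k$-face), and contributions from $(k+1)$-faces containing two distinct fat $i$-faces whose intersection is exactly $\sigma_{i-1}$, which match the definition of $\Upsilon$.

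The main obstacle will be the minimality descent: local minimality is assumed only at vertex links, so one must carefully select $v \in \sigma_{i-1}$ and verify that minimality of $f_v$ in $X_v$ implies the required minimality of $(f_v)_{\sigma_{i-1} \setminus \{v\}}$ in $X_{\sigma_{i-1}}$; this is not immediate and relies on lifting any better coboundary in the sublink to a modification in $X_v$ that would have strictly lowered $\norm{f_v}$. A second delicate point is cleanly isolating the $\Upsilon$-contribution from the $\overline{S_{i-1}}$-contribution in the coboundary, so that pairs of fat $i$-faces whose intersection actually lies in $\overline{S_{i-1}}$ are not double-counted into $\Upsilon$; tracking the signs in the cocycle identity and the precise combinatorial weights in the probability conversion is where the bookkeeping is most intricate.
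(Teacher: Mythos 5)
Your high-level ingredients (cocycle identity, coboundary expansion, $\Upsilon$ absorbing the degenerate cases) are the right ones, but the assembly departs from the paper's proof in a way that I don't think closes. The paper applies the $\beta$-coboundary expansion \emph{first} and \emph{globally}: for each fat $i$-face $\sigma \in \overline{S_i}$ it works in the link $X_\sigma$ and applies expansion to the \emph{truncated} cochain $(f\!\downarrow\!\sigma)_\sigma$ (not to $f_\sigma$ itself), yielding $\norm{(f\!\downarrow\!\sigma,\sigma)} \le \beta^{-1}\norm{(\delta(f\!\downarrow\!\sigma),\sigma)}$; this is where the single $\beta^{-1}$ prefactor of the lemma comes from. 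Only afterwards does it decompose the $(k+1)$-faces of $\delta(f\!\downarrow\!\sigma)$. You instead split $\sum_{\sigma\in\overline{S_i}}\norm{(f\!\downarrow\!\sigma,\sigma)} = A + B$ by the status of the $(i-1)$-subface, dispose of $A$ combinatorially, and apply coboundary expansion only inside $B$, in the link $X_{\sigma_{i-1}}$ of a \emph{thin} $(i-1)$-face and to the \emph{full} localization $f_{\sigma_{i-1}}$. That is a different expansion, in a link of a different dimension, applied to a different cochain.

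The concrete gap is in the $B$ bookkeeping. Your identity expressing $\delta(f_{\sigma_{i-1}})$ as a signed sum of restricted localizations $f_{\sigma_{i-1}\setminus\{v\}}$ is correct, but it descends to dimension $i-2$ --- as your own phrase ``chains extendable through $\overline{S_{i-2}}$'' reveals --- which is one level too deep: the lemma's right-hand side is at level $i-1$. Worse, $B$ is conditioned on $\sigma_{i-1}\in S_{i-1}$, so it is not clear how any of its mass could feed into $\sum_{\sigma'\in\overline{S_{i-1}}}\norm{(f\!\downarrow\!\sigma',\sigma')}$ at all; you would have to route it entirely through $\norm{\Upsilon}$, and nothing in your description shows that $B$ is dominated by $\norm{\Upsilon}$ alone. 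The paper avoids all of this: after the one application of coboundary expansion in $X_\sigma$, it fixes a chain for each $\tau\in\delta(f\!\downarrow\!\sigma)$, uses the cocycle identity to find $\ell\le i$ with $\tau\setminus\{v_\ell\}\in f$, and then argues directly that either $\tau_j\setminus\{v_\ell\}\in\overline{S_{j-1}}$ for every $j\in\{i,\dotsc,k\}$ (producing a chain into $\overline{S_{i-1}}$ and the factor $(k+1-i)(i+1)$) or $\tau\in\Upsilon$. To repair your proof you would need to replace the $B$ argument with this analysis, and also address the minimality point you flagged: what the paper needs minimal in $X_\sigma$ is $(f\!\downarrow\!\sigma)_\sigma$, which it obtains from local minimality of $f$ together with $f\!\downarrow\!\sigma\subseteq f$, not a descent of minimality of $f_v$ to $f_{\sigma_{i-1}}$.
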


With the above lemmas in hand, we can now prove proposition~\ref{small-locally-minimal-cocycles-are-balanced}.
\begin{proof}[Proof of proposition~\ref{small-locally-minimal-cocycles-are-balanced}]
	Let
	$$\eta = \frac{\beta^{d-1}\varepsilon}{2^d((d+1)!)^2} \quad\quad\mbox{and}\quad\quad
	\lambda = \eta^{2^{d-1}}.$$
	Applying lemma~\ref{small-contribution-of-fat-faces} on dimensions $i = k-1,k-2,\dotsc,0$ step after step yields
	\begin{equation}\label{small-cocycles-are-balanced-eq1}
	\begin{aligned}
	\norm{(f,\overline{S_{k-1}})} &= 
	\sum_{\sigma \in \overline{S_{k-1}}}\norm{(f\down\sigma,\sigma)} \\&\le
	\frac{1}{\beta}\norm{\Upsilon} + \frac{1}{\beta}\cdot 2\cdot k\sum_{\sigma \in \overline{S_{k-2}}}\norm{(f\down\sigma,\sigma)} \\&\le
	\frac{1}{\beta}\norm{\Upsilon} + \frac{1}{\beta^2}\cdot 2\cdot k\norm{\Upsilon} + \frac{1}{\beta^2}\cdot 2\cdot k\cdot 3\cdot(k-1)\sum_{\sigma \in \overline{S_{k-3}}}\norm{(f\down\sigma,\sigma)} \\&\le
	\left(\frac{1}{\beta} + \frac{1}{\beta^2}\cdot 2\cdot k + \dotsb + \frac{1}{\beta^k}(k!)^2\right)\norm{\Upsilon} + \frac{1}{\beta^k}(k!)^2(k+1)\sum_{\sigma \in \overline{S_{-1}}}\norm{(f\down\sigma,\sigma)} \\&=
	\left(\frac{1}{\beta} + \frac{1}{\beta^2}\cdot 2\cdot k + \dotsb + \frac{1}{\beta^k}(k!)^2\right)\norm{\Upsilon} \le
	k\beta^{-k}(k!)^2\norm{\Upsilon},
	\end{aligned}
	\end{equation}
where the last equality follows by lemma~\ref{empty-set-is-balanced}. Substituting lemma~\ref{negligible-degenerate-faces} in~\eqref{small-cocycles-are-balanced-eq1} completes the proof.
\end{proof}

Now, the idea of the correction algorithm is to make $\delta(f)$ locally minimal by correcting $f$ in a few local parts. The algorithm runs in iterations, where at every iteration it does the following one step of correction.
\begin{lemma}[One step of correction]\label{one-step-of-correction}
	Let $X$ be a $d$-dimensional simplicial complex and $G$ an abelian group. For any $f \in C^k(X;G)$, $1 \le k \le d-1$, if $f$ is not locally minimal then there exists a vertex $v \in X(0)$ and a $(k-1)$-cochain $g \in C^{k-1}(X;G)$ such that $\norm{g} \le k\norm{v}$ and $\norm{f-\delta(g)} < \norm{f}$.
\end{lemma}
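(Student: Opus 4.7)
The plan is to localize the non-minimality to a single vertex $v$, pick a correction in the link $X_v$, and lift it back to $X$. Since $f$ is not locally minimal, by definition there exists $v \in X(0)$ with $f_v$ not minimal in $X_v$. Unfolding the definition of minimality produces $h' \in C^{k-2}(X_v;G)$ satisfying
$$\norm{f_v - \delta_{X_v}(h')}_{X_v} < \norm{f_v}_{X_v},$$
and the remaining task is to realize this local improvement by a $(k-1)$-cochain $g$ on $X$ whose support sits entirely on $(k-1)$-faces passing through $v$.

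The lift is the natural one. Declare $g(\vec{\sigma}) = 0$ whenever $v \notin \sigma$; for $\sigma \ni v$, place $v$ first in the ordering, write $\vec{\sigma} = \vec{v\tau}$ for $\vec{\tau} \in \vec{X_v}(k-2)$, and set $g(\vec{v\tau}) = h'(\vec{\tau})$, extended antisymmetrically to other orderings of the same face. A direct expansion of $\delta_X(g)$ on an ordered $k$-face, using that exactly one of its $(k-1)$-subfaces omits $v$, shows that $\delta(g)$ vanishes on every $k$-face not containing $v$ (since all its $(k-1)$-subfaces lie outside $\supp(g)$), while on $k$-faces through $v$ one gets $\delta(g)(\vec{v u_0\dotsb u_{k-1}}) = \pm\delta_{X_v}(h')(\vec{u_0\dotsb u_{k-1}})$. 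The sign can be absorbed by replacing $h'$ with $-h'$ if necessary, so we may assume $(\delta(g))_v = \delta_{X_v}(h')$.

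To finish, I would split the weight of any $k$-cochain $F$ over the event $\{v \in \sigma\}$,
$$\norm{F} = \Pr_{\sigma \sim P_k}[v \in \sigma \wedge F(\sigma)\ne 0] + \Pr_{\sigma \sim P_k}[v \notin \sigma \wedge F(\sigma)\ne 0],$$
and apply it to $F = f$ and $F = f-\delta(g)$. On $k$-faces not containing $v$ the two cochains agree, so the second summands cancel; on $k$-faces through $v$ both first summands factor through the induced distribution on $X_v$, giving $\Pr_\sigma[v\in\sigma]\cdot\norm{f_v}_{X_v}$ and $\Pr_\sigma[v\in\sigma]\cdot\norm{f_v-\delta_{X_v}(h')}_{X_v}$ respectively. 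The strict inequality in the link therefore lifts to $\norm{f-\delta(g)} < \norm{f}$. The same observation that $g$ is supported on $(k-1)$-faces through $v$, together with the counting identity $\Pr_{\sigma\sim P_{k-1}}[v\in\sigma] = \frac{k}{d+1}\cdot \Pr_{\sigma_d \sim P_d}[v\in\sigma_d] = k\norm{v}$ (immediate from the uniform distribution on top faces and the induced distribution on $(k-1)$-faces), gives $\norm{g} \le k\norm{v}$. The only mildly delicate points are the sign bookkeeping in the antisymmetric lift and the counting identity above; beyond that the argument is purely a manipulation of the definitions of localization, coboundary, and induced distribution.
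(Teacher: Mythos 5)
Your proof is correct and follows essentially the same route as the paper's: lift a local coboundary correction in $X_v$ to a $(k-1)$-cochain $g$ supported on faces through $v$, observe that $\delta(g)$ vanishes off the star of $v$, and transfer the strict inequality and the norm bound via the induced distribution on $X_v$. You spell out a couple of points the paper leaves implicit (the sign $(\delta g)_v = -\delta(g_v)$ and the weight decomposition over the event $v\in\sigma$), but these are the same bookkeeping, not a different argument.
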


\begin{proof}
	Since $f$ is not locally minimal, there exists a vertex $v \in X(0)$ such that $f_v$ is not minimal in $X_v$. By definition there exists a $(k-2)$-cochain $h \in C^{k-2}(X_v;G)$ in the link of $v$ such that $\norm{f_v - \delta(h)} < \norm{f_v}$. Define $g \in C^{k-1}(X;G)$ by
	$$g(\sigma) = \begin{cases}
	h(\tau) & \sigma = v\tau,\\
	0 & \mbox{otherwise}.
	\end{cases}
	$$
	
	Note that $g_v = h$, therefore $\norm{f - \delta(g)} < \norm{f}$.
	Furthermore, since $g(\sigma) = 0$ for every $\sigma$ which does not contain $v$ it follows that
	\begin{equation*}
	\norm{g} = \Pr[\sigma_{k-1} \in g] =
	\frac{\Pr[\sigma_{k-1} \in g \wedge \sigma_0 = v]}{\Pr[\sigma_0 = v \;|\; \sigma_{k-1} \in g]} \le k\norm{v}.
	\end{equation*}
\end{proof}

We then use lemma~\ref{one-step-of-correction} iteratively in order to prove theorem~\ref{thm:correction-algorithm} which we restate here in a formal way.
\begin{theorem}[Correction algorithm]\label{balancing-algorithm}
	For any $d,q \in \mathbb{N}$, an abelian group $G$, and $0<\beta,\varepsilon<1$ there exist constants $0 < \lambda,\eta \le \varepsilon$ such that the following holds: Let $X$ be a $d$-dimensional $q$-bounded degree $\lambda$-local spectral expander with $\beta$-coboundary expanding links. For any $f \in C^k(X;G)$, $1 \le k \le d-2$, if $\norm{\delta(f)} \le \eta^{2^{k+2}-1}$ then there exists $f' \in C^k(X;G)$ such that $\dist(f, f') \le q\binom{d}{k+1}\norm{\delta(f)}$, $\norm{\delta(f')} \le \norm{\delta(f)}$, and $\delta(f')$ is $(\eta,\varepsilon)$-non-local.
\end{theorem}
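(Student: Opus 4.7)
The natural approach is to iteratively apply Lemma~\ref{one-step-of-correction} to the coboundary $\delta(f) \in C^{k+1}(X;G)$ until it becomes locally minimal, then invoke Proposition~\ref{small-locally-minimal-cocycles-are-balanced} at dimension $k+1$ to conclude that $\delta(f')$ is non-local. Starting from $f_0 := f$, while $\delta(f_i)$ is not locally minimal, Lemma~\ref{one-step-of-correction} applied to $\delta(f_i)$ produces a vertex $v_i \in X(0)$ and a $k$-cochain $g_i \in C^k(X;G)$ supported on $k$-faces containing $v_i$, with $\norm{g_i} \le (k+1)\norm{v_i}$ and $\norm{\delta(f_i - g_i)} < \norm{\delta(f_i)}$. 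Setting $f_{i+1} := f_i - g_i$, the weights $\norm{\delta(f_i)}$ strictly decrease in a finite discrete set, so the procedure halts at some $f' := f_N$ with $\delta(f')$ locally minimal and $\norm{\delta(f')} \le \norm{\delta(f)}$.

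I then take $\lambda,\eta$ to be the constants supplied by Proposition~\ref{small-locally-minimal-cocycles-are-balanced} at dimension $k+1$ for the given $\beta,\varepsilon$. The theorem's hypothesis $\norm{\delta(f)} \le \eta^{2^{k+2}-1}$ propagates, via $\norm{\delta(f')} \le \norm{\delta(f)}$, to the smallness input of that proposition. Because $\delta(f')$ is automatically a coboundary (hence a cocycle), is locally minimal by construction, and has weight at most $\eta^{2^{k+2}-1}$, Proposition~\ref{small-locally-minimal-cocycles-are-balanced} then yields that $\delta(f')$ is $(\eta,\varepsilon)$-non-local, which is exactly the required conclusion.

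The main obstacle is the distance bound $\dist(f,f') \le q\binom{d}{k+1}\norm{\delta(f)}$. One has $\dist(f,f') \le \sum_i \norm{g_i} \le (k+1)\sum_i \norm{v_i}$, so it suffices to bound $\sum_i \norm{v_i}$. My plan is to strengthen the algorithm by performing at each iteration a \emph{maximal} correction at $v_i$---choose $g_i$ so that $\delta(f_{i+1})_{v_i}$ is a minimizer of $\norm{\delta(f_i)_{v_i} - \delta(h)}$ over $h \in C^{k-1}(X_{v_i};G)$. Since $g_i$ is supported on $k$-faces through $v_i$, the per-step progress has the clean form $\norm{\delta(f_i)} - \norm{\delta(f_{i+1})} = \norm{v_i}\bigl(\norm{\delta(f_i)_{v_i}} - \norm{\delta(f_{i+1})_{v_i}}\bigr)$. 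I will then combine the $\beta$-coboundary expansion of the link $X_{v_i}$ (to lower-bound the single-vertex progress term) with the bounded-degree inequality $\norm{v} \le \tfrac{q\binom{d}{k+1}}{k+2}\norm{\tau}$ valid for any $(k+1)$-face $\tau \ni v$ (to tie $\norm{v_i}$ to the weights of the $(k+1)$-faces in $\delta(f_i)$ through $v_i$). Telescoping using $\sum_i (\norm{\delta(f_i)} - \norm{\delta(f_{i+1})}) \le \norm{\delta(f)}$ should then yield $\sum_i \norm{v_i} \le \tfrac{q\binom{d}{k+1}}{k+1}\norm{\delta(f)}$. This quantitative accounting is the hard part, because Lemma~\ref{one-step-of-correction} on its own only gives a qualitative strict decrease; extracting a rate linear in $\norm{\delta(f)}$ forces both the maximal-correction refinement and a careful joint use of link coboundary expansion and bounded degree.
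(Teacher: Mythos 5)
Your overall structure matches the paper's: iterate Lemma~\ref{one-step-of-correction} on $\delta(f)$ until local minimality, set $f'$ to the result, and invoke Proposition~\ref{small-locally-minimal-cocycles-are-balanced} on $\delta(f') \in Z^{k+1}(X;G)$. That part is fine, including the observation that the proposition's constants are independent of $k$.

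However, you misdiagnose the distance bound as the hard part and propose a substantially more complicated (and unexecuted) argument than is needed. Your claim that ``Lemma~\ref{one-step-of-correction} on its own only gives a qualitative strict decrease'' overlooks that the complex is finite: each correction step changes $\supp(\delta(f_i))$ on at least one $(k+1)$-face, and every $(k+1)$-face $\sigma$ has $\norm{\sigma} \ge \bigl(|X(d)|\binom{d+1}{k+2}\bigr)^{-1}$. Hence the number of iterations is at most $r \le |X(d)|\binom{d+1}{k+2}\norm{\delta(f)}$, with no need for a ``maximal correction'' refinement or per-step rate coming from link coboundary expansion. Combining this with the $q$-bounded-degree bound $\norm{v} \le \frac{q}{(d+1)|X(d)|}$ for every vertex, and $\norm{g_i} \le (k+1)\norm{v_i}$ from Lemma~\ref{one-step-of-correction} (applied to $(k+1)$-cochains), gives
\[
\dist(f,f') \le \sum_{i=1}^r (k+1)\norm{v_i} \le |X(d)|\binom{d+1}{k+2}\norm{\delta(f)}\cdot(k+1)\cdot\frac{q}{(d+1)|X(d)|} \le q\binom{d}{k+1}\norm{\delta(f)},
\]
which is exactly the claimed bound. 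This is the paper's argument, and it renders your planned machinery unnecessary. As a side remark, the ``clean form'' identity in your plan is also off by a combinatorial factor: since $\Pr[\sigma_0 = v_i \mid v_i \in \sigma_{k+1}] = 1/(k+2)$, the per-step progress is $(k+2)\norm{v_i}\bigl(\norm{(\delta(f_i))_{v_i}} - \norm{(\delta(f_{i+1}))_{v_i}}\bigr)$, not what you wrote; this would need fixing if you pursued your route. I recommend replacing your entire distance analysis with the one-line counting argument above.
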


\begin{proof}
	Let $\lambda$ and $\eta$ be as in proposition~\ref{small-locally-minimal-cocycles-are-balanced}.
	Apply lemma~\ref{one-step-of-correction} for $\delta(f)$ step by step until no more corrections are possible. Since at every step the weight decreases, this process terminates after some $r \ge 0$ steps. Denote by $v_1,v_2,\dotsc,v_r$ the vertices and by $g_1,g_2,\dotsc,g_r$ the $k$-cochains given by applying lemma~\ref{one-step-of-correction} for $r$ steps, where at step $i$ we apply it for $\delta(f-g_1-\dotsb-g_{i-1})$.
	
	Let $f' = f - g_1 - g_2 - \dotsb - g_r$. Since the norm of $\delta(f)$ decreases at every step of correction, it follows that $\norm{\delta(f')} \le \norm{\delta(f)} \le \eta^{2^{k+2}-1}$. Furthermore, since no more corrections are possible, it must be that $\delta(f')$ is locally minimal. Thus, by proposition~\ref{small-locally-minimal-cocycles-are-balanced}, $\delta(f')$ is $(\eta,\varepsilon)$-non-local.
	
	It is left to show that $\norm{f-f'}$ is proportional to $\norm{\delta(f)}$. By definition, for any $\sigma \in X(k+1)$ it holds that $\norm{\sigma} \ge \Big(|X(d)|\binom{d+1}{k+2}\Big)^{-1}$, hence $r \le |X(d)|\binom{d+1}{k+2}\norm{\delta(f)}$. Thus,
	\begin{align*}
	\dist(f,f') &=
	\norm{g_1 + g_2 + \dotsb + g_r} \le
	\sum_{i=1}^r(k+1)\norm{v_i} \\&\le
	|X(d)|\binom{d+1}{k+2}(k+1)\frac{q}{|X(d)|(d+1)}\norm{\delta(f)} \le
	\binom{d}{k+1}q\norm{\delta(f)}
	\end{align*}
	which finishes the proof.
\end{proof}

\subsubsection{Proofs of intermediate lemmas}
We will prove now the lemmas we used for proving proposition~\ref{small-locally-minimal-cocycles-are-balanced}.

The proof of lemma~\ref{empty-set-is-balanced} follows immediately from the following claim.
\begin{claim}\label{bound-non-balanced-faces}
	Let $X$ be a $d$-dimensional simplicial complex, $G$ an abelian group and $0 < \eta < 1$. For any $f \in C^k(X;G)$, $0 \le k \le d$, and $-1 \le i \le k-1$ it holds that $\norm{\overline{S_{i}}} < \eta^{1-2^{k-i}}\norm{f}$.
\end{claim}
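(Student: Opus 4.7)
\bigskip

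\noindent\textbf{Proof proposal for Claim~\ref{bound-non-balanced-faces}.}

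The plan is to proceed by downward induction on $i$, starting from $i = k-1$ and descending to $i = -1$. The single recurring tool is the law of total probability, which gives a decomposition of any set's weight through a lower-dimensional skeleton: for any set $B \subseteq X(j)$ with $j > i$,
$$\norm{B} \;=\; \sum_{\sigma \in X(i)} \norm{(B,\sigma)} \;=\; \sum_{\sigma \in X(i)} \norm{\sigma}\cdot\norm{B_\sigma},$$
where the second equality follows because the induced distribution on the link of $\sigma$ agrees with the conditional distribution. This identity, together with a Markov-style argument, is the workhorse of the proof.

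For the base case $i=k-1$, the set $\overline{S_{k-1}}$ is by definition $\{\sigma \in X(k-1) : \norm{f_\sigma} > \eta\}$. Applying the decomposition with $B = \supp(f)$ gives
$$\norm{f} \;=\; \sum_{\sigma \in X(k-1)} \norm{\sigma}\cdot\norm{f_\sigma} \;\ge\; \sum_{\sigma \in \overline{S_{k-1}}} \norm{\sigma}\cdot\norm{f_\sigma} \;>\; \eta\cdot\norm{\overline{S_{k-1}}},$$
so $\norm{\overline{S_{k-1}}} < \eta^{-1}\norm{f} = \eta^{1-2^{k-(k-1)}}\norm{f}$, as required. (If $\overline{S_{k-1}}$ is empty the claim is vacuous; otherwise strict inequality is preserved.)

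For the inductive step, assume $\norm{\overline{S_{i+1}}} < \eta^{1-2^{k-i-1}}\norm{f}$ for some $-1 \le i \le k-2$. By the definition of $S_i$, every $\sigma \in \overline{S_i}$ satisfies $\norm{(\overline{S_{i+1}})_\sigma} > \eta^{2^{k-i-1}}$. Applying the decomposition once more, this time with $B = \overline{S_{i+1}} \subseteq X(i+1)$ decomposed through $X(i)$, yields
$$\norm{\overline{S_{i+1}}} \;=\; \sum_{\sigma \in X(i)} \norm{\sigma}\cdot\norm{(\overline{S_{i+1}})_\sigma} \;\ge\; \sum_{\sigma \in \overline{S_i}} \norm{\sigma}\cdot\norm{(\overline{S_{i+1}})_\sigma} \;>\; \eta^{2^{k-i-1}}\cdot\norm{\overline{S_i}}.$$
Combining with the inductive hypothesis gives
$$\norm{\overline{S_i}} \;<\; \eta^{-2^{k-i-1}}\norm{\overline{S_{i+1}}} \;<\; \eta^{-2^{k-i-1}}\cdot\eta^{1-2^{k-i-1}}\norm{f} \;=\; \eta^{1-2^{k-i}}\norm{f},$$
closing the induction.

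I do not anticipate any serious obstacle here: the claim is essentially a telescoped Markov inequality, and the only subtlety is making sure the mutual-weight identity $\norm{(B,\sigma)} = \norm{\sigma}\cdot\norm{B_\sigma}$ (the ``$\norm{\cdot}$-through-the-link'' decomposition, which the paper states as a lemma for cochains) is applied to \emph{sets} rather than cochains at each step. This is immediate from the definitions once one identifies a set with its indicator, and is implicit in the way the paper already moves between $\norm{f}$ and $\norm{\supp(f)}$. Everything else is bookkeeping of the doubling exponents $2^{k-i}$.
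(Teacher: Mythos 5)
Your proof is correct and follows essentially the same approach as the paper. The paper writes the argument as a single unrolled chain of Bayes-rule steps, bounding $\norm{\overline{S_j}}$ in terms of $\norm{\overline{S_{j+1}}}$ at each dimension via the conditional probability $\Pr[\sigma_{j+1} \in \overline{S_{j+1}} \mid \sigma_j \in \overline{S_j}] > \eta^{2^{k-j-1}}$; you phrase the identical per-dimension Markov bound as a downward induction, using the link-decomposition identity $\norm{B} = \sum_{\sigma}\norm{\sigma}\norm{B_\sigma}$ in place of the conditional-probability manipulation, but the two are the same computation.
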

\begin{proof}
	By laws of probability
	\begin{align*}
	\norm{\overline{S_i}} &= \Pr[\sigma_i \in \overline{S_i}] =
	\frac{\Pr[\sigma_i \in \overline{S_i} \wedge \sigma_{i+1} \in \overline{S_{i+1}}]}{\Pr[\sigma_{i+1} \in \overline{S_{i+1}} \;|\; \sigma_i \in \overline{S_i}]} \\[6pt]&<
	\frac{\Pr[\sigma_{i+1} \in \overline{S_{i+1}}]}{\eta^{2^{k-i-1}}} =
	\frac{\Pr[\sigma_{i+1} \in \overline{S_{i+1}} \wedge \sigma_{i+2} \in \overline{S_{i+2}}]}{\eta^{2^{k-i-1}}\cdot\Pr[\sigma_{i+2} \in \overline{S_{i+2}} \;|\; \sigma_{i+1} \in \overline{S_{i+1}}]} \\[6pt]&<
	\frac{\Pr[\sigma_{i+2} \in \overline{S_{i+2}}]}{\eta^{2^{k-i-1}}\eta^{2^{k-i-2}}} <\dotsb<
	\frac{\Pr[\sigma_k \in f]}{\eta^{2^{k-i-1}}\eta^{2^{k-i-2}}\dotsb\eta} =
	\frac{\norm{f}}{\eta^{2^{k-i}-1}}.
	\end{align*}
\end{proof}

\begin{proof}[Proof of lemma~\ref{empty-set-is-balanced}]
	By claim~\ref{bound-non-balanced-faces}, $\norm{\overline{S_{-1}}} < \eta^{1-2^{k+1}}\norm{f} \le 1$.
	Therefore, $\norm{S_{-1}} = 1-\norm{\overline{S_{-1}}} > 0$, but $X(-1)$ contains only one face, so $\norm{S_{-1}} \in \{0,1\}$. It follows that $\norm{S_{-1}} = 1$ as required.
\end{proof}

\begin{proof}[Proof of lemma~\ref{negligible-degenerate-faces}]
	By definition, any $(k+1)$-face $\tau \in \Upsilon$ contains at least one pair of $i$-faces $\sigma,\sigma' \in \overline{S_i}$ such that $\sigma \cup \sigma' \in X(i+1)$ and $\sigma \cap \sigma' \in S_{i-1}$ for some $0 \le i \le k$. For any $\tau \in \Upsilon$, fix one such pair $\sigma,\sigma' \in \overline{S_i}$ and denote by $\widehat{\tau} = \sigma \cup \sigma'$ and by $\widecheck{\tau} = \sigma \cap \sigma'$. Denote by $\Upsilon_i = \{\tau \in \Upsilon \;|\; \widehat{\tau} \in X(i) \}$, so $\Upsilon$ can be decomposed to $\Upsilon = \bigsqcup_{i=1}^{k+1}\Upsilon_i$. Now, for any $\tau \in \Upsilon_i$
	
	\begin{equation}\label{negligible-degenerate-faces-eq1}
	\begin{aligned}
	\norm{(\widehat{\tau},\widecheck{\tau})} &=
	\Pr[\sigma_i = \widehat{\tau} \;|\; \sigma_{i-2} = \widecheck{\tau}]\cdot\Pr[\sigma_{i-2} = \widecheck{\tau}] \\[5pt]&\le
	\norm{E((\overline{S_{i-1}})_{\widecheck{\tau}})}\cdot\Pr[\sigma_{i-2} = \widecheck{\tau}] \\[5pt]&\le
	(\eta^{2^{k-i+1}} + \lambda)\norm{(\overline{S_{i-1}})_{\widecheck{\tau}}}\cdot\Pr[\sigma_{i-2} = \widecheck{\tau}] \\[5pt]&\le
	(\eta^{2^{k-i+1}} + \lambda)\norm{(\overline{S_{i-1}}, \widecheck{\tau})},
	\end{aligned}
	\end{equation}
	where the first inequality follows since $\widehat{\tau}$ is seen in the link of $\widecheck{\tau}$ as an edge between two vertices from $\overline{S_{i-1}}$ and the rest follow by definitions. Therefore,
	\begin{equation*}
	\begin{aligned}
	\norm{\Upsilon_i} &=
	\sum_{\tau \in \Upsilon_i}\Pr[\sigma_{k+1} = \tau] \\[5pt]&=
	\sum_{\tau \in \Upsilon_i}\frac{\Pr[\sigma_{k+1} = \tau \wedge \sigma_i = \widehat{\tau} \wedge \sigma_{i-2} = \widecheck{\tau}]}{\Pr[\sigma_i = \widehat{\tau} \wedge \sigma_{i-2} = \widecheck{\tau} \;|\; \sigma_{k+1} = \tau]} \\[5pt]&\le
	\sum_{\tau \in \Upsilon_i}\binom{k+2}{i+1}\binom{i+1}{i-1}\norm{(\widehat{\tau},\widecheck{\tau})} \\[5pt]&\le
	\binom{k+2}{i+1}\binom{i+1}{i-1}\sum_{\sigma \in S_{i-2}}(\eta^{2^{k-i+1}} + \lambda)\norm{(\overline{S_{i-1}}, \sigma)} \\[5pt]&\le
	\binom{k+2}{i+1}\binom{i+1}{i-1}(\eta^{2^{k-i+1}} + \lambda)\norm{\overline{S_{i-1}}} \\[5pt]&\le
	\binom{k+2}{i+1}\binom{i+1}{i-1}(\eta^{2^{k-i+1}} + \lambda)\eta^{1-2^{k-i+1}}\norm{f} \\[5pt]&\le
	\binom{k+2}{i+1}\binom{i+1}{i-1}2\eta\norm{f},
	\end{aligned}
	\end{equation*}
	where the second inequality follows by~\eqref{negligible-degenerate-faces-eq1} and the fact that for every $\tau \in \Upsilon_i$ we fixed a unique $(i-2)$-face $\widecheck{\tau} \in S_{i-2}$, the fourth inequality follows by claim~\ref{bound-non-balanced-faces}, and the last inequality follows since $\lambda \le \eta^{2^{k-i+1}}$ for every $1 \le i \le k-1$. This finishes the proof since
	\begin{equation*}
	\norm{\Upsilon} =
	\sum_{i=1}^{k+1}\norm{\Upsilon_i} \le
	\eta\binom{k+2}{2}2^{k+2}\norm{f}.
	\end{equation*}
\end{proof}

\begin{proof}[Proof of lemma~\ref{small-contribution-of-fat-faces}]
	Let $\sigma \in \overline{S_i}$ and consider a $(k+1)$-face $\tau \in \delta(f\down\sigma)$. By definition there exists a sequence of containments $\tau \supset \tau_k \supset \tau_{k-1} \supset \dotsb \supset \tau_{i+1} \supset \sigma$ such that $\tau_k \in f$ and $\tau_j \in \overline{S_j}$ for every $i < j \le k$. Let us denote $\tau = \{v_0,\dotsc,v_{k+1}\}$ such that $\tau_k = \tau \setminus \{v_{k+1}\}$, and for every $i \le j < k$, $\tau_j = \tau_{j+1}\setminus \{v_{j+1}\}$. Since $f$ is a cocycle, there must exist $\ell \le i$ such that $\tau \setminus \{v_\ell\} \in f$ (otherwise $\tau \in \delta(f)$ in contradiction). Thus, either for every $j \in \{k,k-1,\dotsc,i\}$ it holds that $\tau_j \setminus \{v_\ell\} \in \overline{S_{j-1}}$ or that $\tau \in \Upsilon$ since in this case it contains two $j$-faces from $\overline{S_j}$ such that their intersection belongs to $S_{j-1}$. Therefore,
	\begin{align*}
	\sum_{\sigma \in \overline{S_i}}\norm{(\delta(f\down\sigma), \sigma)} &=
	\sum_{\sigma \in \overline{S_i}}\Pr[\sigma_{k+1} \in \delta(f\down\sigma) \wedge \sigma_i = \sigma] \\[5pt]&=
	\sum_{\sigma \in \overline{S_i}}\Pr[\left(\tau \in \Upsilon \vee (\exists \sigma' \subset \sigma \mbox{ s.t. } \sigma' \in \overline{S_{i-1}} \wedge \sigma_{k+1} \in \Gamma(f\down\sigma'))\right) \wedge \sigma_i = \sigma] \\[5pt]&\le
	\norm{\Upsilon} + (k+1-i)(i+1)\sum_{\sigma' \in \overline{S_{i-1}}}\norm{(f\!\downarrow\!\sigma', \sigma')},
	\end{align*}
	where the $(k+1-i)(i+1)$ factor is due to the probability of choosing $\tau_k$ and $\sigma' = \sigma \setminus \{v_\ell\}$ given that $\tau$ and $\sigma$ were chosen.
	
	Now, since $f$ is locally minimal, $f\down\sigma \subseteq f$ is also locally minimal, hence the $\beta$-coboundary expansion of $X_\sigma$ guarantees that $\norm{(f\down\sigma,\sigma)} \le \beta^{-1}\norm{(\delta(f\down\sigma),\sigma)}$, which completes the proof.
\end{proof}

\subsection{Cosystolic expansion}
We use a similar reduction as in~\cite{KKL14} in order to show that $\delta_1$-expansion of small sets implies cosystolic expansion over any abelian group. Recall that a complex is a cosystolic expander if the following two properties hold: (1) The systems of equations are expanding, i.e., any assignment that does not satisfy all the equations has a large fraction of unsatisfied equations (proportional to the distance from a satisfying assignment). (2) Every cocycle which is not a coboundary is large.

%It is left to show that the distance of any cochain which is not a cocycle from the cocycles is proportional to the fraction of unsatisfied equations.
\begin{lemma}[The systems of equations are expanding]\label{lem:cosystolic-abelian-1}
	For any $d,q \in \mathbb{N}$, an abelian group $G$, and $0<\beta<1$ there exist $0 < \lambda, \eta < 1$ such that the following holds: Let $X$ be a $d$-dimensional $q$-bounded degree $\lambda$-local spectral expander with $\beta$-coboundary expanding links over $G$. For any $f \in C^k(X;G)\setminus Z^k(X;G)$, $1 \le k \le d-2$, it holds that
	$$\norm{\delta(f)} \ge \min\left\{\eta^{2^{k+2}-1}, \frac{1}{q\binom{d}{k+1}}\right\}\cdot\dist(f, Z^k(X;G)).$$
\end{lemma}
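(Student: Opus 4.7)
The plan is to split on the size of $\norm{\delta(f)}$ and then use the correction algorithm together with the vanishing of non-local cocycles.

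First, choose $\varepsilon$ small enough that $2\varepsilon \le 1/(d+1)^2$, and then take $\lambda, \eta$ to be the constants supplied by Theorem~\ref{balancing-algorithm} applied with this $\varepsilon$ (shrinking them further if needed so that $\lambda + \eta + 2\varepsilon \le 2/(d+1)^2$, which is the hypothesis of Corollary~\ref{balanced-cocycles-vanish}). With these choices fixed, let $f \in C^k(X;G)\setminus Z^k(X;G)$.

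Case 1 is the easy case: if $\norm{\delta(f)} \ge \eta^{2^{k+2}-1}$, then since $\mathbf{0} \in Z^k(X;G)$ we have $\dist(f,Z^k(X;G)) \le \norm{f} \le 1$, so the desired inequality is immediate.

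Case 2 is the interesting case: $\norm{\delta(f)} < \eta^{2^{k+2}-1}$. Now I would invoke the correction algorithm (Theorem~\ref{balancing-algorithm}) to obtain a cochain $f' \in C^k(X;G)$ satisfying $\dist(f,f') \le q\binom{d}{k+1}\norm{\delta(f)}$, $\norm{\delta(f')} \le \norm{\delta(f)}$, and with $\delta(f')$ being $(\eta,\varepsilon)$-non-local. Since $\delta\delta = \mathbf{0}$, the cochain $\delta(f')$ lies in $Z^{k+1}(X;G)$, and by our choice of parameters Corollary~\ref{balanced-cocycles-vanish} forces $\delta(f') = \mathbf{0}$; that is, $f' \in Z^k(X;G)$. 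Therefore
$$\dist(f, Z^k(X;G)) \le \dist(f,f') \le q\binom{d}{k+1}\norm{\delta(f)},$$
which rearranges to $\norm{\delta(f)} \ge \frac{1}{q\binom{d}{k+1}}\dist(f,Z^k(X;G))$, completing this case.

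The only subtlety, and the place I would be most careful, is the choice of constants: we need $\varepsilon$ small enough for the vanishing corollary to apply in dimension $k+1$ (so $\lambda + \eta + 2\varepsilon \le 2/(d+1)^2$ must hold), while simultaneously $\lambda$ and $\eta$ must be the ones guaranteed by the correction algorithm for this $\varepsilon$. Since the correction algorithm already produces $\lambda,\eta \le \varepsilon$, it suffices to fix $\varepsilon$ first (as a small function of $d$), then let Theorem~\ref{balancing-algorithm} hand us the corresponding $\lambda,\eta$; no further tuning is needed. Everything else is a direct application of previously established results.
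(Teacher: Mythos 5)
Your proof is correct and follows the same route as the paper: split on whether $\norm{\delta(f)}$ is at least $\eta^{2^{k+2}-1}$, apply Theorem~\ref{balancing-algorithm} in the small case, and conclude $\delta(f')=\mathbf{0}$ via Corollary~\ref{balanced-cocycles-vanish} so that $f'\in Z^k(X;G)$. The paper simply fixes $\varepsilon = 1/(2(d+1)^2)$ up front and relies on $\lambda,\eta\le\varepsilon$ from Theorem~\ref{balancing-algorithm} to meet the corollary's hypothesis, exactly as you noted.
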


\begin{proof}
	Let $\varepsilon = 1/2(d+1)^2$, and $\lambda,\eta$ as in theorem~\ref{balancing-algorithm}. If $\norm{\delta(f)} \ge \eta^{2^{k+2}-1}$ we are done. Otherwise, by theorem~\ref{balancing-algorithm}, there exists $f' \in C^k(X;G)$ such that $\dist(f,f') \le q\binom{d}{k+1}\norm{\delta(f)}$, $\norm{\delta(f')} \le \norm{\delta(f)}$, and $\delta(f')$ is $(\eta,\varepsilon)$-non-local. Thus, by corollary~\ref{balanced-cocycles-vanish}, $\delta(f') = 0$, i.e., $f' \in Z^k(X;G)$. Therefore, $\dist(f, Z^k(X;G)) \le \dist(f, f') \le q\binom{d}{k+1}\norm{\delta(f)}$, which completes the proof.
\end{proof}

\begin{lemma}[Every cocycle which is not a coboundary is large]\label{lem:cosystolic-abelian-2}
	For any $d \in \mathbb{N}$, an abelian group $G$, and $0 < \beta < 1$, there exists $0 < \lambda,\eta < 1$ such that the following holds: Let $X$ be a $d$-dimensional $\lambda$-local spectral expander with $\beta$-coboundary expanding links over $G$. For any $f \in Z^k(X;G) \setminus B^k(X;G)$, $0 \le k \le d-1$, it holds that $\norm{f} \ge \eta^{2^d-1}$.
\end{lemma}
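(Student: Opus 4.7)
The plan is to prove the contrapositive through the non-local/vanishing machinery already established in this subsection: a cocycle that is too small must in fact be a coboundary. I would set $\varepsilon = 1/(2(d+1)^2)$ and let $\lambda, \eta$ be the constants produced by proposition~\ref{small-locally-minimal-cocycles-are-balanced} for this value of $\varepsilon$. Since that proposition guarantees $\lambda, \eta \le \varepsilon$, we obtain
$$\lambda + \eta + 2\varepsilon \le 4\varepsilon = \frac{2}{(d+1)^2},$$
which is exactly the hypothesis required by corollary~\ref{balanced-cocycles-vanish}. The case $k = 0$ is vacuous: the underlying graph of $X$ is a $\lambda$-spectral expander with $\lambda < 1$, hence connected, so $Z^0(X;G) = B^0(X;G)$.

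For $1 \le k \le d-1$, I would argue by contradiction. Suppose some $f \in Z^k(X;G) \setminus B^k(X;G)$ satisfies $\norm{f} < \eta^{2^d - 1}$. Replace $f$ by a weight-minimum representative of its coset $f + B^k(X;G)$; the weight function takes values in the discrete set $\{i/|X(k)| : 0 \le i \le |X(k)|\}$, so such a minimizer exists. This replacement preserves membership in $Z^k(X;G) \setminus B^k(X;G)$ and can only decrease $\norm{f}$.

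The key observation is that a weight-minimum cocycle is automatically locally minimal: if $f_v$ were not minimal in $X_v$ for some vertex $v$, lemma~\ref{one-step-of-correction} applied to $f$ would yield a $(k-1)$-cochain $g$ with $\norm{f - \delta(g)} < \norm{f}$ while $f - \delta(g)$ still lies in the same coset, contradicting minimality. Since $k + 1 \le d$ and $\eta < 1$ give $\norm{f} < \eta^{2^d - 1} \le \eta^{2^{k+1} - 1}$, proposition~\ref{small-locally-minimal-cocycles-are-balanced} applies and certifies that $f$ is $(\eta, \varepsilon)$-non-local. By the parameter compatibility above, corollary~\ref{balanced-cocycles-vanish} then yields $f = \mathbf{0} \in B^k(X;G)$, contradicting $f \notin B^k(X;G)$.

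The main potential obstacle is essentially bookkeeping rather than a genuine technical hurdle: ensuring that the $(\lambda, \eta, \varepsilon)$ parameters chosen for the ``small locally minimal cocycles are non-local'' step are simultaneously small enough to feed into the ``non-local cocycles vanish'' step. Picking $\varepsilon = 1/(2(d+1)^2)$ before invoking proposition~\ref{small-locally-minimal-cocycles-are-balanced} handles this cleanly because the proposition promises $\lambda, \eta \le \varepsilon$ for free. The other minor point is the reduction to a locally minimal representative, but this is immediate from lemma~\ref{one-step-of-correction}.
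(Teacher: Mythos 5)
Your proof is correct and follows essentially the same route as the paper: pick $\varepsilon = 1/(2(d+1)^2)$, pass to a weight-minimum representative of the coset $f + B^k(X;G)$ (which is automatically locally minimal via lemma~\ref{one-step-of-correction}), invoke proposition~\ref{small-locally-minimal-cocycles-are-balanced} to conclude non-locality, and then corollary~\ref{balanced-cocycles-vanish} to force the cocycle to vanish, contradicting $f \notin B^k(X;G)$. You are in fact slightly more careful than the paper in two places: you dispose of $k = 0$ separately (a necessary step, since proposition~\ref{small-locally-minimal-cocycles-are-balanced} only covers $k \ge 1$, and $Z^0 = B^0$ holds because the underlying graph is connected), and you explicitly verify that $\lambda + \eta + 2\varepsilon \le 2/(d+1)^2$, the numerical hypothesis of corollary~\ref{balanced-cocycles-vanish}, which the paper leaves implicit.
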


\begin{proof}
	Let $\varepsilon = 1/2(d+1)^2$, and $\lambda,\eta$ as promised by proposition~\ref{small-locally-minimal-cocycles-are-balanced}. Assume towards contradiction that there exists $f \in Z^k(X;G)\setminus B^k(X;G)$ with $\norm{f} \le \eta^{2^d-1}$. If $f$ is not minimal, then there exists a minimal $f' \in Z^k(X;G)\setminus B^k(X;G)$ with $\norm{f'} < \norm{f} \le \eta^{2^d-1}$. Since $f' \notin B^k(X;G)$ then $f' \ne \mathbf{0}$. Since $f'$ is locally minimal, by proposition~\ref{small-locally-minimal-cocycles-are-balanced} $f'$ is $(\eta,\varepsilon)$-non-local and hence by corollary~\ref{balanced-cocycles-vanish} $f' = \mathbf{0}$, in contradiction.
\end{proof}

Theorem~\ref{thm:cosystolic-expansion}, which we restate here in a formal way, follows immediately from the above two lemmas.
\begin{theorem}[Cosystolic expansion over any abelian group]
	For any $d,q \in \mathbb{N}$, an abelian group $G$, and $0 < \beta < 1$ there exist $0 < \lambda,\eta <1$ such that the following holds: Let $X$ be a $d$-dimensional $q$-bounded degree $\lambda$-local spectral expander with $\beta$-coboundary expanding links over $G$. Then the $(d-1)$-skeleton of $X$ is an $(\varepsilon, \mu)$-cosystolic expander over $G$, where
	$$\varepsilon = \min\left\{\eta^{2^d-1}, \frac{1}{qd^{d/2}}\right\} \quad\quad\mbox{and}\quad\quad
	\mu = \eta^{2^d-1}.$$
\end{theorem}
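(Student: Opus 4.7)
The plan is to combine the two preceding lemmas directly: Lemma~\ref{lem:cosystolic-abelian-1} supplies the expansion of the system of equations (condition~(1) in the definition of cosystolic expansion), and Lemma~\ref{lem:cosystolic-abelian-2} supplies the fact that every cocycle outside the coboundaries is large (condition~(2)). Both lemmas are ``for any $d,q,G,\beta$ there exist $\lambda,\eta$'' statements, so I would first choose $\lambda$ and $\eta$ as the minimum of the values promised by the two lemmas, which simultaneously activates both under the hypotheses of the theorem.

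For condition~(1), cosystolic expansion of the $(d-1)$-skeleton requires a uniform lower bound on $\norm{\delta(f)}/\dist(f, Z^k(X;G))$ over every $k \in \{0,1,\dotsc,d-2\}$ and every $f \notin Z^k(X;G)$. When $1 \le k \le d-2$, Lemma~\ref{lem:cosystolic-abelian-1} gives the ratio $\ge \min\{\eta^{2^{k+2}-1}, 1/(q\binom{d}{k+1})\}$; the exponent $2^{k+2}-1$ is largest at $k=d-2$, producing the uniform lower bound $\eta^{2^d-1}$, and the crude estimate $\binom{d}{k+1} \le d^{d/2}$ (valid across the full range) produces $1/(qd^{d/2})$. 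Their minimum is exactly the $\varepsilon$ claimed in the statement. The $k=0$ case is not covered by Lemma~\ref{lem:cosystolic-abelian-1} but is standard: since the $1$-skeleton of $X$ is a $\lambda$-spectral expander it is connected, so $Z^0(X;G)$ is the set of constant functions; decomposing $f$ into its level sets $A_g = f^{-1}(g)$ and combining Cheeger's inequality on the majority level set with the spectral bound $\Pr_{e=\{u,v\}}[f(u)=f(v)] \le \sum_g p_g^2 + \lambda(1-\sum_g p_g^2)$ gives $\norm{\delta(f)}/\dist(f, Z^0(X;G)) \ge 1-\lambda$, which is $\ge \varepsilon$ for sufficiently small $\lambda$.

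For condition~(2), Lemma~\ref{lem:cosystolic-abelian-2} applied to any $f \in Z^k(X;G) \setminus B^k(X;G)$ with $0 \le k \le d-1$ directly yields $\norm{f} \ge \eta^{2^d-1} = \mu$; this covers the range $0 \le k \le d-2$ that cosystolic expansion of the $(d-1)$-skeleton requires. The only real non-routine point is the $k=0$ case of condition~(1), and that is handled by the Cheeger-type argument above; everything else is parameter bookkeeping and the uniform estimate $\binom{d}{k+1} \le d^{d/2}$. I therefore do not anticipate a substantive obstacle: the technical core has already been discharged in the two lemmas, and what remains is to package them at a single uniform choice of $(\varepsilon,\mu)$.
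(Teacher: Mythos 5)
Your proposal is correct and follows essentially the same route as the paper: the paper's own proof of this theorem is literally ``Immediate from lemmas~\ref{lem:cosystolic-abelian-1} and~\ref{lem:cosystolic-abelian-2},'' together with the same bookkeeping you do (maximizing $2^{k+2}-1$ at $k=d-2$ and bounding $\binom{d}{k+1}\le d^{d/2}$). The one place you go beyond the paper is that you notice the range $1\le k\le d-2$ in Lemma~\ref{lem:cosystolic-abelian-1} leaves out $k=0$, whereas the definition of cosystolic expansion for the $(d-1)$-skeleton does require condition~(1) for all $0\le k\le d-2$. The paper's ``immediate'' claim glosses over this; you supply the missing $k=0$ case with a Cheeger argument on the level sets $A_g = f^{-1}(g)$, observing that for a connected $\lambda$-expander underlying graph $Z^0(X;G)$ is the constant functions, $\dist(f,Z^0)=1-\max_g\norm{A_g}$, and $\norm{\delta(f)}\ge(1-\lambda)\bigl(1-\sum_g\norm{A_g}^2\bigr)\ge(1-\lambda)\bigl(1-\max_g\norm{A_g}\bigr)$, giving the ratio at least $1-\lambda\ge\varepsilon$. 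This is the right fix and it is correct; it makes the argument complete rather than merely sketched.
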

\begin{proof}
	Immediate from lemmas~\ref{lem:cosystolic-abelian-1} and~\ref{lem:cosystolic-abelian-2}.
\end{proof}

\section{Result for non-abelian groups}
\subsection{Non-abelian groups}
When the group is non-abelian, the coboundary operator is defined only in dimensions $0$ and $1$, and its definition is more delicate. Let $G$ be a group with a multiplicative operation. The coboundary of a $0$-cochain $f \in C^0(X;G)$ is a $1$-cochain $\delta(f)$ defined by
$$\delta(f)(u,v) = f(u)f(v)^{-1}.$$
The coboundary of a $1$-cochain $g \in C^1(X;G)$ is a $2$-cochain $\delta(g)$  defined by
$$\delta(g)(u,v,w) = g(u,v)g(v,w)g(w,u).$$
One can check that for $f \in C^i(X;G)$, $i \in \{0,1\}$, $\delta(f)$ is an antisymmetric function, i.e., $\delta(f)$ is an $(i+1)$-cochain.

The distance between two cochains $f, g \in C^i(X;G)$ is defined by $\dist(f,g) = \norm{gf^{-1}}$, where $gf^{-1}(\sigma) = g(\sigma)f(\sigma)^{-1}$ for every $\sigma \in \vec{X}(i)$.

Similar to the abelian case, we say that $f \in C^i(X;G)$ is a cocycle if $\delta(f) = \mathbf{1}$.\footnote{It is common to denote the identity element of a multiplicative group by $1$ and not by $0$ as in an additive group.} The distance of a cochain $f \in C^i(X;G)$ from the $i$-cocycles is defined by $$\dist(f, Z^i(X;G)) = \min\{\dist(f,g) \;|\; g \in Z^i(X;G) \}.$$

In order to measure the distance of a $1$-cochain from the $1$-coboundaries, an action of $C^0(X;G)$ on $C^1(X;G)$ is defined, where for $f \in C^0(X;G)$ and $g \in C^1(X;G)$, the definition of $f.g$ is
$$f.g(u,v) = f(u)g(u,v)f(v)^{-1}.$$
Then, the distance of $g$ from the $1$-coboundaries is defined by
$$\dist(g,B^1(X;G)) = \min\{\dist(g,f.g) \;|\; f \in C^0(X;G) \}.$$

\subsection{Weakly-non-local sets and cosystolic expansion}
In the case of a non-abelian group, we cannot get the same non-local property as in abelian groups, rather we get a slightly weaker notion which we call \emph{weakly-non-local}. Roughly speaking, a set of $k$-faces in a given complex is weakly-non-local if its $k$-faces are evenly distributed on their $(k-2)$-subfaces.

\begin{definition}\label{def-3-formal}[Weakly-non-local sets]
	Let $X$ be a $d$-dimensional simplicial complex and $0 < \eta,\varepsilon,\alpha<1$. For any set of $k$-faces $A \subseteq X(k)$, $1 \le k \le d-1$, we define the following set of $(k-2)$-faces:
	$$S_{k-2} = \{\sigma \in X(k-2) \;|\; \norm{A_\sigma} \le \eta \}.$$
	We say that $A$ is $(\eta,\varepsilon,\alpha)$-weakly-non-local if $\norm{S_{k-2}} \ge 1-\varepsilon\norm{A}$ and for every $\tau \in X(k-1)$ it holds that $\norm{A_\tau} \le 1-\alpha$.
\end{definition}

We show that this weakly-non-local property also implies that the set is $\delta_1$-expanding.

\begin{theorem}[Weakly-non-local sets are $\delta_1$-expanding]\label{weakly-balanced-implies-large-delta1}
	Let $X$ be a $d$-dimensional $\lambda$-local spectral expander and $0 < \eta,\varepsilon,\alpha<1$. There exists a constant $c = c(d,\lambda,\eta,\varepsilon,\alpha)$ such that for any $A \subseteq X(k)$, $1 \le k \le d-1$, if $A$ is $(\eta,\varepsilon,\alpha)$-weakly-non-local then
	$$\norm{\delta_1(A)} \ge c\norm{A}.$$
	In particular, if $\varepsilon \le \alpha/3d^3$, $\lambda \le \varepsilon^2$ and $\eta \le \varepsilon^3$ then $\norm{\delta_1(A)} \ge \alpha\norm{A}$.
\end{theorem}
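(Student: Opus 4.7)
The plan is to mirror the proof of Theorem~\ref{balanced-implies-large-delta1-formal} while descending one dimension further, since in the weakly-non-local setting the thinness hypothesis $\norm{A_\rho}\le\eta$ lives at the $(k-2)$-face level rather than at the $(k-1)$-face level, and the $(k-1)$-level only provides the degree bound $\norm{A_\tau}\le 1-\alpha$. I would define three auxiliary sets of $(k+1)$-faces:
$$\Gamma(A) = \{\tau \in X(k+1) \;|\; \exists\,\sigma \in A,\ \sigma \subset \tau\},$$
$$\Gamma(A,\overline{S_{k-2}}) = \{\tau \in X(k+1) \;|\; \exists\,\sigma \in A,\ \exists\,\rho \in \overline{S_{k-2}},\ \rho \subset \sigma \subset \tau\},$$
$$\Upsilon = \{\tau \in X(k+1) \;|\; \exists\text{ distinct }\sigma_1,\sigma_2 \in A\text{ with }\sigma_1,\sigma_2 \subset \tau,\ \exists\,\rho \in S_{k-2},\ \rho \subset \sigma_1 \cap \sigma_2\}.$$
Any $\tau \in \Gamma(A)\setminus(\Gamma(A,\overline{S_{k-2}}) \cup \Upsilon)$ must contain exactly one $k$-face from $A$: if it contained two, then since $\tau \notin \Gamma(A,\overline{S_{k-2}})$ every $(k-2)$-subface of both $A$-faces lies in $S_{k-2}$, in particular every $(k-2)$-subface of their intersection, placing $\tau \in \Upsilon$. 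Hence
$$\norm{\delta_1(A)} \ge \norm{\Gamma(A)} - \norm{\Gamma(A,\overline{S_{k-2}})} - \norm{\Upsilon}.$$

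The first two terms follow the abelian template closely. The inequality $\norm{\Gamma(A)} \ge \norm{A}$ is proved exactly as in~\eqref{balanced-implies-large-delta1-eq2}. The second term is bounded by a $\mathrm{poly}(d)$ multiple of $\norm{(A,\overline{S_{k-2}})}$ via a conditional-probability manipulation paralleling~\eqref{balanced-implies-large-delta1-eq3}; combined with the hypothesis $\norm{\overline{S_{k-2}}} \le \varepsilon\norm{A}$ this gives $\norm{\Gamma(A,\overline{S_{k-2}})} \le \mathrm{poly}(d)\,\varepsilon\norm{A}$.

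The crux is the bound on $\norm{\Upsilon}$, which must weave together both weakly-non-local hypotheses. For each $\rho \in S_{k-2}$, descend into the link $X_\rho$: the set $A_\rho$ appears as an \emph{edge} set of weight at most $\eta$, and each $\tau \in \Upsilon$ witnessed by $\rho$ becomes a triangle in $X_\rho$ containing at least two $A_\rho$-edges, which necessarily meet at some vertex $v$. The bad triangles through $v$ are in bijection with edges of $X_{\rho\cup v}$ whose both endpoints lie in $A_{\rho\cup v}$, so Lemma~\ref{many-outgoing-edges}(2) applied in the $\lambda$-spectral expander $X_{\rho\cup v}$ bounds their fraction by $(\norm{A_{\rho\cup v}}+\lambda)\norm{A_{\rho\cup v}}$. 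At this stage both hypotheses enter jointly: the $(k-1)$-face bound gives $\norm{A_{\rho\cup v}} \le 1-\alpha$, the identity $\sum_v \Pr[v]\,\norm{A_{\rho\cup v}} = \norm{A_\rho}$ with $\norm{A_\rho} \le \eta$ controls the aggregate in the link, and the mutual-weight decomposition $\sum_\rho \norm{(A,\rho)} \le \norm{A}$ handles the outer sum. A careful split of the aggregate---using the $\eta$-bound for $\rho \in S_{k-2}$ and the $\varepsilon\norm{A}$-smallness of $\overline{S_{k-2}}$ for the exceptional part---yields $\norm{\Upsilon} \le \mathrm{poly}(d)(\varepsilon + \eta + \lambda)\norm{A}$.

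Combining the three pieces yields $\norm{\delta_1(A)} \ge c\norm{A}$ with $c=c(d,\lambda,\eta,\varepsilon,\alpha)>0$, and the parameter choices $\varepsilon \le \alpha/3d^3$, $\lambda \le \varepsilon^2$, $\eta \le \varepsilon^3$ suffice to force the total error below $1-\alpha$, giving the claimed $c \ge \alpha$. The main obstacle is obtaining a tight enough bound on $\norm{\Upsilon}$: because the non-abelian setting deprives us of $(k-1)$-thinness, the abelian one-level descent used in Theorem~\ref{balanced-implies-large-delta1-formal} is insufficient, and instead one needs a two-level descent (first into $X_\rho$ to exploit the $(k-2)$-thinness, then into $X_{\rho\cup v}$ to exploit the $(k-1)$-degree bound via a spectral count), with the two constraints combined carefully so that the final rate does not pick up a stray factor of $(1-\alpha)$ that would spoil the ``in particular'' regime.
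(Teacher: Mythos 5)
The decomposition you borrow from Theorem~\ref{balanced-implies-large-delta1-formal}, namely $\norm{\delta_1(A)} \ge \norm{\Gamma(A)} - \norm{\Gamma(A,\overline{S_{k-2}})} - \norm{\Upsilon}$, is not what the paper uses here, and I believe it cannot be made to work. The problem is the bound you claim on $\norm{\Upsilon}$. Trace through your two-level descent: for a fixed $\rho \in S_{k-2}$ and a vertex $v$ of $X_\rho$, Lemma~\ref{many-outgoing-edges}(2) gives $\norm{E(A_{\rho\cup v})} \le (\norm{A_{\rho\cup v}}+\lambda)\norm{A_{\rho\cup v}}$, and the only control on the first factor is the degree hypothesis $\norm{A_{\rho\cup v}}\le 1-\alpha$. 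Summing over $v$ with the identity $\sum_v \Pr[v]\norm{A_{\rho\cup v}} = \norm{A_\rho}$ gives a per-link bound of order $(1-\alpha+\lambda)\norm{A_\rho}$, and summing over $\rho \in S_{k-2}$ gives $\sum_\rho \Pr[\rho]\norm{A_\rho} \le \norm{A}$. The conditioning step $\Pr[\sigma_{k-2}=\rho \mid \sigma_{k+1}=\tau] = \binom{k+2}{k-1}^{-1}$ introduces a multiplicative $\binom{k+2}{k-1}$, so the best you can extract is $\norm{\Upsilon} \lesssim \binom{k+2}{k-1}(1-\alpha+\lambda)\norm{A}$. This is \emph{not} of the form $\mathrm{poly}(d)(\varepsilon+\eta+\lambda)\norm{A}$ as you assert: the factor $(1-\alpha)$ is a constant bounded away from $0$ in the regime of interest ($\alpha\le 1/|G|$), and once it is multiplied by $\binom{k+2}{k-1} > 1$ the right-hand side exceeds $\norm{A}$, making the final lower bound on $\norm{\delta_1(A)}$ vacuous. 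There is no "careful split" that recovers smallness here, because the $(k-2)$-thinness $\norm{A_\rho}\le\eta$ already appears linearly through $\sum_\rho\Pr[\rho]\norm{A_\rho}=\norm{A}$ and cannot be spent a second time on the second moment $\sum_v\Pr[v]\norm{A_{\rho\cup v}}^2$, which is what the Cheeger internal-edge bound actually asks for; the only available control on the second moment is the $(1-\alpha)$ cap, and that is the factor you were trying to avoid.

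The paper circumvents this by \emph{not} subtracting $\Upsilon$ as a lossy error. Lemma~\ref{delta1-decomposition-to-(k-1)-faces} starts from the exact identity $\sum_i i(2-i)\norm{\delta_i(A)} = \sum_i i(k+2-i)\norm{\delta_i(A)} - k\sum_i i\norm{\delta_i(A)}$ and converts $\sum_i i(k+2-i)\norm{\delta_i(A)}$ into $\binom{k+2}{k}\sum_{\sigma\in X(k-1)}\norm{(E(A_\sigma,\overline{A_\sigma}),\sigma)}$ by an exact double count, so no combinatorial factor is lost. Crucially, this turns the $(k-1)$-level degree bound $\norm{A_\sigma}\le 1-\alpha$ into a Cheeger-type \emph{lower} bound $\norm{E(A_\sigma,\overline{A_\sigma})}\ge 2(1-\lambda)\alpha\norm{A_\sigma}$ on the \emph{outgoing} edges rather than an upper bound on internal edges. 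The resulting cancellation $(1-\alpha)\cdot\tfrac{k}{k+1}-\left(\tfrac{k}{k+1}-\alpha\right)=\tfrac{\alpha}{k+1}$ is what delivers the final rate $\alpha$, and this cancellation is simply not available in the $\Gamma$-minus-$\Upsilon$ decomposition. The paper's $\Upsilon$ (defined in Lemma~\ref{small-degenerate-faces}) is a set of $k$-faces, not $(k+1)$-faces, and is used only to show $\norm{(A,S_{k-1})}\approx\tfrac{k}{k+1}\norm{A}$; it never has to absorb a $(1-\alpha)$ loss. You correctly identified that the $(k-1)$-thinness must be traded for a degree bound, but the machinery needed to exploit a degree bound is a lower bound on the edge boundary, not an upper bound on internal edges, and that requires the $i(2-i)$ identity rather than the inclusion--exclusion you propose.
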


In order to prove theorem~\ref{weakly-balanced-implies-large-delta1} we need a few more definitions and lemmas.
Recall that $S_{k-2}$ is the set of $(k-2)$-faces $\sigma$ satisfying $\norm{A_\sigma} \le \eta$. We also define the following set of $(k-1)$-faces:
$$S_{k-1} = \{\sigma \in X(k-1) \;|\; \norm{A_\sigma} \le \eta^{1/3} \}.$$

We first show that $\norm{\delta_1(A)}$ can be bounded in terms of $\norm{(A,S_{k-1})}$.

\begin{lemma}\label{delta1-decomposition-to-(k-1)-faces}
	Let $X$ be a $d$-dimensional $\lambda$-local spectral expander. For any $A \subseteq X(k)$, $0 \le k \le d-1$, if for every $\sigma \in X(k-1)$ it holds that $\norm{A_\sigma} \le 1-\alpha$ then
	$$\norm{\delta_1(A)} \ge
	(k+1)(k+2)\left((1-\lambda)(1-\alpha-\eta^{1/3})\norm{(A,S_{k-1})} - \Big(\frac{k}{k+1}-(1-\lambda)\alpha\Big)\norm{A} \right).$$
\end{lemma}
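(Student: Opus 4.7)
The plan is to express $\norm{\delta_1(A)}$ through a weighted count of ``unique-like'' edges inside each $(k-1)$-link, combining the Cheeger inequality (lemma~\ref{many-outgoing-edges}) applied in each link with a double-counting argument over $(k+1)$-faces. For each $\sigma \in X(k-1)$, localize $A$ in $X_\sigma$ to obtain the vertex set $A_\sigma \subseteq X_\sigma(0)$, and define $E_1(\sigma) \subseteq X_\sigma(1)$ to be the set of edges with exactly one endpoint in $A_\sigma$. The key object is the weighted quantity
\[
Q \;:=\; \sum_{\sigma \in X(k-1)} \Pr[\sigma_{k-1}=\sigma]\,\norm{E_1(\sigma)},
\]
which I will bound from below using Cheeger and from above using a combinatorial count.

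First I would lower-bound $Q$ via lemma~\ref{many-outgoing-edges} applied to the underlying graph of each link (which is a $\lambda$-spectral expander), giving $\norm{E_1(\sigma)} \ge 2(1-\lambda)\norm{A_\sigma}\norm{\overline{A_\sigma}}$. Split the sum according to whether $\sigma \in S_{k-1}$: for $\sigma \in S_{k-1}$ we have $\norm{\overline{A_\sigma}} \ge 1-\eta^{1/3}$, while for $\sigma \notin S_{k-1}$ the hypothesis $\norm{A_\sigma} \le 1-\alpha$ yields $\norm{\overline{A_\sigma}} \ge \alpha$. Combined with the telescoping identity $\sum_\sigma \Pr[\sigma_{k-1}=\sigma]\norm{A_\sigma} = \norm{A}$, this gives
\[
Q \;\ge\; 2(1-\lambda)\Bigl[\alpha\,\norm{A} + (1-\alpha-\eta^{1/3})\,\norm{(A,S_{k-1})}\Bigr].
\]

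Next I would upper-bound $Q$ by rewriting it as a sum over $(k+1)$-faces. Consistency of the sampling scheme gives $\Pr[\sigma_{k-1}=\sigma,\sigma_{k+1}=\tau] = \Pr[\sigma_{k+1}=\tau]/\binom{k+2}{2}$ whenever $\sigma \subset \tau$, and an edge of $X_\sigma(1)$ corresponds bijectively to a $(k+1)$-face $\tau \supset \sigma$, so
\[
Q \;=\; \sum_{\tau \in X(k+1)} \Pr[\sigma_{k+1}=\tau]\,\frac{N(\tau)}{\binom{k+2}{2}},
\]
where $N(\tau)$ is the number of $(k-1)$-subfaces $\sigma \subset \tau$ such that exactly one of the two $k$-subfaces of $\tau$ containing $\sigma$ lies in $A$. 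Since the $(k-1)$-subfaces of $\tau$ are in bijection with unordered pairs of distinct $k$-subfaces of $\tau$, a direct count yields $N(\tau) = m(\tau)\bigl(k+2-m(\tau)\bigr)$, where $m(\tau)$ is the number of $k$-subfaces of $\tau$ that lie in $A$. In particular $N(\tau) = k+1$ when $\tau \in \delta_1(A)$ (so $m(\tau)=1$) and $N(\tau) \le k\cdot m(\tau)$ when $m(\tau)\ge 2$ (since $k+2-m \le k$). Splitting according to whether $\tau \in \delta_1(A)$ or $m(\tau)\ge 2$ and using the identity $\sum_\tau \Pr[\sigma_{k+1}=\tau]\,m(\tau) = (k+2)\norm{A}$ yields
\[
Q \;\le\; \frac{(k+1)\norm{\delta_1(A)} + k\bigl((k+2)\norm{A}-\norm{\delta_1(A)}\bigr)}{\binom{k+2}{2}} \;=\; \frac{2\,\norm{\delta_1(A)}}{(k+1)(k+2)} + \frac{2k}{k+1}\norm{A}.
\]

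Chaining the two bounds on $Q$ and solving for $\norm{\delta_1(A)}$ produces exactly the claimed inequality. I expect the principal obstacle to be the combinatorial identity $N(\tau) = m(\tau)(k+2-m(\tau))$ together with maintaining the weighted probability bookkeeping consistent between the two natural samplings of a pair $(\sigma,\tau)$ with $\sigma \in X(k-1)$, $\tau \in X(k+1)$, $\sigma\subset\tau$; once those are nailed down, the remaining step is pure arithmetic.
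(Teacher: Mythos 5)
Your proposal is correct and follows essentially the same route as the paper: the quantity $Q$ you introduce is exactly the paper's $\sum_{\sigma}\norm{(E(A_\sigma,\overline{A_\sigma}),\sigma)}$, your Cheeger-based lower bound matches~\eqref{delta1-decomposition-to-(k-1)-faces-eq4}, and your upper bound via $N(\tau)=m(\tau)(k+2-m(\tau))$ together with $N(\tau)\le k\,m(\tau)$ for $m(\tau)\ge 2$ is the same combinatorial identity the paper packages as $\sum_i i(k+2-i)\norm{\delta_i(A)}$ and $\norm{\delta_1(A)}\ge\sum_i i(2-i)\norm{\delta_i(A)}$. The bookkeeping and final arithmetic agree with the paper's.
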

\begin{proof}
	Denote by $\delta_i(A)$ the set of $(k+1)$-faces that contain exactly $i$ faces from $A$. By laws of probability
	\begin{equation}\label{delta1-decomposition-to-(k-1)-faces-eq1}
	\begin{aligned}
	(k+2)\norm{f} &=
	(k+2)\sum_{i=0}^{k+2}\Pr[\sigma_{k+1} \in \delta_i(A) \wedge \sigma_k \in A] \\[5pt]&=
	(k+2)\sum_{i=0}^{k+2}\Pr[\sigma_k \in A \;|\; \sigma_{k+1} \in \delta_i(A)]\cdot\Pr[\sigma_{k+1} \in \delta_i(A)] \\[5pt]&=
	(k+2)\sum_{i=0}^{k+2}\frac{i}{k+2}\norm{\delta_i(A)} =
	\sum_{i=1}^{k+2}i\cdot\norm{\delta_i(A)}.
	\end{aligned}
	\end{equation}
	
	Note that any $\tau \in \delta_i(A)$ contains exactly $i$ $k$-faces which are in $A$ and $k+2-i$ $k$-faces which are not in $A$. Therefore,
	\begin{equation}\label{delta1-decomposition-to-(k-1)-faces-eq2}
	i(k+2-i)\norm{\tau} =
	\sum_{\substack{\sigma,\sigma' \subset \tau\\\sigma \in A,\, \sigma' \notin A}}\norm{\tau} =
	\binom{k+2}{k}\sum_{\substack{\sigma,\sigma' \subset \tau\\\sigma \in A,\, \sigma' \notin A}}\norm{(\tau,\sigma\cap\sigma')}.
	\end{equation}
	
	For any pair $\sigma, \sigma' \subset \tau$ such that $\sigma \in A$ and $\sigma' \notin A$, their intersection is a $(k-1)$-face such that $\tau$ is seen in the link of $\sigma\cap\sigma'$ as an edge between a vertex in $A$ and a vertex not in $A$. Thus, summing~\eqref{delta1-decomposition-to-(k-1)-faces-eq2} over all $1 \le i \le k+2$ and $\tau \in \delta_i(A)$ yields
	\begin{equation}\label{delta1-decomposition-to-(k-1)-faces-eq3}
	\sum_{i=1}^{k+2}i(k+2-i)\norm{\delta_i(A)} =
	\binom{k+2}{k}\sum_{\sigma \in X(k-1)}\norm{(E(A_\sigma,\overline{A_\sigma}), \sigma)}.
	\end{equation}
	
	Since $X$ is a $\lambda$-local spectral expander, it follows that
	\begin{equation}\label{delta1-decomposition-to-(k-1)-faces-eq4}
	\begin{aligned}
	\sum_{\sigma \in X(k-1)}\norm{(E(A_\sigma,\overline{A_\sigma}), \sigma)} &=
	\sum_{\sigma \in S_{k-1}}\norm{(E(A_\sigma,\overline{A_\sigma}), \sigma)} + \sum_{\sigma \in \overline{S_{k-1}}}\norm{(E(A_\sigma,\overline{A_\sigma}), \sigma)}\\[4pt]&\ge
	2(1-\lambda)(1-\eta^{1/3})\norm{(A,S_{k-1})} + 2(1-\lambda)\alpha\norm{(A,\overline{S_{k-1}})}\\[12pt]&=
	2(1-\lambda)(1-\alpha-\eta^{1/3})\norm{(A,S_{k-1})} + 2(1-\lambda)\alpha\norm{A}.
	\end{aligned}
	\end{equation}
	
	Combining all together yields
	\begin{align*}
	\norm{\delta_1(A)} &\ge
	\sum_{i=1}^{k+2}i(2 - i)\norm{\delta_i(A)} \\[2pt]&=
	\sum_{i=1}^{k+2}i(k+2-i)\norm{\delta_i(A)} - k\sum_{i=1}^{k+2}i\cdot\norm{\delta_i(A)} \\[5pt]&=
	\binom{k+2}{k}\sum_{\sigma \in X(k-1)}\norm{(E(A_\sigma,\overline{A_\sigma}), \sigma)} - k(k+2)\norm{A} \\[2pt]&\ge
	\binom{k+2}{k}\Big(2(1-\lambda)(1-\alpha-\eta^{1/3})\norm{(A,S_{k-1})} + 2(1-\lambda)\alpha\norm{A}\Big) - k(k+2)\norm{A} \\[6pt]&=
	(k+1)(k+2)\left((1-\lambda)(1-\alpha-\eta^{1/3})\norm{(A,S_{k-1})} - \Big( \frac{k}{k+1}-(1-\lambda)\alpha\Big)\norm{A} \right),
	\end{align*}
	where the second equality follows by~\eqref{delta1-decomposition-to-(k-1)-faces-eq1} and~\eqref{delta1-decomposition-to-(k-1)-faces-eq3}, and the second inequality follows by ~\eqref{delta1-decomposition-to-(k-1)-faces-eq4}.
\end{proof}

Our aim now is to show that $\norm{(A,S_{k-1})} \approx \norm{A}$.
Denote by $\Upsilon$ the set of $k$-faces $\sigma \in A$ that contain two $(k-1)$-faces $\tau,\tau' \in \overline{S_{k-1}}$ such that $\tau \cap \tau' \in S_{k-2}$. We show next that $\norm{\Upsilon}$ is a negligible fraction of $\norm{A}$.

\begin{lemma}\label{small-degenerate-faces}
	Let $X$ be a $d$-dimensional $\lambda$-local spectral expander. For any $A \subseteq X(k)$, $1 \le k \le d-1$, it holds that
	\begin{equation*}
	\norm{\Upsilon} \le \binom{k+1}{2}\left(\eta^{1/3}+\lambda\eta^{-1/3}\right)\norm{A}.
	\end{equation*}
\end{lemma}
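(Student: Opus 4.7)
The plan is to imitate the strategy used for the analogous abelian lemma (``negligible-degenerate-faces''), but applied at the single pair of dimensions $(k-1,k-2)$ rather than in a telescoping chain. For each $\sigma \in \Upsilon$ I would first \emph{fix} a canonical witness: a pair $\tau,\tau' \subset \sigma$ with $\tau,\tau'\in\overline{S_{k-1}}$ and $\widecheck{\sigma} := \tau\cap\tau'\in S_{k-2}$. Since $\tau\cup\tau'=\sigma$ and $\widecheck{\sigma}$ is a $(k-2)$-face, the face $\sigma$ is seen in the link $X_{\widecheck{\sigma}}$ as a $1$-face (an edge) whose two endpoints lie in $(\overline{S_{k-1}})_{\widecheck{\sigma}}$.

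Next I would rewrite $\norm{\Upsilon}$ using conditional probability by inserting the event $\sigma_{k-2}=\widecheck{\sigma}$:
\begin{equation*}
\norm{\Upsilon} = \sum_{\sigma\in\Upsilon}\Pr[\sigma_k=\sigma] = \binom{k+1}{2}\sum_{\sigma\in\Upsilon}\Pr[\sigma_k=\sigma \mid \sigma_{k-2}=\widecheck{\sigma}]\cdot\Pr[\sigma_{k-2}=\widecheck{\sigma}],
\end{equation*}
where the $\binom{k+1}{2}$ is the reciprocal of the probability of picking the specific $(k-2)$-subface $\widecheck{\sigma}$ of $\sigma$. Because each $\sigma\in\Upsilon$ has been assigned a unique $\widecheck{\sigma}\in S_{k-2}$ and corresponds in the link $X_{\widecheck{\sigma}}$ to an edge with both endpoints in $(\overline{S_{k-1}})_{\widecheck{\sigma}}$, the inner sum for each $\widecheck{\sigma}$ is at most $\norm{E((\overline{S_{k-1}})_{\widecheck{\sigma}})}$.

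Now I would invoke the Cheeger-type bound (lemma~\ref{many-outgoing-edges}, part 2) in the link, giving $\norm{E((\overline{S_{k-1}})_{\widecheck{\sigma}})} \le (\norm{(\overline{S_{k-1}})_{\widecheck{\sigma}}}+\lambda)\norm{(\overline{S_{k-1}})_{\widecheck{\sigma}}}$. The key intermediate step, which I expect to be the only mildly subtle one, is a Markov-style bound relating $\norm{(\overline{S_{k-1}})_{\widecheck{\sigma}}}$ to $\norm{A_{\widecheck{\sigma}}}$. Writing $\norm{A_{\widecheck{\sigma}}}=\sum_v\Pr[\sigma_0=v]\norm{(A_{\widecheck{\sigma}})_v}$ in the link (law of total probability) and using that every $v\in(\overline{S_{k-1}})_{\widecheck{\sigma}}$ satisfies $\norm{(A_{\widecheck{\sigma}})_v}=\norm{A_{\widecheck{\sigma}\cup\{v\}}}>\eta^{1/3}$, I get $\norm{(\overline{S_{k-1}})_{\widecheck{\sigma}}}\le \norm{A_{\widecheck{\sigma}}}/\eta^{1/3}$. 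Combined with $\norm{A_{\widecheck{\sigma}}}\le\eta$ (since $\widecheck{\sigma}\in S_{k-2}$), this yields
\begin{equation*}
\norm{E((\overline{S_{k-1}})_{\widecheck{\sigma}})} \le \bigl(\eta^{2/3}+\lambda\bigr)\cdot\frac{\norm{A_{\widecheck{\sigma}}}}{\eta^{1/3}} = \bigl(\eta^{1/3}+\lambda\eta^{-1/3}\bigr)\norm{A_{\widecheck{\sigma}}}.
\end{equation*}

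Finally I would plug this back in and apply the decomposition $\sum_{\widecheck{\sigma}}\norm{A_{\widecheck{\sigma}}}\cdot\Pr[\sigma_{k-2}=\widecheck{\sigma}] = \sum_{\widecheck{\sigma}}\norm{(A,\widecheck{\sigma})} \le \norm{A}$ to obtain exactly $\norm{\Upsilon}\le\binom{k+1}{2}(\eta^{1/3}+\lambda\eta^{-1/3})\norm{A}$. The whole argument is really just (conditional probability) $+$ (Cheeger in the link) $+$ (Markov via the definition of $\overline{S_{k-1}}$); the one place to be careful is ensuring that fixing a \emph{single} witness $\widecheck{\sigma}$ per $\sigma\in\Upsilon$ is what lets us bound the sum by $\norm{E((\overline{S_{k-1}})_{\widecheck{\sigma}})}$ without overcounting.
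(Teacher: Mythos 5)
Your proof is correct and follows essentially the same route as the paper: fix one canonical $(k-2)$-witness per face in $\Upsilon$, group by the witness, rewrite $\norm{\Upsilon}$ via conditioning on $\sigma_{k-2}$, bound each group by the Cheeger estimate $\norm{E((\overline{S_{k-1}})_\tau)}\le(\norm{(\overline{S_{k-1}})_\tau}+\lambda)\norm{(\overline{S_{k-1}})_\tau}$, and use the Markov relation $\eta^{1/3}\norm{(\overline{S_{k-1}})_\tau}\le\norm{A_\tau}\le\eta$ for $\tau\in S_{k-2}$. The only (purely cosmetic) difference is in the final aggregation: you convert $\norm{(\overline{S_{k-1}})_{\widecheck\sigma}}$ to $\norm{A_{\widecheck\sigma}}/\eta^{1/3}$ termwise and then sum $\norm{(A,\widecheck\sigma)}\le\norm{A}$, whereas the paper first sums to get $(\eta^{2/3}+\lambda)\norm{\overline{S_{k-1}}}$ and then applies the global bound $\eta^{1/3}\norm{\overline{S_{k-1}}}\le\norm{A}$; both give the same constant.
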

\begin{proof}
	For each $\sigma \in \Upsilon$ there exists at least one $(k-2)$-face $\tau \subset \sigma$, $\tau \in S_{k-2}$, such that in the link of $\tau$, $\sigma$ is an edge between two vertices from $\overline{S_{k-1}}$. Fix for each $\sigma$ such a face $\tau$ and denote by $\Upsilon(\tau)$ the $k$-faces in $\Upsilon$ which fixed $\tau$. It follows that
	\begin{align*}
	\norm{\Upsilon} &=
	\sum_{\tau \in S_{k-2}}\norm{\Upsilon(\tau)} =
	\binom{k+1}{k-1}\sum_{\tau \in S_{k-2}}\norm{(\Upsilon(\tau),\tau)} \\&\le
	\binom{k+1}{k-1}\sum_{\tau \in S_{k-2}}\norm{(E((\overline{S_{k-1}})_\tau),\tau)} \\&\le
	\binom{k+1}{k-1}\sum_{\tau \in S_{k-2}}(\norm{(\overline{S_{k-1}})_\tau} + \lambda)\norm{((\overline{S_{k-1}})_\tau, \tau)} \\&\le
	\binom{k+1}{k-1}\left(\eta^{2/3} + \lambda\right)\norm{\overline{S_{k-1}}}
	\le \binom{k+1}{k-1}\left(\eta^{2/3}+\lambda\right)\eta^{-1/3}\norm{A},
	\end{align*}
	where the second inequality follows by lemma~\ref{many-outgoing-edges}, the third inequality follows since for any $\tau \in S_{k-2}$ it holds that
	$$\eta^{1/3}\norm{(\overline{S_{k-1}})_\tau} \le
	\norm{(A,\overline{S_{k-1}})_\tau} \le
	\norm{A_\tau} \le
	\eta,$$
	and the fourth inequality follows since
	$$\eta^{1/3}\norm{\overline{S_{k-1}}} \le \norm{(A,\overline{S_{k-1}})} \le \norm{A}.$$
\end{proof}

We can now prove theorem~\ref{weakly-balanced-implies-large-delta1}.
\begin{proof}[Proof of theorem~\ref{weakly-balanced-implies-large-delta1}]
	Recall that every $k$-face contains $k+1$ faces of dimension $k-1$. Denote by $A' \subseteq A$ the set of $k$-faces of $A$ for which at least $k$ out of their $(k-1)$-faces belong to $S_{k-1}$. Note that
	\begin{equation*}
	\norm{(A',S_{k-1})} =
	\Pr[\sigma_{k-1} \in S_{k-1} \;|\; \sigma_k \in A']\cdot\Pr[\sigma_k \in A'] \ge
	\frac{k}{k+1}\norm{A'}.
	\end{equation*}
	
	Denote by $\Gamma \subseteq X(k)$ the set of $k$-faces that contain at least one $(k-2)$-face from $\overline{S_{k-2}}$. Note that $A \setminus (\Gamma \cup \Upsilon) \subseteq A'$. Thus,
	\begin{equation*}
	\norm{A'} \ge \norm{A} - \norm{\Gamma} - \norm{\Upsilon} \ge
	\left(1- \binom{k+1}{2}(\varepsilon+\eta^{1/3} + \lambda\eta^{-1/3})\right)\norm{A},
	\end{equation*}
	where the second inequality follows since $A$ is $(\eta,\varepsilon,\alpha)$-weakly-non-local and by lemma~\ref{small-degenerate-faces}. It follows that
	\begin{equation}\label{weakly-balanced-implies-large-delta1-eq1}
	\norm{(A,S_{k-1})} \ge
	\norm{(A',S_{k-1})} \ge
	\frac{k}{k+1}\left(1- \binom{k+1}{2}(\varepsilon+\eta^{1/3} + \lambda\eta^{-1/3})\right)\norm{A}.
	\end{equation}
	
	%	Define $c$ by
	%	$$c=(k+1)(k+2)\left((1-\lambda)(1-\alpha-\eta^{1/3})\frac{k}{k+1}\left(1 - \binom{k+1}{2}(\varepsilon+\eta^{1/3}+\lambda\eta^{-1/3})\right) -\frac{k}{k+1} + (1-\lambda)\alpha \right).$$
	
	Substituting~\eqref{weakly-balanced-implies-large-delta1-eq1} in lemma~\ref{delta1-decomposition-to-(k-1)-faces} completes the proof.
	%yields that $\norm{\delta_1(f)} \ge c\norm{f}$ as required.
\end{proof}

An immediate corollary of theorem~\ref{weakly-balanced-implies-large-delta1} is that any weakly-non-local set for which its $\delta_1$ is zero must be empty.
\begin{corollary}\label{balanced-coboundaries-vanish}
	For any $d \in \mathbb{N}$ and $0 < \lambda,\eta,\varepsilon,\alpha < 1$ such that $\varepsilon \le \alpha/3d^3$, $\lambda \le \varepsilon^2$ and $\eta \le \varepsilon^3$ the following holds. Let $X$ be a $d$-dimensional $\lambda$-local spectral expander. For any $A \subseteq X(k)$, $1 \le k \le d-1$, if $A$ is $(\eta,\varepsilon,\alpha)$-weakly-non-local and $\norm{\delta_1(A)} = 0$ then $A=\emptyset$.
\end{corollary}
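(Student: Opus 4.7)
The plan is to deduce the corollary as an immediate consequence of Theorem~\ref{weakly-balanced-implies-large-delta1}. The parameter constraints imposed in the hypothesis of the corollary ($\varepsilon \le \alpha/3d^3$, $\lambda \le \varepsilon^2$, and $\eta \le \varepsilon^3$) are precisely those that appear in the ``in particular'' clause of that theorem, which guarantees the clean bound $\norm{\delta_1(A)} \ge \alpha \norm{A}$ for any $(\eta,\varepsilon,\alpha)$-weakly-non-local set $A$. So first I would verify that the hypotheses of Theorem~\ref{weakly-balanced-implies-large-delta1} are indeed satisfied by the setup of the corollary, which is direct: $X$ is a $d$-dimensional $\lambda$-local spectral expander, $A \subseteq X(k)$ with $1 \le k \le d-1$, and $A$ is assumed $(\eta,\varepsilon,\alpha)$-weakly-non-local.

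Having established applicability, the inequality $\norm{\delta_1(A)} \ge \alpha\norm{A}$ combined with the assumption $\norm{\delta_1(A)} = 0$ yields $\alpha\norm{A} \le 0$. Since $\alpha > 0$, this forces $\norm{A} = 0$. Because the weight $\norm{A}$ is a sum of strictly positive probabilities over the faces in $A$ (the distribution $P_k$ is induced from the uniform distribution on $X(d)$ and every $k$-face is contained in some $d$-face), vanishing weight means the set itself is empty, i.e.\ $A = \emptyset$.

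There is no real obstacle here: the genuine work is entirely in Theorem~\ref{weakly-balanced-implies-large-delta1} (whose proof in turn rests on the decomposition Lemma~\ref{delta1-decomposition-to-(k-1)-faces} and the control of the degenerate set $\Upsilon$ in Lemma~\ref{small-degenerate-faces}). The corollary is just the contrapositive phrasing, useful for downstream applications where one wants to certify that a weakly-non-local set with no unique-neighbor in the next dimension cannot exist. Accordingly, I would present the proof as a one-line deduction, noting explicitly the match between the corollary's hypotheses and the ``in particular'' line of the theorem so the reader sees that no additional computation is needed.
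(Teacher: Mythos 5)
Your proof is correct and matches the paper's argument exactly: both apply the ``in particular'' clause of Theorem~\ref{weakly-balanced-implies-large-delta1} to obtain $\norm{\delta_1(A)} \ge \alpha\norm{A}$, then combine with the hypothesis $\norm{\delta_1(A)} = 0$ and $\alpha > 0$ to conclude $\norm{A} = 0$, hence $A = \emptyset$. No further comment needed.
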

\begin{proof}
	Since $A$ is $(\eta,\varepsilon,\alpha)$-weakly-non-local, by theorem~\ref{weakly-balanced-implies-large-delta1} it holds that
	\begin{equation*}
	\norm{\delta_1(A)} \ge \alpha\norm{A}.
	\end{equation*}
	
	On the other hand, $\norm{\delta_1(A)} = 0$. It follows that $\norm{A} = 0$, i.e., $A = \emptyset$ as required.
\end{proof}

\subsection{The correction procedure}
%In the case that the group is not abelian, for $f \in C^2(X)$, $\delta(f)$ is not defined, so we cannot show that $f$ is balanced. Instead, when $f = \delta(g)$ for $g \in C^1(X)$ we can show that $f$ is weakly balanced.
From now on we focus on $3$-dimensional complexes and we want to show cosystolic expansion for $1$-cochains. The way we do it is by showing a correction procedure for small $2$-coboundaries. The algorithm gets a $1$-cochain $f$ such that $\norm{\delta(f)}$ is small and returns a cochain $f'$ by making a few changes to $f$ such that $\delta(f')$ is weakly-non-local.

We start by showing that any small and locally minimal $2$-coboundary is weakly-non-local. Then we will show an algorithm that is making a few local changes to a cochain in order to make its coboundary locally minimal and hence weakly-non-local.

\subsubsection{Small and locally minimal coboundaries are weakly-non-local}
We begin by proving the following proposition.
\begin{proposition}[Small and locally minimal $2$-coboundaries are weakly-non-local]\label{non-abelian-thin-in-dimension-0}
	For any $0<\beta,\eta,\varepsilon<1$ and a group $G$ there exists $\lambda \le O(\beta^2\eta^2\varepsilon)$ such that the following holds. Let $X$ be a $3$-dimensional $\lambda$-local spectral expander with $\beta$-coboundary expanding links over $G$. For any $f \in B^2(X;G)$, if $\norm{f} \le \beta\eta/2$ and locally minimal then $f$ is $(\eta,\varepsilon,1/|G|)$-weakly-non-local.
	
	%	For any $0 < \beta < 1$ and a group $G$, there exist $0 < \lambda, \eta, \varepsilon < 1$ such that the following holds. Let $X$ be a $3$-dimensional $\lambda$-one-sided local spectral expander with $\beta$-coboundary expanding links. For any $f \in C^1(X,G)$, if $\norm{\delta(f)} \le ???$ and locally minimal then $\delta(f)$ is $(\eta,\varepsilon,1/|G|)$-weakly balanced.
	
	%	Let $X$ be a $3$-dimensional $\lambda$-one-sided local spectral expander such that its links are $\beta$-coboundary expanders, and $0 < \eta < 1$ a constant. If $\lambda \le \beta^2\eta^7/12$ then for any $f = \delta(g)$, where $g \in C^1(X)$, such that $f$ is locally minimal and $\norm{f} \le \beta\eta^3/2$ it holds that
	%	$$\norm{S_{k-2}} \ge (1-\eta)\norm{f}.$$
\end{proposition}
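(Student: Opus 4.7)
The two conditions of $(\eta,\varepsilon,1/|G|)$-weak non-locality are handled separately: the edge bound $\norm{f_\tau} \le 1-1/|G|$ for every $\tau \in X(1)$ follows from local minimality alone, while the sparsity bound $\norm{\overline{S_0}} \le \varepsilon\norm{f}$ requires combining coboundary expansion of links with a spectral concentration step, which is what forces $\lambda$ to be as small as $O(\beta^2\eta^2\varepsilon)$.

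For the edge bound, fix $\tau = \{u,v\}$ and view $\tau$ as the vertex $v$ in $X_u$. Local minimality of $f_u$ in $X_u$ gives $\norm{f_u} \le \norm{h_g.f_u}$ for every $g \in G$, where $h_g \in C^0(X_u;G)$ is the $0$-cochain taking value $g$ at $v$ and the identity at every other vertex. Since $(h_g.f_u)(v,w) = g\cdot f_u(v,w)$ on edges of $X_u$ touching $v$ and agrees with $f_u$ off those edges, this inequality specializes to
$$\Pr_w[f_u(v,w)=e] \ge \Pr_w[f_u(v,w)=g^{-1}]$$
for every $g$, where $w$ ranges over $X_\tau(0)$. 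Summing over all $g \in G$ yields $|G|\Pr_w[f_u(v,w)=e] \ge 1$, and the identification $f_u(v,w)=f_\tau(w)$ gives $\norm{f_\tau} \le 1-1/|G|$.

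For the sparsity bound, writing $f=\delta(g)$ and using antisymmetry of $g$, a direct computation on any triangle $(w_1,w_2,w_3)$ of $X_v$ yields the key identity
$$\delta(f_v)(w_1,w_2,w_3) = g_v(w_1)\cdot f|_{X_v}(w_1,w_2,w_3)\cdot g_v(w_1)^{-1},$$
so pointwise conjugation preserves identity elements and $\norm{\delta(f_v)}=\norm{f|_{X_v}}$. The $\beta$-coboundary expansion of $X_v$ together with the fact that $f_v$ is minimal in $X_v$ then gives $\norm{f_v} \le \norm{f|_{X_v}}/\beta$, so every $v \in \overline{S_0}$ satisfies $\norm{f|_{X_v}} \ge \beta\eta$. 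A naive Markov bound on $\sum_v P_0(v)\norm{f|_{X_v}} = \norm{f}$ only produces $\norm{\overline{S_0}} \le \norm{f}/(\beta\eta) \le 1/2$, which is too weak. To sharpen it I would apply the expander mixing lemma to the weighted bipartite graph $H$ on $X(0)\cupdot X(2)$ whose edges are the pairs $(v,\tau)$ with $\{v\}\cup\tau \in X(3)$, weighted by the distribution induced from $P_3$. By the standard swap-walk spectral analysis for a $\lambda$-local spectral expander, the non-trivial singular values of $H$ satisfy $\lambda' = O(\sqrt{\lambda})$. Taking $B=\overline{S_0}$ and $F=\supp(f)$, the edge density $e(B,F) = \sum_{v \in \overline{S_0}} P_0(v)\norm{f|_{X_v}}$ is at least $\beta\eta\norm{\overline{S_0}}$ from below and at most $\norm{\overline{S_0}}\norm{f} + \lambda'\sqrt{\norm{\overline{S_0}}\norm{f}}$ from mixing; using $\norm{f} \le \beta\eta/2$ to absorb the first upper-bound term into half of the lower bound yields $\tfrac{1}{2}\beta\eta\norm{\overline{S_0}} \le \lambda'\sqrt{\norm{\overline{S_0}}\norm{f}}$, and rearranging gives $\norm{\overline{S_0}} \le 4\lambda'^2\norm{f}/(\beta\eta)^2 = O(\lambda)\norm{f}/(\beta^2\eta^2)$. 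Choosing $\lambda \le O(\varepsilon\beta^2\eta^2)$ therefore forces $\norm{\overline{S_0}} \le \varepsilon\norm{f}$ and completes the proposition.

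The main obstacle is the swap-walk bound $\lambda' = O(\sqrt{\lambda})$ on the vertex--triangle bipartite graph, which relies on the full strength of local spectral expansion and has no counterpart in the purely iterative abelian strategy of Proposition~\ref{small-locally-minimal-cocycles-are-balanced}; without it, the Markov estimate above is the best one obtains and is independent of $\lambda$, so no amount of shrinking $\lambda$ helps unless spectral concentration is exploited.
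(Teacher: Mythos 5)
Your proposal is correct and follows essentially the same route as the paper's proof, which relies on Lemma~\ref{in-front-seen-with-right-fraction} (the spectral concentration bound for restrictions $f^v$) and Lemma~\ref{bound-near-by-in-front} (relating $\norm{f_v}$ to $\norm{f^v}$ via coboundary expansion of $X_v$). Two small stylistic differences are worth noting. First, for the edge bound your single-vertex shift argument (maximizing over $g\in G$ and averaging) is a clean rephrasing of the paper's ``shift all vertices of $X_e$ by the most common value'' argument; both hinge only on local minimality. Second, for the vertex-sparsity bound the paper derives $\norm{f_v}\le\beta^{-1}\norm{f^v}$ by applying coboundary expansion to $g^v$ and transporting it to $f_v$ via the translation $g'=h\,g_v^{-1}$, whereas you first prove the conjugation identity $\delta(f_v)(w_1,w_2,w_3)=g_v(w_1)\,f^v(w_1,w_2,w_3)\,g_v(w_1)^{-1}$, deduce $\norm{\delta(f_v)}=\norm{f^v}$, and then apply coboundary expansion directly to $f_v$; these are equivalent (and your route is arguably more transparent). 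Finally, the paper's spectral step is phrased as a variance bound for $\mu(u)=\norm{f^u}$ on the two-step $S_{1,3}$ swap walk followed by Chebyshev, while you phrase it as the bipartite expander mixing lemma on the vertex--triangle swap graph with $B=\overline{S_0}$ and $F=\supp(f)$; unwinding the algebra shows the two give the same $O(\lambda/(\beta\eta)^2)\norm{f}$ bound on $\norm{\overline{S_0}}$. One minor technical point: the singular value of the swap walk is actually $O(\lambda)$ (not just $O(\sqrt{\lambda})$), so your estimate is pessimistic but still suffices for the stated $\lambda\le O(\beta^2\eta^2\varepsilon)$.
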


The key point in the proof of proposition~\ref{non-abelian-thin-in-dimension-0} is the observation that the \emph{restriction} of a cochain to the links of a local spectral expander resembles very well the global picture, as we explain next.

For any cochain $f \in C^k(X;G)$, $0 \le k \le d-1$ and a vertex $v \in X(0)$, we define the \emph{restriction} of $f$ to the link of $v$ by $f^v(\sigma) = f(\sigma)$ for every $\sigma \in X_v(k)$, i.e., $f^v$ gives values to the $k$-faces that together with $v$ form a $(k+1)$-face in $X$. Note the difference between the restriction $f^v$ and the localization $f_v$: $f^v$ as defined now is a $k$-cochain in the link of $v$, and $f_v$ is a $(k-1)$-cochain in the link of $v$.

Recall that for a $2$-cochain $f \in C^2(X;G)$, we defined $S_0$ as the set of vertices $v$ for which $\norm{f_v} \le \eta$. We define a similar set, $S_0'$ as the set of vertices for  which $\norm{f^v} \le \eta$. We show that in a local spectral expander, $\norm{S_0'} \ge 1-O(\norm{f})$.\footnote{For our purpose it is enough to show that almost all of the vertices are sparse (where $S_0'$ represents the set of sparse vertices). A similar argument could show that for cochains of weight bounded from both sides, almost all of the vertices are not too sparse and not too dense, i.e., almost all of the local views resemble very well the global picture.}

\begin{lemma}\label{in-front-seen-with-right-fraction}
	Let $X$ be a $3$-dimensional $\lambda$-local spectral expander and $0 < \eta < 1$. For any $A \subseteq X(2)$, if $\norm{A} \le \eta/2$ then
	$$\norm{S_0'} \ge 1 - O\left(\frac{\lambda}{\eta^2}\right)\norm{A}.$$
\end{lemma}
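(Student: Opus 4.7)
The plan is to apply the second-moment method (Chebyshev's inequality). For each vertex $v \in X(0)$, set $Y_v := \norm{A^v}$, where $A^v := \{\sigma \in X_v(2) \;|\; \sigma \in A\}$ is the set of top faces of the link $X_v$ that lie in $A$ (so that $S_0' = \{v \in X(0) \;|\; Y_v \le \eta\}$). The goal then becomes to bound $\Pr_{v \sim P_0}[Y_v > \eta]$ by $O(\lambda/\eta^2)\norm{A}$.

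First, I would compute the first moment $\E_v[Y_v] = \norm{A}$ by interchanging the order of sampling: drawing $\sigma$ uniformly from $X_v(2)$ for $v \sim P_0$ is equivalent to picking a uniform $3$-face $\tau \in X(3)$, a uniform vertex $v \in \tau$, and setting $\sigma = \tau \setminus v$. The marginal distribution of $\sigma$ is exactly $P_2$, so $\E_v[Y_v] = \Pr_{\sigma \sim P_2}[\sigma \in A] = \norm{A}$.

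The crux is to bound the second moment $\E_v[Y_v^2]$. This equals the probability $\Pr[\sigma_1, \sigma_2 \in A]$ under the joint law: draw $v \sim P_0$ and then $\sigma_1, \sigma_2$ independently uniform in $X_v(2)$ (equivalently $\sigma_i = \tau_i \setminus v$ for independent uniform $\tau_i \ni v$). This is a single step of a reversible Markov chain $W = PP^*$ on $X(2)$, where $P$ is the bipartite averaging operator for the ``apex'' relation $v \cup \sigma \in X(3)$ between $X(0)$ and $X(2)$. Using the $\lambda$-local spectral expansion of $X$ --- specifically the spectral gap of this bipartite walk between $X(0)$ and $X(2)$ via $X(3)$, which follows from an Oppenheim-style trickle-down argument (or equivalently, from a Cheeger-type estimate on the bipartite containment graph using the expansion of the links) --- one obtains that the second eigenvalue $\lambda_W$ of $W$ is $O(\lambda)$. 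By the spectral decomposition applied to $\mathbf{1}_A$,
\[
\E_v[Y_v^2] = \langle \mathbf{1}_A, W\mathbf{1}_A \rangle_{P_2} \le \norm{A}^2 + \lambda_W \norm{A}(1-\norm{A}) \le \norm{A}^2 + O(\lambda)\norm{A},
\]
so $\Var_v(Y_v) \le O(\lambda)\norm{A}$.

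Finally, Chebyshev's inequality closes the argument. Since $\norm{A} \le \eta/2$, the event $Y_v > \eta$ forces $|Y_v - \norm{A}| > \eta/2$, so
\[
\Pr_v[Y_v > \eta] \le \frac{4\,\Var_v(Y_v)}{\eta^2} = O\!\left(\frac{\lambda}{\eta^2}\right)\norm{A},
\]
yielding $\norm{S_0'} \ge 1 - O(\lambda/\eta^2)\norm{A}$. The main obstacle is the spectral estimate $\lambda_W = O(\lambda)$: the walk $W$ is not the canonical up-down walk on $2$-faces but rather a ``through-apex'' swap walk. Hence its spectral gap needs a careful derivation from local spectral expansion --- either a direct computation via expander mixing on the bipartite containment graph (showing its second singular value is $O(\sqrt{\lambda})$), or an appeal to standard high-dimensional expander walk machinery. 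The factor of $1/\eta^2$ in the bound is a direct consequence of Chebyshev, which is why the naive Markov bound $\norm{\overline{S_0'}} \le \norm{A}/\eta$ is insufficient and the second-moment improvement is essential.
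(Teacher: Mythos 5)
Your proof is correct and follows essentially the same route as the paper: both compute the variance of $v \mapsto \norm{A^v}$ by identifying $\E_v\bigl[\norm{A^v}^2\bigr]$ with the collision probability under the two-step through-apex walk on $X(2)$, bound the second eigenvalue of that walk by $O(\lambda)$ via local spectral expansion, and then apply Chebyshev using the margin $\E_v\bigl[\norm{A^v}\bigr] = \norm{A} \le \eta/2$. The only difference is that the paper pins down the spectral step by citing a specific result on the swap walk $S_{1,3}$ from~\cite{AJT19}, whereas you leave it at the level of ``standard HDX walk machinery''; otherwise the arguments coincide.
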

\begin{proof}
	Define the following graph $G = (V,E)$, where $V = X(2)$, i.e., all triangles of $X$, and 
	$E = \big\{\{t_1,t_2\} \;|\; \exists u \in X(0) \mbox{ s.t. } t_1 \cupdot u, t_2\cupdot u \in X(3) \big\}$, i.e., there is an edge between $t_1$ and $t_2$ if and only if there exists some vertex in $X$ that completes both $t_1$ and $t_2$ to a tetrahedron.
	
	We define a probability distribution on $G$ that corresponds to the probability distribution of $X$ as follows:
	\begin{itemize}
		\item The probability of a vertex $t \in V$ equals the probability of the corresponding triangle $t \in X(2)$.
		\item The probability of an edge $\{t_1,t_2\} \in E$ equals
		$\E_{u \in X(0)}\Pr[t_1 \cupdot u \;|\; u]\cdot\Pr[t_2 \cupdot u \;|\; u]$,
		where all the probabilities are according to the complex $X$.
	\end{itemize}
	
	Since $X$ is a $\lambda$-local spectral expander, by~\cite[Theorem 5.2]{AJT19} $G$ is an $O(\lambda)$-spectral expander, because its adjacency operator is a two steps walk of the swap walk $S_{1,3}$ of~\cite{AJT19}.
	
	Now, define $\mu : X(0) \to \R$ by $\mu(u) = \norm{A^u} = \Pr[t \in A \;|\; t\cupdot u \in X(3)]$. The following holds by laws of probability:
	\begin{equation}\label{in-front-seen-with-right-fraction-eq1}
	\E_{u \in X(0)}[\mu(u)] = \E_{u \in X(0)}\Pr[t \in A \;|\; t\cupdot u \in X(3)] = \Pr[t\in A] = \norm{A}.
	\end{equation}
	
	\begin{equation}\label{in-front-seen-with-right-fraction-eq2}
	\begin{aligned}
	\E_{u \in X(0)}[\mu(u)^2] &= \E_{u \in X(0)}\Pr[t_1 \in A \;|\; t_1\cupdot u \in X(3)]\cdot\Pr[t_2 \in A \;|\; t_2\cupdot u \in X(3)] \\[5pt]&=
	\Pr_{\{t_1,t_2\} \in E}[t_1 \in A \wedge t_2 \in A] =
	\norm{E(A)},
	\end{aligned}
	\end{equation}
	where $E(A)$ is the set of edges $\{t_1,t_2\}$ in $G$ such that both $t_1$ and $t_2$ are in $A$. Since $G$ is an $O(\lambda)$-spectral expander, it follows that
	$\norm{E(A)} \le \norm{A}^2 + O(\lambda)\norm{A}$. Substituting in~\eqref{in-front-seen-with-right-fraction-eq2} and combining~\eqref{in-front-seen-with-right-fraction-eq1} yields
	\begin{equation*}
	\Var_{u\in X(0)}[\mu(u)] = \E_{u \in X(0)}[\mu(u)^2] - \E_{u \in X(0)}[\mu(u)]^2 \le O(\lambda)\norm{A}.
	\end{equation*}
	
	Now, by Chebyshev's inequality
	\begin{align*}
	\norm{S_0'} &=
	\Pr\big[\norm{A^u} \le \eta\big] =
	1 - \Pr\big[\norm{A^u} > \eta\big] \\[8pt]&\ge
	1- \Pr\left[\mu(u) - \frac{\eta}{2} > \frac{\eta}{2}\right] \\[5pt]&\ge
	1- \frac{Var(\mu)}{(\eta/2)^2} \ge
	1 - O\left(\frac{\lambda}{\eta^2}\right)\norm{A},
	\end{align*}
	where the second inequality follows since $\E[\mu(u)] = \norm{A} \le \eta/2$. This completes the proof.
\end{proof}

We now show that in a complex with coboundary expanding links, the localization and the restriction of a cochain to the links are related. 

\begin{lemma}\label{bound-near-by-in-front}
	Let $X$ be a $3$-dimensional simplicial complex such that its links are $\beta$-coboundary expanders over a group $G$. For any $f = \delta(g)$ where $g \in C^1(X,G)$, if $f$ is locally minimal then for every vertex $v \in X(0)$ it holds that
	$$\norm{f_v} \le \beta^{-1}\norm{f^v}.$$
\end{lemma}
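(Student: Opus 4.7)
My plan is to apply the $\beta$-coboundary expansion of the link $X_v$ to the $1$-cochain $f_v$, and then show by a direct computation that $\delta(f_v)$ and $f^v$ have the same support (hence the same norm).

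First I would dispense with the trivial case: if $f_v = \mathbf{1}$, then $\norm{f_v} = 0$ and the inequality is immediate. Otherwise, by local minimality of $f$, $f_v$ is minimal in $X_v$, so $\norm{f_v} = \dist(f_v, B^1(X_v;G))$. Since $\norm{f_v} > 0$, it follows that $f_v \notin B^1(X_v;G)$ (any element of $B^1$ is at distance zero from $B^1$). Applying the $\beta$-coboundary expansion of $X_v$ then gives
$$\norm{\delta(f_v)} \;\ge\; \beta\,\dist(f_v, B^1(X_v;G)) \;=\; \beta\,\norm{f_v}.$$

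Next I would compute $\delta(f_v)$ explicitly. For an ordered $2$-face $(a,b,c) \in \vec{X_v}(2)$, the definition of the localization and of $\delta$ on $1$-cochains gives
$$\delta(f_v)(a,b,c) \;=\; f_v(a,b)\,f_v(b,c)\,f_v(c,a) \;=\; f(v,a,b)\,f(v,b,c)\,f(v,c,a).$$
Expanding each factor via $f = \delta(g)$, i.e.\ $f(v,x,y) = g(v,x)\,g(x,y)\,g(y,v)$, and using the antisymmetry relation $g(y,v)\,g(v,y) = 1$ to cancel the interior pairs $g(b,v)g(v,b)$ and $g(c,v)g(v,c)$, the product collapses to
$$g(v,a)\,\bigl[g(a,b)\,g(b,c)\,g(c,a)\bigr]\,g(a,v) \;=\; g(v,a)\cdot f^v(a,b,c)\cdot g(v,a)^{-1}.$$
Thus $\delta(f_v)(a,b,c)$ is the conjugate of $f^v(a,b,c)$ by $g(v,a)$. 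Since conjugation preserves whether an element is the identity, $\supp(\delta(f_v)) = \supp(f^v)$ as subsets of $\vec{X_v}(2)$, and therefore $\norm{\delta(f_v)} = \norm{f^v}$. Combining with the coboundary expansion bound yields $\norm{f_v} \le \beta^{-1}\norm{f^v}$, as desired.

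The main technical point is the non-commutative simplification of the three copies of $f = \delta(g)$: in the abelian case one would simply get $\delta(f_v) = f^v$ (a manifestation of the cocycle identity $\delta\delta g = 0$), whereas in the non-abelian setting equality only holds up to conjugation by $g(v,a)$. The step that makes the argument work is the observation that the norm in this paper depends only on whether the value is the identity element, a property invariant under conjugation.
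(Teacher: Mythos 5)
Your proof is correct, and it takes a genuinely different route from the paper's. The paper argues by contradiction: it applies the coboundary expansion of the link to the \emph{restriction} $g^v$ of the underlying $1$-cochain $g$, obtaining some $h \in C^0(X_v;G)$ with $\norm{h.g^v}$ small, and then transports $h$ to an explicit $g' \in C^0(X_v;G)$ (namely $g'(u) = h(u)g_v(u)^{-1}$) satisfying $g'.f_v = h.g^v$, which violates the minimality of $f_v$. Your proof instead applies coboundary expansion \emph{directly to the localization} $f_v$ and closes the argument with the algebraic identity
\[
\delta(f_v)(a,b,c) = g(v,a)\,f^v(a,b,c)\,g(v,a)^{-1},
\]
which is the non-abelian ``manifestation'' of $\delta\delta = \mathbf{1}$ up to face-wise conjugation. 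Since the norm only sees whether a value is the identity, and conjugation preserves this, you get $\norm{\delta(f_v)} = \norm{f^v}$ exactly, and the bound follows at once from $\norm{f_v} = \dist(f_v, B^1(X_v;G)) \le \beta^{-1}\norm{\delta(f_v)}$. I checked the cancellation carefully: $f(v,a,b)f(v,b,c)f(v,c,a)$ collapses via $g(b,v)g(v,b) = g(c,v)g(v,c) = 1$ to $g(v,a)\bigl[g(a,b)g(b,c)g(c,a)\bigr]g(v,a)^{-1}$, as you claim, and the degenerate case $f_v = \mathbf{1}$ is handled separately so that the coboundary expansion is only ever invoked on $f_v \notin B^1(X_v;G)$. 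Your version is arguably cleaner: it is direct rather than by contradiction, it does not require unpacking $g_v$, $g^v$ and the auxiliary $g'$, and the conjugation identity it isolates is a reusable structural fact. The trade-off is that it only works because $f$ is assumed to be a coboundary (which the lemma does assume), whereas the paper's approach makes the role of the hypothesis $f = \delta(g)$ more visible through the explicit appearance of $g^v$ and $g_v$.
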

\begin{proof}
	Assume towards contradiction that there exists a vertex $v \in X(0)$ such that $\norm{f_v} > \beta^{-1}\norm{f^v}$. We claim that in this case $f_v$ is not minimal in $X_v$.% We will show that there exists $h \in C^0(X_v)$ such that $\norm{h.f_v} < \norm{f_v}$.
	
	Since $X_v$ is a $\beta$-coboundary expander then $\dist(g^v, B^1(X_v;G)) \le \beta^{-1}\norm{\delta(g^v)} = \beta^{-1}\norm{f^v}$. Thus, there exists $h \in C^0(X_v;G)$ such that $\norm{h.g^v} \le \beta^{-1}\norm{f^v}$. Define $g' \in C^0(X_v;G)$ by $g'(u) = h(u)g_v(u)^{-1}$. Note that for every edge $(uw) \in X_v(1)$ it holds that
	\begin{equation}
	\begin{aligned}
	g'.f_v(uw) &=
	g'(u)f_v(uw)g'(w)^{-1} \\[5pt]&=
	g'(u)g(vu)g(uw)g(wv)g'(w)^{-1} \\[5pt]&=
	g'(u)g_v(u)g^v(uw)g_v(w)^{-1}g'(w)^{-1} \\[5pt]&=
	h(u)g_v(u)^{-1}g_v(u)g^v(uw)g_v(w)^{-1}g_v(w)h(w)^{-1} \\[5pt]&=
	h(u)g^v(uw)h(w)^{-1} = h.g^v(uw).
	\end{aligned}
	\end{equation}
	
	It follows that
	\begin{equation*}
	\norm{g'.f_v} = \norm{h.g^v} \le \beta^{-1}\norm{f^v} < \norm{f_v},
	\end{equation*}
	in contradiction to the minimality of $f_v$.
\end{proof}

We can now prove proposition~\ref{non-abelian-thin-in-dimension-0}.
\begin{proof}[Proof of proposition~\ref{non-abelian-thin-in-dimension-0}]
	First, since $f$ is locally minimal, it must be that for every edge $e \in X(1)$, $\norm{f_e} \le 1-1/|G|$. Otherwise, $\norm{f_e}$ can be decreased by adding a coboundary to it. Note that there must be a value $g\in G$ that is given to at least $1/|G|$ fraction of the vertices in $X_e$. Add $g^{-1}$ to all vertices in $X_e$ and note that it decreases the weight of $f_e$, in contradiction to the local minimality of $f$.
	
	Second, since $\norm{f} \le \beta\eta/2$, by lemma~\ref{in-front-seen-with-right-fraction} (applied for $\beta\eta$),
	\begin{equation*}
	\Pr\big[\norm{f^v \le \beta\eta}\big] \ge 1- O\left(\frac{\lambda}{(\beta\eta)^2}\right)\norm{f} \ge 1-\varepsilon\norm{f},
	\end{equation*}
	
	Finally, since $f$ is locally minimal, by lemma~\ref{bound-near-by-in-front},
	\begin{equation*}
	\norm{S_0} = \Pr\big[\norm{f_v} \le \eta\big] \ge \Pr\big[\norm{f^v} \le \beta\eta\big] \ge 1-\varepsilon\norm{f}.
	\end{equation*}
	which completes the proof.
\end{proof}

\subsubsection{The correction algorithm}
We show now an algorithm that gets a $1$-cochain $f$ such that $\norm{\delta(f)}$ is small and returns a $1$-cochain $f'$ by making a few changes to $f$ such that $\delta(f')$ is weakly-non-local. The algorithm runs in iterations, where at every iteration it does the following one step of correction.

\begin{lemma}\label{one-step-non-abelian}[One step of correction]
	Let $X$ be a $3$-dimensional simplicial complex and $G$ a group. For any $f \in C^1(X;G)$, if $\delta(f)$ is not locally minimal then there exists a vertex $v \in X(0)$ and $f' \in C^1(X;G)$ such that $\norm{f'f^{-1}} \le 2\norm{v}$ and $\norm{\delta(f')} < \norm{\delta(f)}$.
\end{lemma}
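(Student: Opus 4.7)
The plan is to mirror the abelian correction Lemma~\ref{one-step-of-correction}, threaded through the non-commutative action $h.g(uw) = h(u)g(uw)h(w)^{-1}$ of $C^0$ on $C^1$. Since $\delta(f)$ is not locally minimal, by definition there is a vertex $v \in X(0)$ such that $\delta(f)_v \in C^1(X_v;G)$ is not minimal, and unfolding the definition produces an $h \in C^0(X_v;G)$ with $\norm{h.\delta(f)_v} < \norm{\delta(f)_v}$. I will then define $f' \in C^1(X;G)$ by twisting $f$ only on the edges through $v$: on every edge not containing $v$ set $f' = f$, and on an edge $(v,u)$ set $f'(v,u) = h(u)f(v,u)$, with antisymmetric partner $f'(u,v) = f(u,v)h(u)^{-1}$. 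This is well-defined and antisymmetric, so $f'$ is a genuine $1$-cochain.

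The crucial step is the direct algebraic identity that for any triangle $(v,u,w)$ through $v$,
$$\delta(f')(v,u,w) = f'(v,u)f'(u,w)f'(w,v) = h(u)\,f(v,u)f(u,w)f(w,v)\,h(w)^{-1} = h.\delta(f)_v(uw),$$
since only the outer factors $f(v,u)$ and $f(w,v)$ are twisted while the interior $f(u,w)$ is untouched (as $u,w \ne v$). Thus $\delta(f')_v = h.\delta(f)_v$. For any triangle not containing $v$, $f'$ agrees with $f$ and hence so do their coboundaries. Decomposing the global weight by conditioning on whether the sampled $2$-face contains $v$ gives $\norm{\delta(f)} - \norm{\delta(f')} = \norm{v}\bigl(\norm{\delta(f)_v} - \norm{\delta(f')_v}\bigr) > 0$, which is the strict decrease we need.

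For the weight bound, $f'(e)f(e)^{-1}$ is the identity on every edge not containing $v$, hence $\norm{f'f^{-1}} \le \Pr[\sigma_1 \ni v]$. A direct count in the $3$-dimensional uniform distribution shows that every tetrahedron through $v$ has $3$ of its $6$ edges incident to $v$ while $v$ is $1$ of its $4$ vertices, so $\Pr[\sigma_1 \ni v] = 2\Pr[\sigma_0 = v] = 2\norm{v}$. The main obstacle is the non-commutative identity above: the correction must be applied \emph{asymmetrically}, with $h(u)$ multiplying $f(v,u)$ from the left and $h(w)^{-1}$ multiplying $f(w,v)$ from the right, leaving the interior edge $(u,w)$ alone, so that the $h$'s factor to the outside of the cyclic product and conjugate onto $\delta(f)_v(uw)$ exactly as the group action $h.\delta(f)_v$ prescribes --- any other placement of $h$ would fail to realize $h.\delta(f)_v$.
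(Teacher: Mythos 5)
Your proposal is correct and takes essentially the same approach as the paper: locate a vertex $v$ where $\delta(f)_v$ is not minimal, twist $f$ only on edges through $v$ by $f'(vu) = h(u)f(vu)$, verify the identity $\delta(f')(vuw) = h(u)f(vu)f(uw)f(wv)h(w)^{-1} = h.\delta(f)_v(uw)$ on triangles through $v$, and bound $\norm{f'f^{-1}}$ by $\Pr[\sigma_1 \ni v] = 2\norm{v}$. The only minor imprecision is the constant in the decomposition $\norm{\delta(f)} - \norm{\delta(f')} = c\,\norm{v}\bigl(\norm{\delta(f)_v} - \norm{\delta(f')_v}\bigr)$, which should be $c = 3$ rather than $c = 1$ (a triangle through $v$ is $3$ of the $4$ triangles in each tetrahedron through $v$, while $v$ is $1$ of its $4$ vertices), but this does not affect the strict-decrease conclusion; the paper's own write-up glosses over this constant as well.
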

\begin{proof}
	Denote $g = \delta(f)$. Since $g$ is not locally minimal, there exists a vertex $v \in X(0)$ such that $g_v$ is not minimal in $X_v$. By definition there exists $h \in C^0(X_v;G)$ such that $\norm{h.g_v} < \norm{g_v}$. Define $f' \in C^1(X;G)$ by $f'(vu) = h(u)f(vu)$ for every edge $vu \in \vec{X}(1)$ which contains $v$, and $f'(uw) = f(uw)$ for every edge $uw \in \vec{X}(1)$ which does not contain $v$.
	
	Note that $f$ and $f'$ agree on all edges which does not contain $v$, and thus, $\delta(f)$ and $\delta(f')$ agree on all triangles which does not contain $v$. For every triangle that contains $v$ it holds that
	\begin{align*}
	\delta(f')(vuw) &= f'(vu)f'(uw)f'(wv) =
	h(u)f(vu)f(uw)f(wv)h(w)^{-1} = h.g_v(uw).
	\end{align*}
	Therefore, $\norm{\delta(f')} - \norm{\delta(f)} = \norm{h.g_v} - \norm{g_v} < 0$.
	
	Furthermore, 
	\begin{equation*}
	\norm{f'f^{-1}} \le \Pr[\sigma_1 \ni v] = \frac{\Pr[\sigma_1 \ni v \wedge \sigma_0 = v]}{\Pr[\sigma_0 = v \;|\; \sigma_1 \ni v]} \le 2\norm{v},
	\end{equation*}
	which completes the proof.
\end{proof}

We can now prove theorem~\ref{thm:correction-algorithm} which we restate here for general groups.

\begin{theorem}\label{balancing-non-abelian}
	For any $q \in \mathbb{N}$, a group $G$, and  $0<\beta,\varepsilon<1$ such that $\varepsilon \le 1/81|G|$ there exist constants $0 < \lambda,\eta \le \varepsilon$ such that the following holds: Let $X$ be a $3$-dimensional $q$-bounded degree $\lambda$-local spectral expander with $\beta$-coboundary expanding links over $G$. For any $f \in C^1(X;G)$, if $\norm{\delta(f)} \le \beta\eta/2$ then there exists $f' \in C^1(X;G)$ such that $\dist(f, f') \le 2q\norm{\delta(f)}$, $\norm{\delta(f')} \le \norm{\delta(f)}$, and $\delta(f')$ is $(\eta,\varepsilon,1/|G|)$-weakly-non-local.
\end{theorem}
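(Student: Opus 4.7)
The plan is to mimic the abelian correction algorithm (Theorem~\ref{balancing-algorithm}), but invoking Lemma~\ref{one-step-non-abelian} for the local step and Proposition~\ref{non-abelian-thin-in-dimension-0} for the weakly-non-local conclusion. First I would fix the constants: choose $\eta$ and $\lambda$ so that Proposition~\ref{non-abelian-thin-in-dimension-0} applies with the given $\beta$ and $\varepsilon$ (concretely, any $\eta \le \varepsilon$ and $\lambda \le O(\beta^2 \eta^2 \varepsilon)$ will do), which is the source of both small constants in the theorem statement.

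Starting from $f_0 = f$, I would repeatedly apply Lemma~\ref{one-step-non-abelian}: as long as $\delta(f_i)$ is not locally minimal, the lemma produces a vertex $v_{i+1} \in X(0)$ and a cochain $f_{i+1} \in C^1(X;G)$ with $\norm{f_{i+1} f_i^{-1}} \le 2\norm{v_{i+1}}$ and $\norm{\delta(f_{i+1})} < \norm{\delta(f_i)}$. The process terminates in some number of steps $r$, since $\norm{\delta(f_i)}$ is a weighted fraction over a finite complex whose values are integer multiples of the minimum triangle probability $\min_{\tau \in X(2)} \Pr[\sigma_2 = \tau] \ge 1/(4|X(3)|)$; hence $r \le 4|X(3)|\norm{\delta(f)}$. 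Let $f' = f_r$. By construction, $\norm{\delta(f')} \le \norm{\delta(f)} \le \beta\eta/2$, and $\delta(f')$ is locally minimal. Since $\delta(f') \in B^2(X;G)$, Proposition~\ref{non-abelian-thin-in-dimension-0} then guarantees that $\delta(f')$ is $(\eta,\varepsilon,1/|G|)$-weakly-non-local.

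It remains to bound $\dist(f,f')$. Using the triangle inequality along the chain $f_0, f_1, \dotsc, f_r$ (the distance $\dist(g,h) = \norm{g h^{-1}}$ is subadditive under composition in the non-abelian setting as well), together with the per-step bound from Lemma~\ref{one-step-non-abelian} and the $q$-bounded degree estimate $\norm{v} \le q/(4|X(3)|)$ that holds for every vertex $v \in X(0)$, I obtain
\[
\dist(f,f') \;\le\; \sum_{i=1}^{r} \norm{f_i f_{i-1}^{-1}} \;\le\; 2\sum_{i=1}^{r}\norm{v_i} \;\le\; 2 \cdot 4|X(3)|\norm{\delta(f)} \cdot \frac{q}{4|X(3)|} \;=\; 2q\norm{\delta(f)},
\]
which is the required bound.

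The only step that is not essentially routine bookkeeping is checking that $\dist$ really is subadditive along the iterative chain; this is where the non-abelian structure differs from the abelian case, since $f' f^{-1}$ does not decompose into a telescoping sum but into a product of conjugate-like factors. The verification is short—one simply observes that $\supp(f_i f_{i-1}^{-1})$ is contained in the set of edges touching $v_i$, and the supports of all the correction factors accumulate set-theoretically—but this is the only place where care is needed. Everything else follows by plugging the lemmas into the same scheme used in the abelian proof of Theorem~\ref{balancing-algorithm}.
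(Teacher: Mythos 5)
Your proposal is correct and takes essentially the same route as the paper: iterate Lemma~\ref{one-step-non-abelian} until local minimality, bound the number of iterations by the weight quantum $\norm{\sigma}\ge (4|X(3)|)^{-1}$, and invoke Proposition~\ref{non-abelian-thin-in-dimension-0}. The one place you add detail that the paper takes silently for granted is the triangle inequality for $\dist(f,g)=\norm{gf^{-1}}$ along the iterative chain; your justification via $\supp(hf^{-1})\subseteq\supp(hg^{-1})\cup\supp(gf^{-1})$ is correct (though the intermediate product $f'f^{-1}=(f^{(r)}(f^{(r-1)})^{-1})\cdots(f^{(1)}f^{-1})$ is a plain telescoping product, not a product of conjugates as you suggest). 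One small remark: the paper's proof sets $\eta\le\varepsilon^3$ rather than merely $\eta\le\varepsilon$, because that constraint is needed when $\delta(f')$ is later fed into Corollary~\ref{balanced-coboundaries-vanish}; for the statement of this theorem alone your weaker choice suffices, but it is cleaner to fix $\eta$ once so the same constants serve Lemma~\ref{lem:cosystolic-non-abelian-1} downstream.
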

\begin{proof}
	Let $\eta \le \varepsilon^3$ and $\lambda$ as in lemma~\ref{one-step-non-abelian}.
	Apply lemma~\ref{one-step-non-abelian} for $f$ step by step until no more corrections are possible. Since at every step the norm decreases, this process terminates after some $r \ge 0$ steps. Denote by $v_1,v_2,\dotsc,v_r$ the vertices and by $f^{(1)}, f^{(2)}, \dotsc, f^{(r)}$ the $1$-cochains given by applying lemma~\ref{one-step-non-abelian} for $r$ steps, where at step $i$ we apply it for $f^{(i-1)}$.
	
	Let $f' = f^{(r)}$. Since the norm of $\delta(f)$ decreases at every step of correction, it follows that $\norm{\delta(f')} \le \norm{\delta(f)} \le \beta\eta/2$. Furthermore, since no more corrections are possible, it must be that $\delta(f')$ is locally minimal. Thus, by proposition~\ref{non-abelian-thin-in-dimension-0}, $\delta(f')$ is $(\eta,\varepsilon,1/|G|)$-weakly-non-local.
	
	It is left to show that $\norm{f'f^{-1}}$ is proportional to $\norm{\delta(f)}$. By definition, for any $\sigma \in X(2)$ it holds that $\norm{\sigma} \ge \Big(4|X(3)|\Big)^{-1}$, hence $r \le 4|X(3)|\norm{\delta(f)}$. Thus,
	\begin{align*}
	\dist(f,f') &\le
	\dist(f,f^{(1)}) + \dist(f^{(1)}, f^{(2)}) + \dotsb + \dist(f^{(r-1)},f^{(r)}) \\&\le
	\sum_{i=1}^r2\norm{v_i} \le
	8|X(3)|\frac{q}{4|X(3)|}\norm{\delta(f)} \le
	2q\norm{\delta(f)},
	\end{align*}
	which completes the proof.
\end{proof}

\subsection{Cosystolic expansion over any group}
We now show that cosystolic expansion over any group is implied by $\delta_1$-expansion of small sets. Recall that a complex is a cosystolic expander if the following two properties hold: (1) The systems of equations are expanding, i.e., any assignment that does not satisfy all the equations has a large fraction of unsatisfied equations (proportional to the distance from a satisfying assignment). (2) Every cocycle which is not a coboundary is large.
\begin{lemma}[The systems of equations are expanding]\label{lem:cosystolic-non-abelian-1}
	For any group $G$, $q \in \mathbb{N}$ and $0<\beta<1$ there exist $0 < \lambda, \eta < 1$ such that the following holds: Let $X$ be a $3$-dimensional $q$-bounded degree $\lambda$-local spectral expander with $\beta$-coboundary expanding links over $G$. For any $f \in C^1(X;G)\setminus Z^1(X;G)$ it holds that
	$$\norm{\delta(f)} \ge \min\left\{\frac{\beta\eta}{2}, \frac{1}{2q}\right\}\cdot\dist(f, Z^1(X;G)).$$
\end{lemma}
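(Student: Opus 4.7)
The plan is to mimic the abelian version of this lemma (Lemma~\ref{lem:cosystolic-abelian-1}), replacing the ``non-local vanishing'' step with its ``weakly-non-local vanishing'' counterpart. First I would fix constants so that the hypotheses of both the non-abelian correction algorithm (Theorem~\ref{balancing-non-abelian}) and the weakly-non-local vanishing corollary (Corollary~\ref{balanced-coboundaries-vanish}) hold simultaneously. For $d = 3$ and $\alpha = 1/|G|$, Corollary~\ref{balanced-coboundaries-vanish} requires $\varepsilon \le \alpha/(3d^3) = 1/(81|G|)$, $\lambda \le \varepsilon^2$ and $\eta \le \varepsilon^3$, and these are compatible with the smallness conditions from Theorem~\ref{balancing-non-abelian}. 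So I would take $\varepsilon = 1/(81|G|)$, $\eta = \varepsilon^3$, and $\lambda$ as the minimum of $\varepsilon^2$ and the value guaranteed by Theorem~\ref{balancing-non-abelian}.

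Given $f \in C^1(X;G) \setminus Z^1(X;G)$, I would split on the size of $\norm{\delta(f)}$. If $\norm{\delta(f)} \ge \beta\eta/2$, the bound is immediate since $\dist(f, Z^1(X;G)) \le 1$. Otherwise, Theorem~\ref{balancing-non-abelian} produces $f' \in C^1(X;G)$ with $\dist(f,f') \le 2q\,\norm{\delta(f)}$, $\norm{\delta(f')} \le \norm{\delta(f)}$, and $\delta(f')$ being $(\eta,\varepsilon,1/|G|)$-weakly-non-local. It would then suffice to show $f' \in Z^1(X;G)$, whereupon $\dist(f, Z^1(X;G)) \le \dist(f,f') \le 2q\,\norm{\delta(f)}$ and the conclusion follows.

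The heart of the matter is to verify that $\norm{\delta_1(\supp(\delta(f')))} = 0$, so that Corollary~\ref{balanced-coboundaries-vanish} applied to $A = \supp(\delta(f'))$ with $k = 2$ forces $A = \emptyset$. For a $3$-face $\tau = \{u,v,w,x\}$, one writes the four values $\delta(f')(\sigma)$ on its $2$-subfaces via the formula $\delta(f')(a,b,c) = f'(a,b)f'(b,c)f'(c,a)$; a short telescoping computation, using antisymmetry to re-express each edge as an edge incident to the common vertex $u$, shows that if three of these four values are the identity then so is the fourth. This is the non-abelian substitute for $\delta \circ \delta = \mathbf{1}$ at the level of a single tetrahedron and implies that no $3$-face contains exactly one $2$-face from $\supp(\delta(f'))$.

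The step I expect to be most delicate is precisely this telescoping identity. Over an abelian group the nilpotency of $\delta$ is a one-line formal identity, but since the paper only defines $\delta$ up to dimension $1$ in the non-abelian setting, there is no named lemma to invoke: one has to expand the four products, normalize all edges to pass through a common vertex, and check that the resulting cyclic product collapses to $1_G$. Once this local identity is in hand, the rest of the argument is a direct transcription of Lemma~\ref{lem:cosystolic-abelian-1}.
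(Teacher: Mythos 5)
Your proof follows the same route as the paper's: choose constants compatible with Theorem~\ref{balancing-non-abelian} and Corollary~\ref{balanced-coboundaries-vanish}, split on whether $\norm{\delta(f)}$ is already at least $\beta\eta/2$, and otherwise run the correction algorithm to get $f'$ with $\delta(f')$ weakly-non-local, which the corollary forces to vanish, giving $\dist(f,Z^1) \le \dist(f,f') \le 2q\norm{\delta(f)}$. You go one useful step further than the paper's write-up: Corollary~\ref{balanced-coboundaries-vanish} has $\norm{\delta_1(A)}=0$ as an explicit hypothesis, and you correctly observe that this must be checked for $A = \supp(\delta(f'))$ and supply the per-tetrahedron telescoping identity (the non-abelian substitute for $\delta\circ\delta=\mathbf{1}$) that does the job, whereas the paper invokes the corollary without spelling out this verification.
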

\begin{proof}
	Let $\varepsilon,\lambda,\eta$ be as in theorem~\ref{balancing-non-abelian}. If $\norm{\delta(f)} \ge \beta\eta/2$ we are done. Otherwise, by theorem~\ref{balancing-non-abelian}, there exists $f' \in C^1(X;G)$ such that $\dist(f,f') \le 2q\norm{\delta(f)}$, $\norm{\delta(f')} \le \norm{\delta(f)}$, and $\delta(f')$ is $(\eta,\varepsilon,1/|G|)$-weakly-non-local. Thus, by corollary~\ref{balanced-coboundaries-vanish}, $\delta(f') = 0$, i.e., $f' \in Z^k(X;G)$. Therefore, $\dist(f, Z^k(X;G)) \le \dist(f, f') \le 2q\norm{\delta(f)}$, which completes the proof.
\end{proof}

\begin{lemma}[Every cocycle which is not a coboundary is large]\label{lem:cosystolic-non-abelian-2}
	For any group $G$ and $0 < \beta < 1$, there exist $0 < \lambda,\eta < 1$ such that the following holds: Let $X$ be a $3$-dimensional $\lambda$-local spectral expander with $\beta$-coboundary expanding links over $G$. For any $f \in Z^1(X;G) \setminus B^1(X;G)$ it holds that $\norm{f} \ge \beta\eta/2$.
\end{lemma}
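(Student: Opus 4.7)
The plan is to mirror the proof of lemma~\ref{lem:cosystolic-abelian-2} in the non-abelian setting, replacing the appeal to proposition~\ref{small-locally-minimal-cocycles-are-balanced} (which has no non-abelian analog for $1$-cocycles here) with a direct argument at the level of $\supp(f)$: I would exhibit $\supp(f)$ as a weakly-non-local set of $1$-faces whose $\delta_1$ is empty, and then invoke corollary~\ref{balanced-coboundaries-vanish} to force $\supp(f) = \emptyset$, which contradicts $f \notin B^1$.

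Concretely, set $\alpha = 1/|G|$ and choose $\varepsilon,\lambda,\eta$ small enough to satisfy the hypotheses of corollary~\ref{balanced-coboundaries-vanish} for $d=3$ and $k=1$, i.e., $\varepsilon \le 1/(81|G|)$, $\lambda \le \varepsilon^2$, $\eta \le \varepsilon^3$. Assuming for contradiction there is $f \in Z^1(X;G) \setminus B^1(X;G)$ with $\norm{f} < \beta\eta/2$, I would replace $f$ by a weight-minimizing element $f'$ of its $C^0$-orbit (such a minimizer exists because the weight takes only finitely many values). A short computation gives $\delta(h.f)(uvw) = h(u)\delta(f)(uvw)h(u)^{-1}$, so the action preserves cocycles; and since $B^1$ is precisely the orbit of the trivial $1$-cochain, the orbit of a non-coboundary is disjoint from $B^1$. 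Hence $f' \in Z^1 \setminus B^1$ with $\norm{f'} \le \norm{f} \le \eta$.

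Next I would verify that $\supp(f')$ is $(\eta,\varepsilon,1/|G|)$-weakly-non-local. The $S_{-1}$ condition is automatic since $\norm{\supp(f')} \le \eta$ puts the empty face into $S_{-1}$. For the per-vertex bound $\norm{\supp(f')_v} \le 1-1/|G|$, conjugating $f'$ by the $0$-cochain $h$ with $h(v) = g^{-1}$ and $h(w) = 1$ for $w \ne v$ changes only edges containing $v$, sending $f'(vu) \mapsto g^{-1}f'(vu)$. Weight-minimality of $f'$ thus forces the identity $1 \in G$ to be a most-frequent value among $\{f'(vu) : u \in X_v(0)\}$ at every vertex $v$, and by pigeonhole its frequency is at least $1/|G|$, yielding the bound.

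Finally, from $f' \in Z^1$ the cocycle identity $f'(uv)f'(vw)f'(wu) = 1$ rules out any triangle containing exactly one edge of $\supp(f')$ (the product would read $g \cdot 1 \cdot 1 \ne 1$), so $\delta_1(\supp(f')) = \emptyset$. Corollary~\ref{balanced-coboundaries-vanish} then forces $\supp(f') = \emptyset$, so $f'$ is the trivial $1$-cochain, which lies in $B^1$, contradicting $f' \in Z^1 \setminus B^1$. The step requiring the most care is the per-vertex minimality argument: the paper's locally-minimal machinery is developed for $2$-cochains in proposition~\ref{non-abelian-thin-in-dimension-0}, so I would need to carefully state the analogous $1$-cochain version and check that the single-vertex reduction strictly decreases the weight whenever the identity fails to be a most-frequent value at some vertex, which is where the discrete nature of $\norm{\cdot}$ is essential.
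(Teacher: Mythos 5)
Your proof is correct and takes the same high-level approach as the paper (pass to a minimal representative, show weak non-locality, invoke corollary~\ref{balanced-coboundaries-vanish}), but you have noticed — and patched — a real gap in the paper's argument. The paper's proof of this lemma cites proposition~\ref{non-abelian-thin-in-dimension-0}, which is stated and proved only for locally minimal $f \in B^2(X;G)$: its proof leans on lemma~\ref{in-front-seen-with-right-fraction} (concentration of $\norm{f^v}$ via the swap walk) and lemma~\ref{bound-near-by-in-front} (whose hypothesis is explicitly $f = \delta(g)$). It is then applied to $f' \in Z^1(X;G)$, a $1$-cochain; the types do not match, and neither of the two supporting lemmas applies to $f'$. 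Your replacement is cleaner precisely because for $k = 1$ the weakly-non-local conditions degenerate: the $S_{-1}$ requirement of definition~\ref{def-3-formal} reduces to $\norm{\supp(f')} \le \eta$ (there is a single $(-1)$-face), which is given, and the per-vertex bound $\norm{\supp(f')_v} \le 1 - 1/|G|$ falls out of weight-minimality in the $C^0$-orbit by the one-vertex-conjugation pigeonhole argument — the same argument the paper already runs per-edge inside the proof of proposition~\ref{non-abelian-thin-in-dimension-0}, shifted down one dimension, with no need for the $O(\lambda/\eta^2)$ concentration step. Your observation that the cocycle identity kills $\delta_1(\supp(f'))$ outright is exactly what makes corollary~\ref{balanced-coboundaries-vanish} applicable. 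One small hygiene point worth spelling out if you write this up: the per-vertex pigeonhole compares weighted frequencies under $P_0^{X_v}$, and one should note (as you gesture at) that this measure is proportional to $P_1$ restricted to edges through $v$, so that the conjugation at $v$ really compares the right quantities; with that detail in place the argument is complete.
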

\begin{proof}
	Let $\varepsilon,\lambda,\eta$ as promised by proposition~\ref{non-abelian-thin-in-dimension-0}. Assume towards contradiction that there exists $f \in Z^k(X;G)\setminus B^k(X;G)$ with $\norm{f} \le \beta\eta/2$. If $f$ is not minimal, then there exists a minimal $f' \in Z^k(X;G)\setminus B^k(X;G)$ with $\norm{f'} < \norm{f} \le \beta\eta/2$. Since $f' \notin B^k(X;G)$ then $f' \ne 0$. Now, since $f'$ is locally minimal, by proposition~\ref{non-abelian-thin-in-dimension-0} $f'$ is $(\eta,\varepsilon,1/|G|)$-weakly-non-local and hence by corollary~\ref{balanced-coboundaries-vanish} $f' = 0$, in contradiction.
\end{proof}

Theorem~\ref{thm:cosystolic-expansion}, which we restate here in a formal way for general groups, follows immediately from the above two lemmas.
\begin{theorem}[Cosystolic expansion over any group]\label{thm:cosystolic-expansion-non-abelian}
	For any group $G$, $q \in \mathbb{N}$ and $0 < \beta < 1$ there exist $0 < \lambda,\eta <1$ such that the following holds: Let $X$ be a $3$-dimensional $q$-bounded degree $\lambda$-local spectral expander with $\beta$-coboundary expanding links over $G$. Then the $2$-skeleton of $X$ is an $(\varepsilon, \mu)$-cosystolic expander over $G$, where
	$$\varepsilon = \min\left\{\frac{\beta\eta}{2}, \frac{1}{2q}\right\} \quad\quad\mbox{and}\quad\quad
	\mu = \frac{\beta\eta}{2}.$$
\end{theorem}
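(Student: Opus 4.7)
The plan is to observe that cosystolic expansion is defined by two separate conditions—expansion of the system of equations, and a lower bound on the weight of cocycles which are not coboundaries—and that each has already been established in one of the two preceding lemmas. I would take the constants $\lambda_1,\eta_1$ promised by Lemma~\ref{lem:cosystolic-non-abelian-1} (depending on $G,q,\beta$) and the constants $\lambda_2,\eta_2$ promised by Lemma~\ref{lem:cosystolic-non-abelian-2} (depending on $G,\beta$), then set $\lambda := \min\{\lambda_1,\lambda_2\}$ and $\eta := \min\{\eta_1,\eta_2\}$, so that both lemmas apply simultaneously to the same complex $X$ (any $X$ satisfying the hypotheses for these smaller constants still satisfies the hypotheses for the larger ones).

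With these parameters in hand, the argument is essentially pattern-matching against the definition of $(\varepsilon,\mu)$-cosystolic expansion for the $2$-skeleton. For the first condition at $k=1$, Lemma~\ref{lem:cosystolic-non-abelian-1} immediately gives $\norm{\delta(f)} \ge \min\{\beta\eta/2,\,1/(2q)\}\cdot\dist(f,Z^1(X;G))$ for every $f \in C^1(X;G)\setminus Z^1(X;G)$, so I would set $\varepsilon := \min\{\beta\eta/2,\,1/(2q)\}$. For the second condition at $k=1$, Lemma~\ref{lem:cosystolic-non-abelian-2} gives $\norm{f}\ge\beta\eta/2$ for every $f\in Z^1(X;G)\setminus B^1(X;G)$, so I would set $\mu := \beta\eta/2$. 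These are precisely the quantities advertised in the theorem statement, and no additional work is needed at $k=1$.

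The main obstacle, which is really just careful bookkeeping rather than a new mathematical idea, is ensuring that a single pair $(\lambda,\eta)$ simultaneously meets the smallness constraints embedded inside both lemmas. In particular, Lemma~\ref{lem:cosystolic-non-abelian-1} routes through Theorem~\ref{balancing-non-abelian}, which in turn forces $\eta\le\varepsilon^3$, $\lambda\le\varepsilon^2$ and $\varepsilon\le 1/(81|G|)$; Lemma~\ref{lem:cosystolic-non-abelian-2} routes through Proposition~\ref{non-abelian-thin-in-dimension-0} with its own $O(\beta^2\eta^2\varepsilon)$ cap on $\lambda$. Taking coordinate-wise minima over the two recipes is enough to honor all constraints at once, and since the quantities $\beta\eta/2$ and $1/(2q)$ delivered by the two lemmas match exactly, the parameters $\varepsilon$ and $\mu$ in the statement can be read off directly with no further loss.
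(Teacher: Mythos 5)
Your proof is correct and takes essentially the same route as the paper, which itself dispatches the theorem with the single line ``Immediate from lemmas~\ref{lem:cosystolic-non-abelian-1} and~\ref{lem:cosystolic-non-abelian-2}.'' The only thing you add is the explicit (and correct) observation that one takes coordinate-wise minima of the constants promised by the two lemmas so both hold simultaneously, which the paper leaves implicit.
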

\begin{proof}
	Immediate from lemmas~\ref{lem:cosystolic-non-abelian-1} and~\ref{lem:cosystolic-non-abelian-2}.
\end{proof}

\bibliography{Bibliography}

\end{document}